\documentclass[numsec,webpdf,modern,medium,namedate]{oup-authoring-template}

\onecolumn

\graphicspath{{Fig/}}

\theoremstyle{thmstyleone}%
\newtheorem{theorem}{Theorem}
\theoremstyle{thmstyletwo}%
\newtheorem{remark}{Remark}%
\theoremstyle{thmstylethree}%

\usepackage[doublespacing]{setspace}
\usepackage[fontsize=12pt]{fontsize}
\usepackage{natbib,geometry,graphicx,epsfig,pdfpages,longtable,url,amsmath,bbm,amssymb,amsthm, afterpage, stmaryrd}
\usepackage{algorithm, dsfont, amsmath}
\usepackage{algpseudocode}
\usepackage{hyperref}
\usepackage{pdfpages}
\AddToHookNext{shipout/before}{\DiscardShipoutBox}

\begin{document}

\journaltitle{Journals of the Royal Statistical Society}
\DOI{DOI HERE}
\copyrightyear{XXXX}
\pubyear{XXXX}
\access{Advance Access Publication Date: Day Month Year}
\appnotes{Original article}

\firstpage{1}

\title[Simulation-consistent Estimation of the Marginal Likelihood for Block Models]{Simulation-consistent Estimation of the Marginal Likelihood for Block Models}

\author[1,$\ast$]{Martin Metodiev\ORCID{0009-0000-9432-3756}}
\author[2]{Marie Perrot-Dockès}
\author[3]{Guilhem Fouetillou}
\author[1,4]{Pierre Latouche}
\author[5]{Adrian E. Raftery}

\authormark{Metodiev et al.}

\address[1]{\orgdiv{Laboratoire de Mathématiques Blaise Pascal}, \orgname{CNRS, Université Clermont Auvergne}, \orgaddress{\state{Aubière}, \country{France}}}
\address[2]{\orgdiv{MAP5, CNRS}, \orgname{Université Paris Cité}, \orgaddress{\postcode{F-75006}, \state{Paris}, \country{France}}}
\address[3]{\orgdiv{Sciences Po}, \orgname{Médialab}, \orgaddress{\state{Paris}, \country{France}}}
\address[4]{\orgname{Institut Universitaire de France}, \orgaddress{\country{France}}}
\address[5]{\orgdiv{Departments of Statistics and Sociology}, \orgname{University of Washington}, \orgaddress{\state{Seattle}, \country{US}}}

\corresp[$\ast$]{Martin Metodiev, CNRS, Université Clermont Auvergne, Aubière, 63178, France. \href{Email:martin.metodiev@doctorant.uca.fr}{martin.metodiev@doctorant.uca.fr}}

\received{Date}{0}{Year}
\revised{Date}{0}{Year}
\accepted{Date}{0}{Year}

\abstract{We propose a methodology for computing marginal likelihoods for block models. The proposed estimator computes the marginal likelihood from Markov chain Monte Carlo (MCMC) samples and is simulation-consistent, even when the size of the dataset is fixed. Moreover, it is asymptotically normal, of finite variance, invariant to label switching and can be computed efficiently, even for models with an arbitrarily large number of components. We evaluate the method through simulation studies in settings where the true marginal likelihood is available analytically. Finally, we apply the approach to a social network dataset based on the 2023 United Nations Climate Change Conference (COP28) and discuss the resulting insights.}
\keywords{Bayesian statistics, harmonic mean estimator, label switching, marginal likelihood estimation, stochastic block models}
\pagenumbering{gobble} 

\maketitle

\section{Introduction}\label{sec: Introduction}
\pagenumbering{arabic}  
\setcounter{page}{2} 

Networks, which are common across scientific disciplines, are often analysed by looking for hidden structures or connectivity patterns. The stochastic block model \citep[SBM,][]{Ho_et_al83-SBM_introduction,wang_stochastic_1987,nowicki_estimation_2001} is a random graph model for node clustering. It can handle heterogeneous structures such as communities or star patterns \citep{daudin_mixture_2008}. However, classical methods, such as likelihood-based methods, cannot be applied for inference in the SBM. Indeed, it is generally not feasible to compute the likelihood function of the SBM, since it involves a sum over all possible cluster configurations, which grows exponentially with the size of the network.

One major unsolved problem for the SBM is the estimation of the number of clusters. Many methods have been proposed \citep[see][for reviews]{Ab18-SBMreview,lee_review_2019,Pe19-SBM_review}, but it is unclear which of these should be used. An optimal choice from a Bayesian point of view is the most likely model a posteriori, namely the model that maximises the product of the marginal likelihood and the prior model probability. This minimises the model selection error rate. In this paper, we show how to estimate the marginal likelihood. To the best of our knowledge, this paper is the first to present an estimator that converges to the marginal likelihood with an increasing number of simulations, even when the size of the network is small.

 More precisely, if $Z(G)$ denotes the marginal likelihood and a prior distribution $\Pi(G)$ is given for the number of clusters, the value of $G$ that maximises the product $Z(G)\Pi(G)$ minimises the model selection error rate on average over $\Pi$ \citep{jeffreys_theory_1961}. Indeed, it has been shown that, under a specific prior on the number of clusters, it leads to a strongly consistent selection of this number. This is true in both dense and sparse regimes and even if the number of clusters grows as the size of the network goes to infinity \citep{cerqueira2020estimation}. However, the marginal likelihood is itself an integral and a sum over the likelihood function times the prior distribution, and hence not tractable computationally, necessitating estimation.

The only marginal likelihood estimators that have been used to estimate the number of clusters of the SBM that we are aware of are based on asymptotic or variational approximations \citep[see, e.g.,][]{hofman_bayesian_2008,latouche_variational_2012,La_et_al14-BayesOverlappingSBM,aicher_learning_2015,yan_bayesian_2016,peixoto_nonparametric_2017}. Even though such assumptions are reasonable in the limiting case in which the size of the network grows \citep{celisse_consistency_2012, mariadassou_convergence_2015}, there are no consistency results about variational approximations that apply to finite data sizes, to the best of our knowledge. One notion of consistency that is important for marginal likelihood estimators is the notion that they should be simulation-consistent, that is, that they should be able to compute the marginal likelihood up to an arbitrary degree of precision via an increasing number of simulations. While we know of no case of simulation-consistent marginal likelihood estimators being used to estimate the number of clusters of the SBM, there is a large literature on such estimators for other models \citep[see][for reviews]{FrWy12-review_marglikestims,llorente_marginal_2023,Li_et_al25-marglikestimreview}. Usually, these estimators are based on Markov chain Monte Carlo (MCMC) samples from the posterior distribution, which can in principle also be obtained for the SBM via Gibbs sampling \citep{nowicki_estimation_2001}. However, most of these estimation techniques also require the evaluation of the likelihood function, and thus cannot be used for the SBM.

This paper introduces the truncated harmonic mean estimator (THAMES) for the SBM. The THAMES is a simulation-consistent estimator of the marginal likelihood that does not require the evaluation of the likelihood function. In addition, it allows the computation of approximate confidence intervals for the marginal likelihood, since it is asymptotically normal and of finite variance, as the amount of simulations increases. Finally, the THAMES is symmetric, which makes it invariant to the problem of label switching, a central problem within the Bayesian MCMC estimation of SBM parameters, which is explained in the next section of this article. A R package implementing the THAMES is available \citep{thamesblock_rpackage}.

The THAMES can also be applied to the conceptually very similar problem of model-based co-clustering, where the latent block model \citep[LBM,][]{govaert_clustering_2003} is the model of choice. Finding the right number of clusters is also a particularly challenging issue in this setting \citep[see][for a survey]{biernacki_survey_2023}. While this article focuses on the SBM, the construction of the THAMES is analogous for the LBM. A definition of the THAMES for the LBM is given in the supplementary material.

An example where estimating the marginal likelihood is required in a large but sparse network is given by a social network dataset related to the 2023 United Nations Climate Change Conference (COP28). This consists of over 4 million posts on the social network X, with over a million nodes and two million edges that were collected for this article. After reducing the size of the network by preprocessing, the THAMES is used to determine and interpret the number of clusters within this network. This analysis of the COP28 dataset is of practical relevance because it provides insight into different clusters of users formed around a major international climate event on a widely used social media platform. Thus, it can help identify patterns of online engagement and the structure of different groups in the diffusion of climate-related information.

The article is structured as follows. First, we introduce the SBM (Section \ref{sec: Marginal likelihood estimation for the stochastic block model}), and then define the THAMES marginal likelihood estimator (Section \ref{sec: Truncated harmonic mean estimation for the stochastic block model}). The THAMES is compared to state-of-the-art estimators in simulation settings where the exact marginal likelihood is available analytically (Section \ref{sec: Experiments}). It is then applied to the COP28 dataset (Section \ref{sec: The COP28 dataset}). A discussion of the results is in Section \ref{sec: discussion}.

\section{Marginal likelihood estimation for the stochastic block model}\label{sec: Marginal likelihood estimation for the stochastic block model}

\paragraph{\textbf{The stochastic block model}}

Let $Y$ denote the dataset, which is a $n\times n$ binary adjacency matrix. The SBM is defined in two steps: first, $n$ unknown allocation vectors $C_1,\dots,C_n$ are drawn according to a multinomial distribution with parameter $\tau$, \begin{align*}C_i|G,\tau\stackrel{\text{i.i.d}}\sim\text{Multinom}_G(1,\tau),\quad\sum_{g=1}^GC_{i,g}=1,\quad\forall i=1,\dots,n.
\end{align*} The allocation vector matrix $C=(C_1,\dots,C_n)$ is a $n\times G$ matrix, where $G\leq n$ refers to the number of clusters. Then, each entry $Y_{i,j}$ of $Y$ is drawn independently according to a Bernoulli distribution with unknown parameter chosen from the $G\times G$ matrix of connection probabilities $\mu$,\footnote{This defines the adjacency matrix of a binary, directed graph with no self-loops. The same methodology can be applied to the undirected version of the SBM, with minor changes.}\begin{align*}Y_{i,j}|G,\mu,C_{i,g}C_{j,k}=1\sim\text{Ber}(\mu_{g,k}),\quad \forall(i,j,g,k),i\neq j.
\end{align*}

Gibbs sampling is possible if the following prior distribution is used: \begin{align}
&\tau|G\sim\text{Dirichlet}(G,n_1^0,\dots,n_G^0),\quad\tau\in\mathbb{R}^G,\label{eq: dirichlet_prior}\\
&\mu_{g,k}|G\sim\text{Beta}(a^0_{g,k},b^0_{g,k}),\quad \mu_{g,k}\in(0,1),\quad\forall (g,k).\label{eq: beta_prior}
\end{align} Gibbs sampling \citep{nowicki_estimation_2001} can then be performed by initialising a value of $C$ (we do this by applying a $k-$means algorithm on the rows of $Y$) and then alternating between one draw from $(\mu,\tau)|C,Y$ and $n$ draws from $C_i|Y,\tau,\mu,C_{1},\dots,C_{i-1},C_{i+1},\dots,C_n$. After a burn-in, the resulting MCMC sample $(\tau^{(1)},\mu^{(1)},C^{(1)}),\dots,(\tau^{(T)},\mu^{(T)},C^{(T)})$ of size $T$ approximately follows the posterior distribution of $(\tau,\mu,C)$. While Gibbs sampling can in principle be performed for any value of the hyperparameters $(a^0,b^0,n^0)$, we consider only values for which the prior distribution is symmetric, in the sense that the distribution of $C$ is not affected by the labels of the parameters. This leads to the problem of label switching that needs to be properly addressed.

\paragraph{\textbf{Label switching}}

Label switching refers to a phenomenon occurring in Gibbs sampling in which parameters simulated from the posterior distribution randomly switch their labels. It results from the fact that different sets of labels lead to different parameters, but do not change the distribution of $Y$ given these parameters. One main difficulty arising from label switching is the fact that ergodic estimators, such as the arithmetic mean of the MCMC sample, can exhibit high variance when applied to a sample with switched labels due to the inherent multimodality of such a sample. In this article, this problem is bypassed by a relabelling algorithm.

A relabelled sample $C^{(1)\star},\dots,C^{(T)\star}$, whose labels are not switched, can be obtained by applying the ECR algorithm \citep{PaIl10-ECR_algo} to $C^{(1)},\dots,C^{(T)}$, which relabels $C^{(1)},\dots,C^{(T)}$ such that the distribution of $C^{(1)\star},\dots,C^{(T)\star}$ is approximately equal to the distribution of a non-symmetric distribution. While originally introduced for mixture models, it is also applicable to the SBM. Indeed, there is a variety of relabelling algorithms for mixture models \citep[see, e.g.,][]{stephens_bayesian_1997,celeux_computational_2000,Ma_et_al05-allocationRelabelling,nobile_bayesian_2007,RoWa14-relabellingAlgo} that could be used for the SBM. We selected this algorithm because it is available in a R package and because it acts only on $C^{(1)},\dots,C^{(T)}$, not $\tau^{(1)},\dots,\tau^{(T)}$ or $\mu^{(1)},\dots,\mu^{(T)}$. While the latter samples are available as a byproduct of Gibbs sampling, they are not necessary to estimate the marginal likelihood, since they can be integrated out, resulting in the collapsed prior and likelihood function.

\paragraph{\textbf{Collapsed distributions of the SBM}}

The collapsed distribution of $C$ and $Y$ is defined via the collapsed prior $p(C|G)$ and the collapsed likelihood function $p(Y|G,C)$, for which the following expressions can be derived \citep[see, e.g.,][]{come_model_2015}: \begin{align*}
        p(C|G)&=\int p(C|G,\tau)p(\tau)\; d\tau=\frac{\Gamma(\sum^G_{g=1}n_{g}^0)\prod^G_{g=1}\Gamma(n_{g})}{\Gamma(\sum^G_{g=1}n_{g})\prod^G_{g=1}\Gamma(n_{g}^0)},\\p(Y|G,C)&=\int p(Y|G,\mu,C)p(\mu)d\mu=\prod_{k,g}^G\frac{\Gamma(a^0_{g,k}+b^0_{g,k})\Gamma(a_{g,k})\Gamma(b_{g,k})}{\Gamma(a_{g,k}+b_{g,k})\Gamma(a^0_{g,k})\Gamma(b^0_{g,k})},
    \end{align*} where $n_g=n_g^0+\sum^n_{i=1}C_{i,g}$, $a_{g,k}=a^0_{g,k}+\sum^n_{i=1}\sum_{j\neq i}C_{i,g}C_{j,k}Y_{i,j}$, as well as $ b_{g,k}=b^0_{g,k}+\sum^n_{i=1}\sum_{j\neq i}C_{i,g}C_{j,k}(1-Y_{i,j})$ and $\Gamma$ is the gamma function. The simple analytic form of this expression allows us to easily evaluate these functions on the MCMC sample $C^{(1)},\dots,C^{(T)}$, which is the basis of our estimation procedure. 
    
    The purpose of this article is the estimation of\begin{align}\label{eq: marglik_definition}
        Z(G)=\sum_{C}p(C|G)p(Y|G,C),
    \end{align}the marginal likelihood of a stochastic block model with $G$ clusters. We now  survey previous work which could be applied in this context.

\paragraph{\textbf{Previous work}}

While we are not aware of any simulation-consistent marginal likelihood estimators that were used to estimate the number of clusters of the SBM in the literature, we did find marginal likelihood estimators for other, similar purposes, that could be used for the SBM. All of these estimators can be described in the framework of reciprocal importance sampling \citep[RIS,][]{GelfandDey1994}.

For any density $h$, a RIS estimator is defined as \begin{align}\label{eq: RIS_SBM}
    \hat{Z}^{-1}_{\text{RIS}}(G)=\frac{1}{T}\sum^T_{t=1}\frac{h(C^{(t)})}{p(Y|G,C^{(t)})p(C^{(t)}|G)},
\end{align} where $C^{(1)},\dots,C^{(T)}$ are the MCMC draws from the posterior. 

RIS estimators are simulation-consistent and asymptotically normal under standard conditions, due to the following theorem.\footnote{The proof of this and any other theorem in this manuscript is given in the supplementary material.}

\begin{theorem}\label{ref: RIS_asymnormal}
Suppose that $h$ is a function such that $\sum_Ch(C)=1.$ If $C^{(1)},\dots,C^{(T)}$ is an ergodic Markov chain with limiting distribution $p(C|G,Y)$, the posterior distribution of $C$ given $Y$ and $G$, then $\hat{Z}^{-1}_{\text{RIS}}(G)$ almost surely converges to $Z^{-1}(G)$. If $C^{(1)},\dots,C^{(T)}$ is additionally ergodic of degree 2, then $\sqrt{T}(\hat{Z}^{-1}_{\textup{RIS}}(G)-Z^{-1}(G))$ converges in distribution to a centred normal distribution.
\end{theorem}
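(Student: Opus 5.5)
The plan is to write $\hat{Z}^{-1}_{\text{RIS}}(G)$ as an ergodic average $T^{-1}\sum_{t=1}^T f(C^{(t)})$ with
\begin{align*}
f(C)=\frac{h(C)}{p(Y|G,C)p(C|G)},
\end{align*}
and then apply the ergodic theorem and the Markov chain central limit theorem. The state space of $C$ is finite: it consists of $n\times G$ allocation matrices. This makes all integrability questions trivial, provided $f$ is well defined on the support of the posterior.

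The first step is to check that $f$ is finite everywhere. The collapsed prior $p(C|G)$ and the collapsed likelihood $p(Y|G,C)$ are ratios of gamma functions evaluated at strictly positive arguments, because $n_g^0$, $a^0_{g,k}$ and $b^0_{g,k}$ are all strictly positive. Both are therefore strictly positive for every $C$. Hence $f$ is a real-valued function on a finite set, so it is bounded.

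The second step computes the posterior expectation of $f$. Using $p(C|G,Y)=p(Y|G,C)p(C|G)/Z(G)$, we get
\begin{align*}
\mathbb{E}_{p(C|G,Y)}[f(C)]=\sum_C \frac{h(C)}{p(Y|G,C)p(C|G)}\cdot\frac{p(Y|G,C)p(C|G)}{Z(G)}=\frac{1}{Z(G)}\sum_C h(C)=Z^{-1}(G).
\end{align*}
The final equality uses the hypothesis $\sum_C h(C)=1$. The posterior has full support, since the joint is positive everywhere, so no mass of $h$ is lost.

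The third step is consistency. For an ergodic (Harris recurrent, aperiodic) chain with limiting distribution $p(C|G,Y)$, the strong law of large numbers for Markov chains gives $T^{-1}\sum_t f(C^{(t)})\to\mathbb{E}[f]$ almost surely for any integrable $f$. This holds from any initial distribution. Since $f$ is bounded, this yields almost sure convergence to $Z^{-1}(G)$.

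The fourth step is asymptotic normality. Ergodicity of degree 2 (the notion in Nummelin / Tierney 1994) combined with a function $f$ that is bounded, and hence square-integrable, gives the Markov chain CLT. Concretely, $\sqrt{T}\bigl(T^{-1}\sum_t f(C^{(t)})-\mathbb{E}[f]\bigr)$ converges in distribution to $N(0,\sigma^2_f)$, where
\begin{align*}
\sigma_f^2=\mathrm{Var}(f)+2\sum_{k\ge1}\mathrm{Cov}(f(C^{(0)}),f(C^{(k)}))
\end{align*}
is finite.

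The only real subtlety, which is a bookkeeping matter rather than a hard obstacle, is matching the chain's hypotheses to the exact form of the CLT being invoked. On a finite state space, an ergodic chain is automatically uniformly ergodic, so the CLT holds for all functions and the degree-2 condition is more than enough. I would cite Tierney (1994, Theorems 3 and 4) for both the SLLN and the CLT. If $h$ is allowed to be zero on some states, nothing changes.
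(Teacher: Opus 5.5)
Your proposal is correct and matches the paper's proof: both cite Tierney (1994, Theorems 3 and 4), after checking that $f(C)=h(C)/\{p(Y|G,C)p(C|G)\}$ is bounded on the finite state space and that its posterior mean is $Z^{-1}(G)$ by Bayes' rule and $\sum_C h(C)=1$. Your check that $p(Y|G,C)p(C|G)>0$ for every $C$ (because the gamma arguments are strictly positive) goes slightly beyond the paper, which leaves this implicit; it ensures that $f$ is finite and that the posterior has full support, so the mass of $h$ is not lost in the expectation.
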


However, RIS estimators are ergodic estimators, and hence susceptible to the problem of label switching. This problem disappears if these estimators are symmetric (i.e., invariant to changes of the labels of $C^{(1)},\dots,C^{(T)}$), since the value of these estimators does not change when label switching occurs.

One of the simplest symmetric estimators that can be derived from Equation \eqref{eq: RIS_SBM} is the harmonic mean estimator \citep{NewtonRaftery1994}, \begin{align}\label{eq: harmonicmean_SBM}
    \hat{Z}^{-1}_{\text{HME}}(G)=\frac{1}{T}\sum^T_{t=1}\frac{1}{p(Y|G,C^{(t)})}.
\end{align} It results from Equation \eqref{eq: RIS_SBM} when choosing $h(C)=p(C|G)$. A version of this estimator was used by \citet{Le_et_al22-harmonicmeanSBM} to estimate the marginal likelihood in the infinite relational model. However, while Equation \eqref{eq: harmonicmean_SBM} is simple and easy to compute, the harmonic mean estimator can be very unstable, as noted by the authors \citep{NewtonRaftery1994}. It is shown in the simulation section of this article that this lack of stability also occurs in the case of the SBM.

A more stable symmetric estimator can be obtained by focusing on a maximum a posteriori (MAP) estimator $\hat{C}_{\text{MAP}}$, which is an allocation vector matrix within $\{C^{(1)},\dots,C^{(T)}\}$ which maximises the unnormalised posterior probability $p(\hat{C}_{\text{MAP}}|G)p(Y|G,\hat{C}_{\text{MAP}}).$ Let $A_{\text{MAP}}$ denote the set that contains $\hat{C}_{\text{MAP}}$ and all of its label switched versions. Choosing $h$ to be the uniform distribution on $A_{\text{MAP}}$ results in the ChibPartition estimator \begin{align}\label{eq: ChibPartition}
    \hat{Z}^{-1}_{\text{ChibPartition}}(G)=\frac{1}{T}\sum^T_{\substack{t=1,\\C^{(t)}\in A_{\text{MAP}}}}\frac{1/G!}{p(C^{(t)}|G)p(Y|G,C^{(t)})}
\end{align} as suggested by \citet{hairault_evidence_2022}\footnote{The expression that we give for this estimator is equivalent to the one given by \citet{hairault_evidence_2022}, as shown in the supplementary material.} in the case of mixture models, who also showed how their estimator can be computed in time $O(T)$. A version of this identity was also used by \citet{mcdaid_clustering_2012} in the case of the collapsed stochastic block model in which $G$ was not fixed, but random and sampled via a MCMC algorithm.

While extremely stable when the posterior probability of $A_{\text{MAP}}$ is close to one, the performance of the ChibPartition estimator deteriorates when this probability approaches zero. In this case, most terms in the above sum are 0, and the remaining terms are very large, since they are proportional to the reciprocal of the posterior probability of $A_{\text{MAP}}$, thus resulting in a high variance. In the COP28 dataset we considered, the sample from the posterior distribution had \textbf{no repetitions}, thus implying an extremely small posterior probability of $A_{\text{MAP}}$. A choice of $h$ that is more adapted to the posterior probabilities of the allocation vectors can be constructed by a variational approximation.

\citet{hajargasht_accurate_2020} suggested setting $h$ equal to the approximate posterior distribution of the parameters, derived in a variational setting which is similar to the setting of the SBM. There does exist a variational approximation of the posterior distribution of $C$, which is given by a multinomial distribution \citep{latouche_variational_2012}. Thus, one possible choice would be \begin{align}\label{eq: variational_h}
    \tilde{h}(C)=\prod^{n}_{i=1}\text{Multinom}_G(C_i;1,\hat{z}_i),
\end{align}where the posterior distribution $\hat{z}_i$ of $C_i$ is estimated from the MCMC sample. This can be done by splitting the MCMC sample, such that $\hat{z}_{i}$ is computed via an estimator that acts on the first half of the relabelled sample. This estimator, derived from Laplace's succession rule \citep{robert2007bayesian}, is given by\begin{align}\label{eq: variational_z}
    \hat{z}_{i,g}=\frac{1+\sum_{t=1}^{T/2}C_{i,g}^{(t)\star}}{2+T/2},
\end{align}where $C^{(1)\star},\dots,C^{(T)\star}$ denote the relabelled MCMC draws, while the marginal likelihood estimator is computed on the second half. However, this estimator is subject to label switching, since $\tilde{h}$ is not symmetric. Symmetry can be restored by averaging over all possible permutations $P_1,\dots,P_{G!}$ of the labels of $C^{(1)},\dots,C^{(T)}$, as suggested by \citet{Be_et_al03-quick_marglikestim_01} in the case of mixture models. This follows from the fact that changing the order of a sum does not change its value, and hence switching the labels of a permutation-averaged estimator does not change its value either. For example, adding one estimator computed on the label (1,2) to the same estimator computed on the label (2,1) is the same as adding this estimator computed on (2,1) to itself computed on (1,2). Mathematically, the permutations can be expressed as\begin{align*}
    P_o(C)_{i,g}=\begin{cases}1&C_{i,S_{o,g}}=1,\\0&\text{else,}\end{cases}\quad\forall o=1,\dots,G!,\forall i=1,\dots,n,\forall g=1,\dots,G,
\end{align*} with $S=(S_1,\dots,S_{G!})$ denoting a matrix of all different orderings of the vector $(1,\dots,G)$. Restoring symmetry by averaging results in the variational Bayes reciprocal importance sampling (RISVB) estimator, \begin{align}\label{eq: RISVB_estimator}
    \hat{Z}^{-1}_{\text{RISVB}}(G)=\frac{1}{G!}\sum^{G!}_{o=1}\frac{1}{T/2}\sum^{T}_{t=T/2+1}\frac{\tilde{h}(P_o(C^{(t)}))}{p(C^{(t)}|G)p(Y|G,C^{(t)})}.
\end{align} This estimator has the added benefit of using the variational approximation of the SBM, which is known to be quite accurate, while still being simulation-consistent even when this variational approximation does not hold exactly. However, the RISVB estimator is computationally costly, as the number of possible permutations grows superexponentially with the number of clusters $G$. Even a medium number of clusters such as $G=15$ leads to $15$ factorial (over 1 trillion) permutations. Hence, this estimator is unsuitable for the COP28 dataset, where the number of clusters may well surpass 15. 

An estimator that computes a sum over all permutation functions efficiently even when the number of clusters is large  was presented in \citeauthor{metodiev2025mixturemodels} (2025a) for mixture models. The central idea of this estimator was to use results based on the normal approximation to the posterior, presented in \citeauthor{metodiev_easily_2025} (2025b). However, this estimator is not appropriate for the SBM because it requires the evaluation of the untractable likelihood function and because the normal approximation of the posterior is not applicable to the posterior distribution of $C$ due to its discrete nature. Therefore, the work of \citeauthor{metodiev2025mixturemodels} (2025a) is of no interest in the SBM context.

\section{Truncated harmonic mean estimation for the stochastic block model}\label{sec: Truncated harmonic mean estimation for the stochastic block model}
    
The solution found and presented in this article consists in combining the uniform density of the ChibPartition estimator with the variational approximation used to construct the RISVB estimator to obtain the truncated harmonic mean estimator (THAMES) for the SBM: \begin{align*}
    \hat{Z}_{\text{THAMES}}^{-1}(G)=\frac{1}{G!}\sum^{G!}_{o=1}\frac{1}{T/2}\sum^T_{\substack{t=T/2+1\\P_o(C^{(t)})\in A}}\frac{1/|A|}{p(C^{(t)}|G)p(Y|G,C^{(t)})}.
\end{align*} Here, $A$ denotes a finite set and $|A|$ denotes its size. 

The THAMES is \begin{itemize}
    \item simulation consistent and asymptotically normal under standard conditions (Theorem \ref{ref: RIS_asymnormal}),
    
    \item symmetric and thus unaffected by label switching,
    
    \item and efficiently computed, using techniques presented later.
\end{itemize}

The truncation set $A$ that defines the THAMES is chosen via the variational approximation used in the definition of the RISVB. In the remainder of this section, it is first shown how to compute such an optimal $A$. Then, it is explained how the shape of $A$ allows for efficient computation of the THAMES. Finally, the special case of empty clusters is addressed.

\subsection{Choosing $A$}\label{ssec: Choosing $A$}

Due to the following theorem, out of all symmetric choices of $A$, the variance of the THAMES is minimised by a highest-posterior-density (HPD) region $H_{\alpha}=\{C:p(Y|G,C)p(C|G)>q_\alpha\}$ under fairly mild conditions, where $q_\alpha\in\mathbb{R}$ is the largest constant for which $H_\alpha$ has a posterior probability of at least $\alpha$.

\begin{theorem}\label{thm: optimal_hpd_region}
Let $C^{(T/2+1)},\dots,C^{(T)}\sim p(C|G,Y)$ be an independent sample from the posterior distribution. If no two allocation vector matrices have the same probability (up to permutations), meaning that for any two allocation vector matrices $C^1,C^2$, \begin{align*}
    p(C^1|G,Y)=p(C^2|G,Y)\Rightarrow\exists o\in\{1,\dots,G!\}:P_o(C^1)=C^2,
\end{align*} then there exists an $\alpha\in(0,1]$ such that the variance of $\hat{Z}_{\textup{THAMES}}^{-1}(G)$, conditional on $Y$, is minimised over all symmetric sets by $H_\alpha$. A set $A$ is symmetric if, and only if the image $P_o(A)=A$ for all $o\in\{1,\dots,G!\}$.
\end{theorem}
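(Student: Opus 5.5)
The plan is to reduce the variance of $\hat{Z}_{\text{THAMES}}^{-1}(G)$ to an explicit function of $A$, and then show by an exchange argument that a minimiser must consist of the most probable allocation matrices.

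\textbf{Reduction to a single sum.} Write $\pi(C)=p(C|G,Y)$ and $\tilde p(C)=p(C|G)p(Y|G,C)=Z(G)\pi(C)$. The prior is symmetric, so $\tilde p$ and $\pi$ are invariant under every $P_o$. If $A$ is symmetric, then $P_o(C)\in A$ if and only if $C\in A$. Hence the average over the $G!$ permutations collapses, and
\begin{align*}
\hat{Z}_{\text{THAMES}}^{-1}(G)=\frac{1}{T/2}\sum_{t=T/2+1}^{T}X_t,\qquad X_t=\frac{\mathbb{1}\{C^{(t)}\in A\}}{|A|\,Z(G)\,\pi(C^{(t)})}.
\end{align*}
The draws are independent, so the variance conditional on $Y$ is $\frac{2}{T}\operatorname{Var}(X_t)$. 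A direct computation gives $E[X_t]=Z^{-1}(G)$, which is consistent with Theorem~\ref{ref: RIS_asymnormal}, and
\begin{align*}
E[X_t^2]=\frac{1}{|A|^2Z^2(G)}\sum_{C\in A}\frac{1}{\pi(C)}.
\end{align*}
Minimising the variance over nonempty symmetric $A$ is therefore equivalent to minimising
\begin{align*}
f(A)=\frac{1}{|A|^2}\sum_{C\in A}\frac{1}{\pi(C)}
\end{align*}
over nonempty unions of permutation orbits $O(C)=\{P_o(C):o=1,\dots,G!\}$. Allocations with $\pi(C)=0$ must be excluded from $A$, since they make $f$ infinite. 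The space of allocation matrices is finite, so a minimiser $A^\ast$ exists.

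\textbf{Exchange argument.} I will show that $A^\ast$ is an upper level set of $\pi$. Suppose $C_1\in A^\ast$, $C_2\notin A^\ast$ and $\pi(C_2)>\pi(C_1)$. First assume the orbits $O_1=O(C_1)$ and $O_2=O(C_2)$ have the same size $m$, which holds for instance when neither matrix has an empty cluster, since then $m=G!$. Set $A'=(A^\ast\setminus O_1)\cup O_2$. Then $|A'|=|A^\ast|$, and the sum in $f$ changes by $m\bigl(1/\pi(C_2)-1/\pi(C_1)\bigr)<0$. This contradicts optimality.

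Therefore every element of $A^\ast$ has posterior probability at least that of every element outside it. By the no-ties hypothesis, equal probabilities occur only within a single orbit, so $A^\ast=\{C:\pi(C)>q\}$ for some threshold $q$; equivalently $A^\ast=\{C:\tilde p(C)>q'\}$ for some $q'$. Now set $\alpha=\sum_{C\in A^\ast}\pi(C)\in(0,1]$. The largest constant $q_\alpha$ for which $H_\alpha$ has posterior mass at least $\alpha$ is exactly the threshold just below the smallest value of $\tilde p$ on $A^\ast$. Using no-ties once more, $H_\alpha=A^\ast$.

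\textbf{Main obstacle.} The hard step is the exchange when the orbits have different sizes $m_1\neq m_2$, which happens when some $C$ has empty clusters. Then the swap changes $|A|$ and the monotonicity is no longer immediate. My first attempt is to compare $f(A^\ast)$ with both $f(A^\ast\cup O_2)$ and $f(A^\ast\setminus O_1)$. Optimality of $A^\ast$ gives
\begin{align*}
\frac{m_2}{\pi(C_2)}\ge S\,\frac{(n+m_2)^2-n^2}{n^2},\qquad
\frac{m_1}{\pi(C_1)}\le S\,\frac{n^2-(n-m_1)^2}{n^2},
\end{align*}
with $S=\sum_{A^\ast}1/\pi$ and $n=|A^\ast|$. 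I would then try to combine these two inequalities with $\pi(C_2)>\pi(C_1)$ to reach a contradiction. Failing that, I would note that the per-element cost $1/\pi$ is what matters: relax to a fractional problem where the optimum is clearly a level set, and argue that the integral optimum coincides with it. As a fallback, I would restrict the statement to allocations without empty clusters, deferring the empty-cluster case to the later subsection.
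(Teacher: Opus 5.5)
Your reduction to minimising $f(A)=|A|^{-2}\sum_{C\in A}1/\pi(C)$ over nonempty unions of orbits is correct, and so is your equal-size orbit swap. As written, however, the proof stops at the case you flag, and that case is not marginal. For $G\geq 3$, every allocation with $G_+\leq G-2$ occupied clusters has positive posterior probability under the Dirichlet--Beta prior. Its orbit has size $G!/(G-G_+)!<G!$. So restricting to allocations without empty clusters would prove a weaker statement than the theorem.

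You do not need the fallbacks, because your first attempt closes the case at once. With your $S$ and $n=|A^\ast|$, divide your two optimality inequalities by $m_2$ and $m_1$:
\begin{align*}
\frac{1}{\pi(C_2)}\;\geq\;\frac{S}{n^2}\,(2n+m_2)\;>\;\frac{S}{n^2}\,(2n-m_1)\;\geq\;\frac{1}{\pi(C_1)},
\end{align*}
so $\pi(C_2)<\pi(C_1)$. When $A^\ast=O_1$ the removal competitor is empty. In that case $S=m_1/\pi(C_1)$ and $n=m_1$, so the last inequality holds with equality and the chain still works.

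The chain uses only the optimality of $A^\ast$, so it handles all orbit sizes at once and makes the equal-size swap unnecessary. It shows directly that $\pi(C_2)<\pi(C_1)$ for every $C_1\in A^\ast$ and every $C_2\notin A^\ast$. Hence $A^\ast$ is a strict upper level set of $\pi$, and it equals $H_\alpha$ with $\alpha=\sum_{C\in A^\ast}\pi(C)$. This conclusion does not even use the no-ties hypothesis.

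The paper argues globally rather than at a minimiser. For an arbitrary symmetric $A$, it takes the $|A|$ most probable allocations as $H_\alpha$. It then shows $f(A)\geq f(H_\alpha)$ by trading $A\setminus H_\alpha$ against $H_\alpha\setminus A$, which have equal cardinality, across the threshold $q_\alpha$. This is your equal-size swap performed all at once, and it has the same blind spot.

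The $|A|$ most probable allocations form a union of orbits only if $|A|$ is a cumulative orbit size in the probability ordering. Only then is the set symmetric, of the form $\{C:\tilde p(C)>q\}$, and covered by the variance formula. This is automatic when all orbits have size $G!$, but not in general. For example, take $G=3$ and three orbits of sizes $6,3,6$ in decreasing probability. If $A$ is the union of the first and third, then $|A|=12$, and the twelve most probable allocations split the third orbit.

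The paper does not address this. Your add/remove comparison, once completed as above, is therefore the more robust argument. The paper's version yields the stronger-looking statement that each symmetric $A$ is dominated by an HPD set of the same size, but only in the equal-size case.
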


\begin{remark}
The main conditions on the distribution of $C^{(T/2+1)},\dots,C^{(T)}$ are the condition that $C^{(T/2+1)},\dots,C^{(T)}$ independently follow the posterior distribution and that there are no ties, in the sense that no two values have the same probability (up to a permutation of the labels, which do not change this probability). This first condition is usually not satisfied exactly, though it may be satisfied approximately when a large burn-in is used and when the sample from the posterior is heavily thinned. The second condition depends on the dataset, and we conjecture that it is usually satisfied for sufficiently large and/or complex datasets. It should, however, be emphasised that these two conditions are not needed to prove simulation-consistency and asymptotic normality of the THAMES, which follows from Theorem \ref{ref: RIS_asymnormal}.
\end{remark}

One can check if $C^{(t)}\in H_\alpha$ by computing $p(Y|G,C^{(t)})p(C^{(t)}|G)$. The value of $q_\alpha$ can be found approximately by taking the empirical $(1-\alpha)$-quantile of $p(Y|G,C^{(t)})p(C^{(t)}|G),1\leq t\leq T$. However, the size of $H_\alpha$ is needed to compute the THAMES, and we know of no simple way to compute this size in general.

A set whose size is easily computed is given by the Cartesian product\begin{align*}
    E_{\hat{C},\hat{r}}=\{\hat{C}_1^{(1)},\dots,\hat{C}_1^{(\hat{r})}\}\times\dots\times\{\hat{C}_{n}^{(1)},\dots,\hat{C}_n^{(\hat{r})}\},\quad 1\leq \hat{r}\leq T/2.
\end{align*} To approximate a HPD region, $\hat{C}=(\hat{C}^{(1)},\dots,\hat{C}^{(\hat{r})})$ are chosen from the first half of the relabelled MCMC sample based on their posterior probabilities $p(Y|G,\hat{C}^{(t)})p(\hat{C}^{(t)}|G),1\leq t\leq \hat{r}$. If the variational assumptions hold, i.e., if the posterior distribution is defined by the parameters $\hat{z}$ via Equation \eqref{eq: variational_h}, and under some additional assumptions on the concentration of $\hat{z}$ and $E_{\hat{C},\hat{r}}$, the variance of the THAMES can easily be minimised with respect to $\hat{r}$ when $E_{\hat{C},\hat{r}}$ is chosen as a truncation set, as shown in the supplementary material.

The specific shape of $E_{\hat{C},\hat{r}}$ is convenient because the number of its elements is easily computed by taking the product of the number of elements of each set. However, this set can be suboptimal, for example when the variational assumptions do not hold. To remedy this, $E_{\hat{C},\hat{r}}$ is combined with the HPD region by taking the intersection $B_{\hat{C},\hat{r},\alpha}=E_{\hat{C},\hat{r}}\cap H_\alpha$. This set is more robust to violations of the variational assumptions, since it approximates $H_\alpha$ and the optimality results which we showed with regards to $H_\alpha$ do not rely on these assumptions. The size of $B_{\hat{C},\hat{r},\alpha}$ is also easily approximated via a Monte Carlo approach, by independently sampling $\eta^{(1)},\dots,\eta^{(T)}$ from the uniform distribution on $E_{\hat{C},\hat{r}}$ and by calculating \begin{align*}
    \widehat{|B_{\hat{C},\hat{r},\alpha}|}=\frac{1}{T}\sum_{\substack{t=1\\\eta^{(t)}\in H_\alpha}}^T|E_{\hat{C},\hat{r}}|.
\end{align*}Thus $A=B_{\hat{C},\hat{r},\alpha}$ is used to define the THAMES for the SBM. Its control-parameter $\alpha$ is chosen empirically, by selecting a grid on $(0,1]$, computing the sample variance of the THAMES for each value of the grid, and by setting $\alpha$ such that the variance corresponding to its grid value is minimal (see the supplementary material for details on the grid selection). Note that, for a fixed value of $\alpha$, the THAMES is equal to a RIS estimator multiplied by $\frac{\widehat{|B_{\hat{C},\hat{r},\alpha}|}}{B_{\hat{C},\hat{r},\alpha}}$, which almost surely converges to a constant. Hence, simulation-consistency and asymptotic normality still hold by Slutsky's theorem. 

\subsection{Calculating the THAMES for the stochastic block model efficiently\label{ssec: Calculating the THAMES for the stochastic block model efficiently}}

In \citeauthor{metodiev2025mixturemodels} (2025a), it was shown that sums over permutations of a truncated estimator can be computed efficiently if an appropriate ordering of the parameter space is used. This approach will be extended to allocation vectors by identifying a lexicographical ordering of the columns of any allocation vector matrix $C$ from the second half of the MCMC sample, $C\in\{C^{(T/2+1)},\dots,C^{(T)}\}$. Since there are $G$ columns of $C$, it suffices to choose $G$ rows to define such an ordering. Let $\upsilon=\{\upsilon_1,\dots,\upsilon_G\}$ be the set of size $G$ that defines these rows and let $\psi_\upsilon(C)$ denote a lexicographical ordering, in the sense that it is the unique permutation of the columns of $C$ for which \begin{align}\label{eq: definition_ordering_constraint}
    \min\{i\in\upsilon:\psi_\upsilon(C)_{i,1}=1\}>\dots> \min\{i\in\upsilon:\psi_\upsilon(C)_{i,G}=1\}.
\end{align} For example, in the case that $G=2$, $n=2$ and $\upsilon=\{1,2\}$, \begin{align*}
    (\psi_\upsilon(C)_{i,1},\psi_\upsilon(C)_{i,2})=\begin{cases}(C_{i,1},C_{i,2})&\min\{i:C_{i,1}=1\}> \min\{i:C_{i,2}=1\}\\(C_{i,2},C_{i,1})&\text{else.}\end{cases}
\end{align*}Let $\Omega$ be equal to an index set that includes all permutations for which $P_o(\psi_\upsilon(C^{(t)}))\in B_{\hat{C},\hat{r},\alpha}$ for some value of $T/2+1\leq t\leq T$. Due to the following theorem, this set can be used to drastically reduce computation time.

\begin{theorem}
If there are no empty clusters on the second half of the MCMC sample, in the sense that the event $\sum_{i=1}^n C_{i,g}^{(t)}=0$ does not occur for any value of $T/2+1\leq t\leq T$ and $1\leq g\leq G$, the THAMES is equal to \begin{align}\label{eq: thames_with_omegae}
    \hat{Z}^{-1}_{\textup{THAMES}}(G)=\frac{1}{G!}\sum_{o\in\Omega}\frac{1}{T/2}\sum^T_{\substack{t=T/2+1\\P_o(\psi_\upsilon(C^{(t)}))\in A}}\frac{1/|A|}{p(C^{(t)}|G)p(Y|G,C^{(t)})}.
\end{align}
\end{theorem}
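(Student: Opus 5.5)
We start from the definition of the THAMES and interchange the order of summation, writing
\begin{align*}
\hat{Z}^{-1}_{\text{THAMES}}(G)=\frac{1}{G!}\frac{1}{T/2}\sum_{t=T/2+1}^{T}\frac{1/|A|}{p(C^{(t)}|G)p(Y|G,C^{(t)})}\,\bigl|\{o:P_o(C^{(t)})\in A\}\bigr|.
\end{align*}
This isolates, for each draw, a count of the permutations that map it into $A$. Two invariances make this convenient. The collapsed prior and the collapsed likelihood are invariant under relabelling, so we may replace $C^{(t)}$ by $\psi_\upsilon(C^{(t)})$ in the denominators. This uses symmetry of the hyperparameters, which the paper assumes. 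Moreover, since $\psi_\upsilon(C^{(t)})=P_{o_t}(C^{(t)})$ for some $o_t$ and the permutations form the symmetric group, the map $o\mapsto o\circ o_t$ is a bijection of $\{1,\dots,G!\}$. Hence
\begin{align*}
\bigl|\{o:P_o(C^{(t)})\in A\}\bigr|=\bigl|\{o:P_o(\psi_\upsilon(C^{(t)}))\in A\}\bigr|.
\end{align*}
This shows that the THAMES equals the same expression with $C^{(t)}$ replaced by $\psi_\upsilon(C^{(t)})$ inside the indicator, still summed over all $o$.

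The second step is to restrict the sum over $o$ to $\Omega$. By definition, $\Omega$ contains every $o$ for which $P_o(\psi_\upsilon(C^{(t)}))\in A$ for at least one $t$ in the second half. Any $o\notin\Omega$ therefore contributes no terms to the inner sum, and dropping those $o$ changes nothing. Swapping the summation order back then gives \eqref{eq: thames_with_omegae}.

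The hypothesis of no empty clusters is where the main care is needed, and I would treat it as the principal obstacle. It ensures that $\psi_\upsilon(C)$ is well defined, meaning that each column attains a $1$ so the minima in \eqref{eq: definition_ordering_constraint} exist. It also ensures that the map $o\mapsto P_o(C)$ is injective, since distinct columns are distinct nonzero indicator vectors. With an empty cluster, two columns would coincide, $\psi_\upsilon$ would be ambiguous, and the bijection and counting argument would need multiplicities.

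There is one further gap: the minima in \eqref{eq: definition_ordering_constraint} are taken over $i\in\upsilon$, so we also need every column to have a $1$ among the rows in $\upsilon$. I would try first to read the ordering as extended to all rows when needed, or to note that $\upsilon$ is chosen so that this holds. Only the existence of a unique canonical relabelling $P_{o_t}$ is required, and uniqueness follows from the columns being distinct and nonzero. With that settled, the remaining steps are routine reindexing.
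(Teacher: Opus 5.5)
Your proof is correct and is essentially the paper's argument: swap the summations, reindex the sum over all $G!$ permutations using $\psi_\upsilon(C^{(t)})=P_{o_t}(C^{(t)})$ and the bijection $o\mapsto o\circ o_t$ of the symmetric group (this makes precise the paper's remark that ``we are summing over all permutations anyway''), and then discard the indices outside $\Omega$, which contribute nothing. The reindexing needs neither injectivity of $o\mapsto P_o(C)$ nor label-invariance of the denominators (the target formula keeps $C^{(t)}$ there), and it works for any relabelling map; so the issue you flag, columns with no $1$ in the rows of $\upsilon$ (which the paper itself concedes in its LBM remark), affects only whether $\psi_\upsilon$ is well defined, not the identity.
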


\begin{remark}
Equation \eqref{eq: thames_with_omegae} simplifies the computation of the THAMES, since it implies that only a sum over $\Omega$ needs to be computed, whose size is usually equal to one in our experience, instead of a sum over $\{1,\dots,G!\}$, whose size grows superexponentially with $G$. 

The small size of $\Omega$ can be explained as follows: while there may be a certain amount of uncertainty concerning the cluster assignments of some nodes, most clusters have the property that there exists at
least one node to which they are assigned with posterior
probability close to one. We need only sum over $\tilde{G}$! many permutations, where
$\tilde{G}$ denotes the number of clusters which do not have this
property. More precisely, setting $\upsilon_g=\textup{argmax}_i \hat{z}_{i,g}$ for all  $g\in\{1,\dots,G\}$ allows a particularly simple choice of $\Omega$. This choice  is obtained using the following selection rule. If $P_o$ permutes the label of a  cluster $g$ for which there exists a node $i$ satisfying  $\hat{C}_{i,g}^{(1)}=\dots=\hat{C}_{i,g}^{(\hat{r})}$, where  $\hat{C}^{(1)},\dots,\hat{C}^{(\hat{r})}$ are the allocation vector matrices used to  define the THAMES truncation region, then $o\notin\Omega$; otherwise, $o\in\Omega$. Thus, $|\Omega|=\tilde{G}!$, where $\tilde{G}$ denotes the number of clusters $g$ for which no node $i$ exists which fulfils the above equation. As mentioned above, it is usually the case that $\tilde{G}$ is equal to zero in our experience. if there are no empty clusters in the second half of the MCMC sample. Luckily, the case of empty clusters can be easily sidestepped.
\end{remark}

\subsection{Dealing with empty clusters\label{ssec: Dealing with empty clusters}}

An empty cluster occurs in the second half of the MCMC sample whenever $\sum_{i=1}^n C_{i,g}^{(t)}=0$ for some value of $1\leq t\leq T$ and $1\leq g\leq G$.

Let $p_0(G)$ denote the posterior probability that cluster $G$ is empty. We use the following theorem, which was proven in \citet{No04-MixtureModelsDifferentSizeLink,No07-MargLikEmptyComponents} for mixture models, but which also holds for the SBM.

\begin{theorem}

If the prior on the SBM parameters is given by Equations \eqref{eq: dirichlet_prior} and \eqref{eq: beta_prior}, then \begin{align}\label{eq: nobileformula_sbm}
    Z(G)=Z(G-1)\cdot\frac{1}{p_0(G)}\cdot\frac{\Gamma(\sum^G_{g=1}n_g^0)\Gamma(n+\sum^{G-1}_{g=1}n_g^0)}{\Gamma(\sum^{G-1}_{g=1}n_g^0)\Gamma(n+\sum^G_{g=1}n_g^0)}.
\end{align}
\end{theorem}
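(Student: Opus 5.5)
The plan is to write $p_0(G)$ as a ratio of sums over allocation matrices, and then compare the numerator with $Z(G-1)$ term by term. I interpret $p_0(G)$ as the posterior probability, under the $G$-cluster model, that column $G$ of $C$ is identically zero. By \eqref{eq: marglik_definition},
\begin{align*}
p_0(G)=\frac{1}{Z(G)}\sum_{C:\,\sum_i C_{i,G}=0}p(C|G)\,p(Y|G,C).
\end{align*}
It therefore suffices to show that the numerator equals $Z(G-1)$ multiplied by the stated ratio of Gamma functions, and then rearrange.

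The key step is a bijection. Allocation matrices $C$ with $G$ columns whose $G$-th column is empty correspond one-to-one with allocation matrices $C'$ with $G-1$ columns, obtained by deleting the last column. For such pairs I compare both collapsed factors.

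\emph{Likelihood factor.} For every $(g,k)$ with $g=G$ or $k=G$, the count $\sum_i\sum_{j\neq i}C_{i,g}C_{j,k}Y_{i,j}$ and its complement are zero. Hence $a_{g,k}=a^0_{g,k}$ and $b_{g,k}=b^0_{g,k}$, and the corresponding factor in $p(Y|G,C)$ equals one. The remaining factors are exactly those of $p(Y|G-1,C')$, provided the Beta hyperparameters of the $(G-1)$-model are the restrictions of those of the $G$-model. I will state this nesting as the standing convention. Then $p(Y|G,C)=p(Y|G-1,C')$.

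\emph{Prior factor.} Writing $N_g=\sum_i C_{i,g}$, so that $n_g=n_g^0+N_g$, the column $G$ contributes $\Gamma(n_G)/\Gamma(n^0_G)=1$. The other product terms coincide with those of $p(C'|G-1)$. The denominator becomes $\Gamma(\sum_{g\le G} n_g)=\Gamma(n+\sum_{g\le G}n_g^0)$, using $\sum_g N_g=n$. The ratio $p(C|G)/p(C'|G-1)$ is then
\begin{align*}
\frac{\Gamma(\sum^G_{g=1}n_g^0)\,\Gamma(n+\sum^{G-1}_{g=1}n_g^0)}{\Gamma(\sum^{G-1}_{g=1}n_g^0)\,\Gamma(n+\sum^G_{g=1}n_g^0)},
\end{align*}
and this constant does not depend on $C$.

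Summing over the bijection gives
\begin{align*}
p_0(G)Z(G)=Z(G-1)\times\text{(the constant above)},
\end{align*}
which rearranges to \eqref{eq: nobileformula_sbm}.

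I expect the main obstacle to be bookkeeping rather than analysis. First, the Dirichlet/Beta hyperparameters must be nested across $G$ for the likelihood identification to hold. Second, the formula for $p(C|G)$ should be read with $n_g$ including the prior counts; I will check that $\sum_g n_g=n+\sum_g n_g^0$ is what makes the constant come out exactly as stated. Third, $p_0(G)$ refers specifically to the labelled cluster $G$ being empty, not to "some cluster is empty". Under a symmetric prior these differ, and I will make the labelled interpretation explicit.
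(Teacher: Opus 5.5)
Your proof is correct and follows the paper's argument. Both use the bijection between $G$-cluster allocations with an empty last column and $(G-1)$-cluster allocations, note that every likelihood factor involving cluster $G$ equals one, and show that $p(C|G)/p(C'|G-1)$ is the stated $C$-independent Gamma ratio. The differences are minor: you write $p_0(G)Z(G)$ directly as the restricted sum, whereas the paper splits $Z(G)$ into empty and non-empty parts and rewrites the latter as $Z(G)(1-p_0(G))$, and your explicit nesting assumption on the Dirichlet and Beta hyperparameters is one the paper leaves implicit.
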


Whenever an empty cluster occurs, one can thus simply estimate \begin{align*}\hat{p}_0(G)=\frac{1}{G}\sum^G_{g=1}\frac{1}{T}\sum_{t=1}^T\prod_{i=1}^n (1-C_{i,g}^{(t)}) ,
\end{align*} and insert the THAMES estimate of the lower dimensional model $\hat{Z}_{\text{THAMES}}(G-1)$ into Equation \eqref{eq: nobileformula_sbm} whenever an empty cluster occurs in the second half of the MCMC sample.

\section{Experiments}\label{sec: Experiments}

\begin{table}[]
    \centering
    \caption{parameters of the three different scenarios; $\lambda$ is varied between 0.3 and 0.7 within the experiments}
    \begin{tabular}{c|c|c|c|}
     &\text{Communities}&\text{Disassortative Mixing}&\text{Disassortative Mixing}  \\
     &&\text{(symmetric)}&\text{(asymmetric)}\\\hline $\mu$&$\begin{pmatrix}
        \lambda&0.01\\0.01&\lambda
    \end{pmatrix}$&$\begin{pmatrix}
        0.01&\lambda\\\lambda&0.01
    \end{pmatrix}$&$\begin{pmatrix}
        0.01&0.01\\\lambda&0.01
    \end{pmatrix}$\\\hline$\tau$&$(1/2,1/2)$&$(1/2,1/2)$&$(1/2,1/2)$\\
\end{tabular}
    \label{tab: sim_01}
\end{table}

To evaluate the performance of the THAMES, it is tested in simulation settings in which the true value of the marginal likelihood is known. The only settings in which the exact marginal likelihood can be computed in a feasible amount of time are settings in which both the size of the data $n$ and the number of clusters $G$ are very low. To this end, $n=10$ and $G=2$ are chosen, since the sum occurring in Equation \eqref{eq: marglik_definition} has only $G^n=2^{10}=1024$ terms in this case.

In this section and the next, we choose the hyperparameters $a^0_{g,k}=b^0_{g,k}=n_1^0=\dots=n_G^0=1$, since it has been shown that the posterior distribution obtains near-optimal contraction rates in this case \citep{Gh20-SBMcontractionrates}. Three scenarios are considered, whose parameters are described in detail in Table \ref{tab: sim_01}.

\subsection{Evaluating the performance of the THAMES with increasing size $T$ of the MCMC sample}

\begin{figure}
    \centering \begin{tabular}{c}
        \includegraphics[scale=.41]{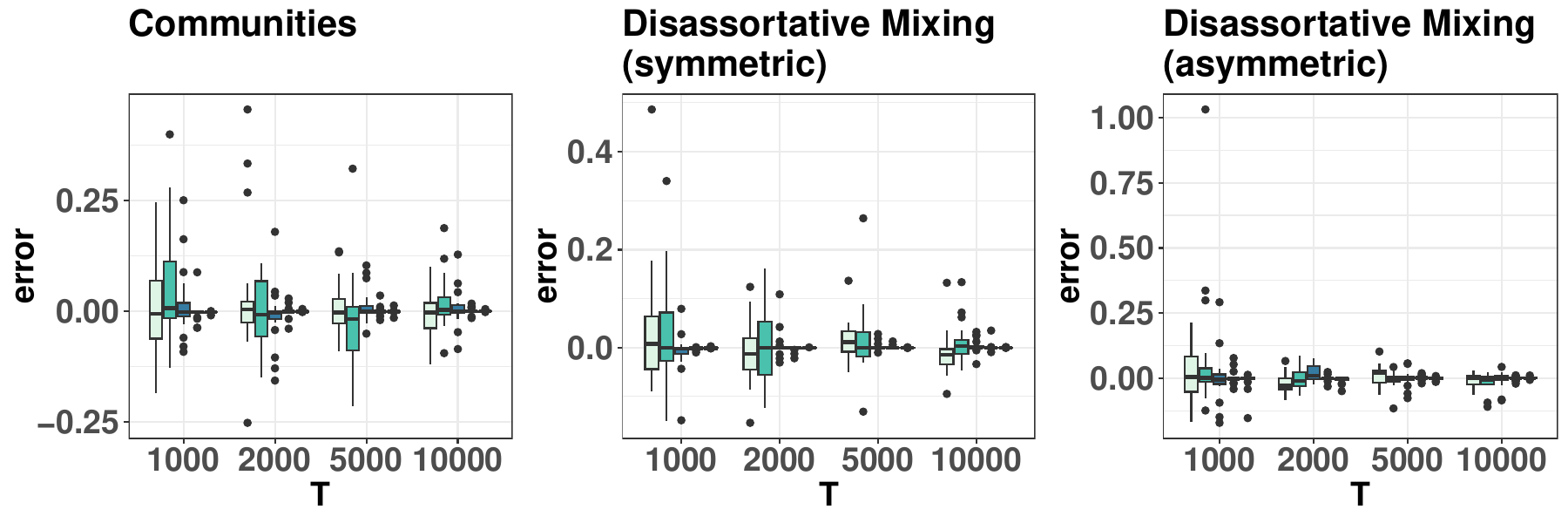}\\
         \includegraphics[scale=.6]{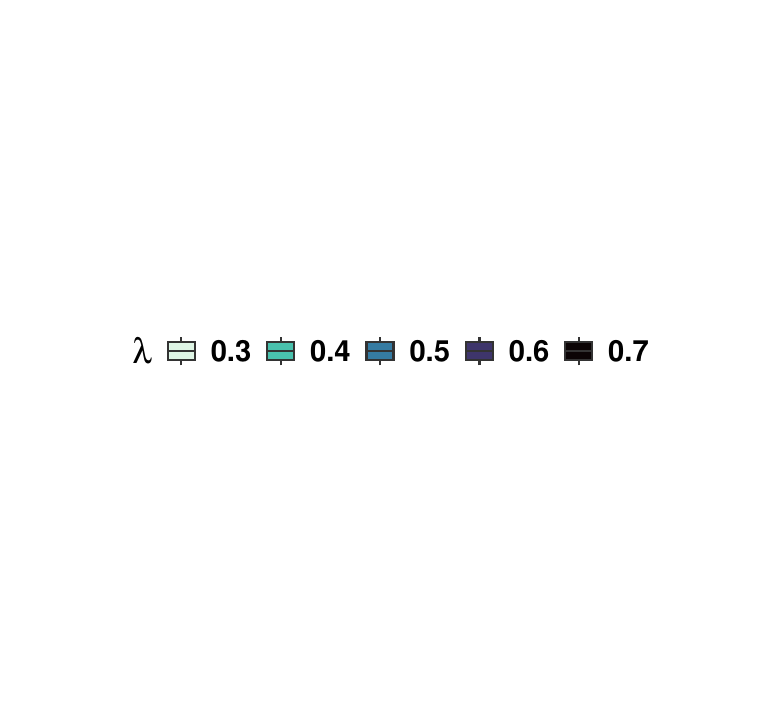}
    \end{tabular}
    
    \caption{The error $\log(Z)-\log(\hat{Z}_{\text{THAMES}})$ of the logarithm of the THAMES for 20 different datasets for each value of $\lambda$ and for different MCMC sample sizes $T$, as well as different scenarios (communities, symmetric disassortative mixing, asymmetric disassortative mixing)}
    \label{fig:sim01_results_varyingT}
\end{figure}

For each value of $\lambda\in\{0.3,0.4,0.5,0.6,0.7\}$, 20 datasets were simulated. Then, Gibbs sampling was performed with a burn-in of 2,000 to obtain a MCMC sample of size $T=1,000,T=2,000,T=5,000,$ and $T=10,000$, respectively. The THAMES was computed for each dataset. The error between the logarithm of the THAMES and the logarithm of the true marginal likelihood $\log(Z)-\log(\hat{Z}_{\text{THAMES}})$ is shown in Figure \ref{fig:sim01_results_varyingT}. The results are in line with the fact that the THAMES is simulation-consistent, since this error is decreasing as the number of simulations $T$ increases, for all simulation settings and all values of $\lambda$.

\subsection{The THAMES compared to simulation-consistent and variational estimators}

\begin{figure}
    \centering \begin{tabular}{c}
        \includegraphics[scale=.4]{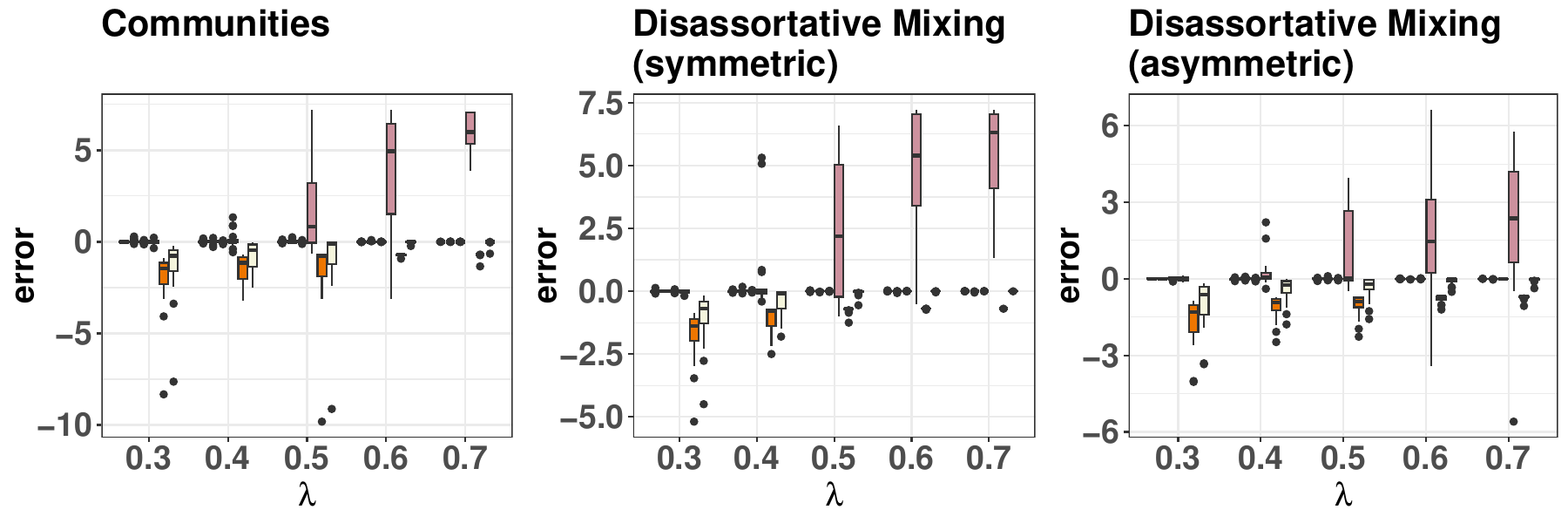}\\
         \includegraphics[scale=.6]{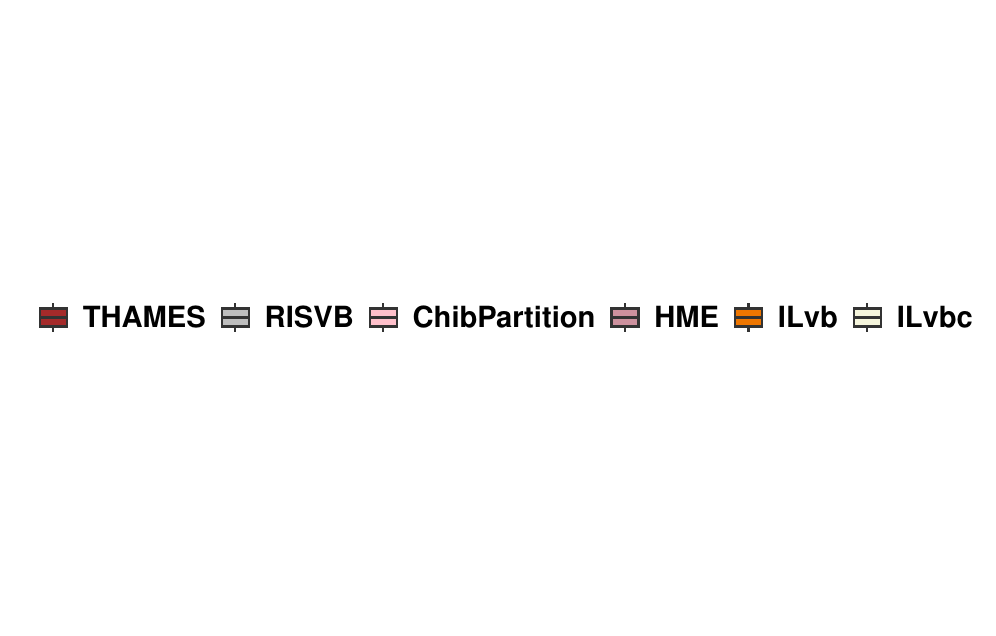}
    \end{tabular}
    
    \caption{The error $\log(Z)-\log(\hat{Z})$ of the logarithm of the different estimators (the THAMES, the RISVB estimator, the ChibPartition estimator, the harmonic mean estimator and the ILvb estimator as well as its correction, ILvbc) for 20 different datasets as well as different scenarios (communities, symmetric disassortative mixing, asymmetric disassortative mixing)}
    \label{fig:sim01_results_varyinglambda}
\end{figure}

Since we are not aware of any other simulation-consistent marginal likelihood estimators having been used in this setting, the THAMES is compared to simulation-consistent marginal likelihood estimators that were described and adapted to the SBM in this article. These are the harmonic mean estimator (HME, Equation \ref{eq: harmonicmean_SBM}), the variational Bayes reciprocal importance sampling estimator (RISVB, Equation \ref{eq: RISVB_estimator}) and the ChibPartition estimator (Equation \ref{eq: ChibPartition}). The performance of the THAMES is also compared to a variational Bayes approximation of the marginal likelihood, the ILvb \citep{latouche_variational_2012}. In simulations, we observed that the ILvb was biased by a factor roughly equal to $G!$ because the variational Bayes' algorithm is known to focus on a single mode when approximating the posterior distribution \citep{balocchi_understanding_2025}, so we also compared the THAMES to a version of ILvb for which this bias was corrected, denoted as ILvbc. We refer to the supplementary material for details on the computation of these variational estimators.

The results are shown in Figure \ref{fig:sim01_results_varyinglambda}. Each estimator was computed on a MCMC sample of size $T=10,000$, for 20 different datasets independently simulated for each value of $\lambda\in\{0.3,0.4,0.5,0.6,0.7\}$. The error of the logarithm, $\log(Z)-\log(\hat{Z})$, was plotted for each estimator. Only the harmonic mean estimator did not converge in the case where $\lambda$ was high, likely because the variance of the harmonic mean estimator is high when the posterior distribution is very different from the prior distribution, and the posterior is much more concentrated than the prior since it is concentrated around the true solution in this case. For the variational estimators, the converse is true: while they  approximate the marginal likelihood well when $\lambda$ is large, presumably because variational assumptions hold, they are clearly biased when $\lambda$ is small, even when taking into account the $G!$ correction.

\subsection{The THAMES compared to simulation-consistent and variational estimators on one specific dataset}

\begin{figure}
    \centering \includegraphics[scale=.42]{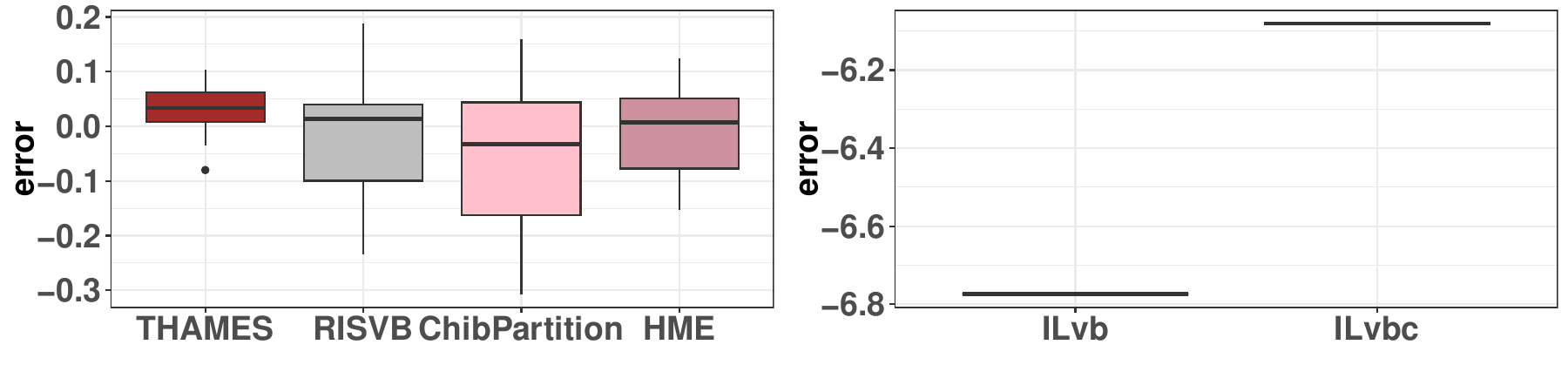}
    \caption{The error $\log(Z)-\log(\hat{Z})$ of the logarithm of the THAMES, the RISVB estimator, the ChibPartition estimator, the harmonic mean estimator and the ILvb estimator as well as its correction, ILvbc, for 20 different MCMC simulations on the same dataset; the values of ILvb and ILvbc do not change since they are not calculated on a MCMC sample}
    \label{fig:difficultgraph_results}
\end{figure}

In most simulations, we observed that the posterior probability of the maximum a posteriori estimator $\hat{C}_{\text{MAP}}$, an allocation vector matrix within $\{C^{(1)},\dots,C^{(T)}\}$ which maximises the unnormalised posterior probability $p(\hat{C}_{\text{MAP}}|G)p(Y|G,\hat{C}_{\text{MAP}})$, was particularly high since there was one value of the MCMC sample that repeated itself very often. However, in the case of the COP28 dataset the posterior probability of the MAP is so low that it does not repeat at all. To simulate this problem, we chose a dataset for which the posterior probability of the MAP is low (see the supplementary material for details). The simulation-consistent estimators were computed on 20 independent MCMC simulations based on this dataset. The error, $\log(Z)-\log(\hat{Z})$, of the logarithm of the estimators is shown in Figure \ref{fig:difficultgraph_results}. The THAMES clearly outperforms the other estimators because it is more adapted to deal with this low probability setting. The ILvb and ILvbc estimators are quite biased compared to the other estimators, with errors of -6.77 and -6.08, respectively. Their errors do not change with different MCMC simulations, since these estimators do not use these simulations to approximate the marginal likelihood. 

\section{The COP28 dataset \label{sec: The COP28 dataset}}

The COP28 dataset contains 1,182,423 nodes (users) and 2,754,066 directed edges (quotes/reposts), which were extracted from an exhaustive corpus of X data covering the full year of 2023. The corpus includes all posts and reposts containing the exact string “cop28” or “\#cop28”, regardless of capitalisation and language. Since “cop28” was the widely used label for the event, this query provides very strong coverage of the related conversations on X. It should also be noted that, at the time of data collection, Linkfluence, which provided this dataset, had full access to the entire X Firehose (GNIP), ensuring exhaustive data collection.

\begin{figure}
    \centering
    \begin{tabular}{c}
         \includegraphics[scale=.35]{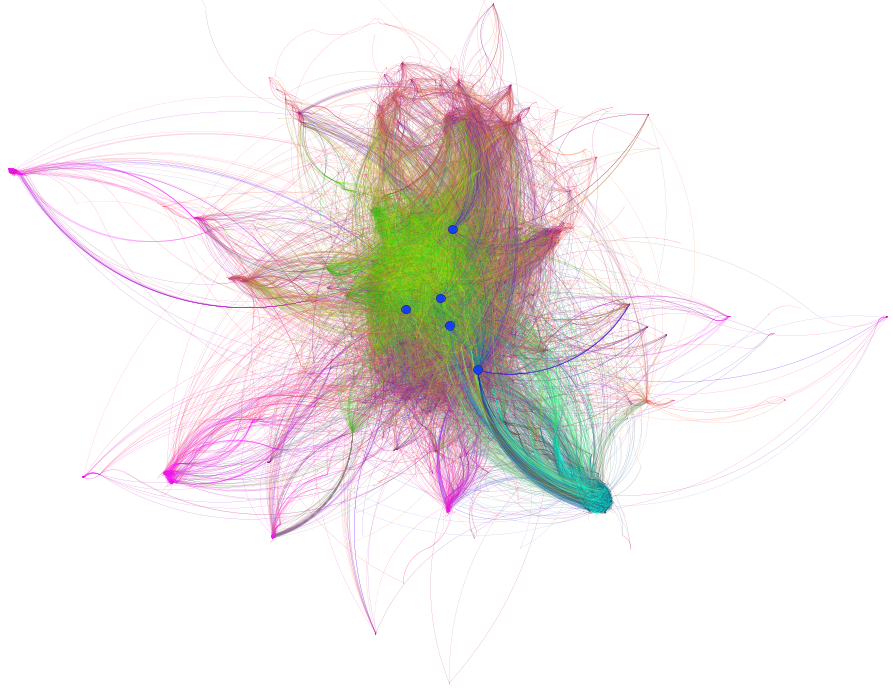}\\\includegraphics[scale=.8]{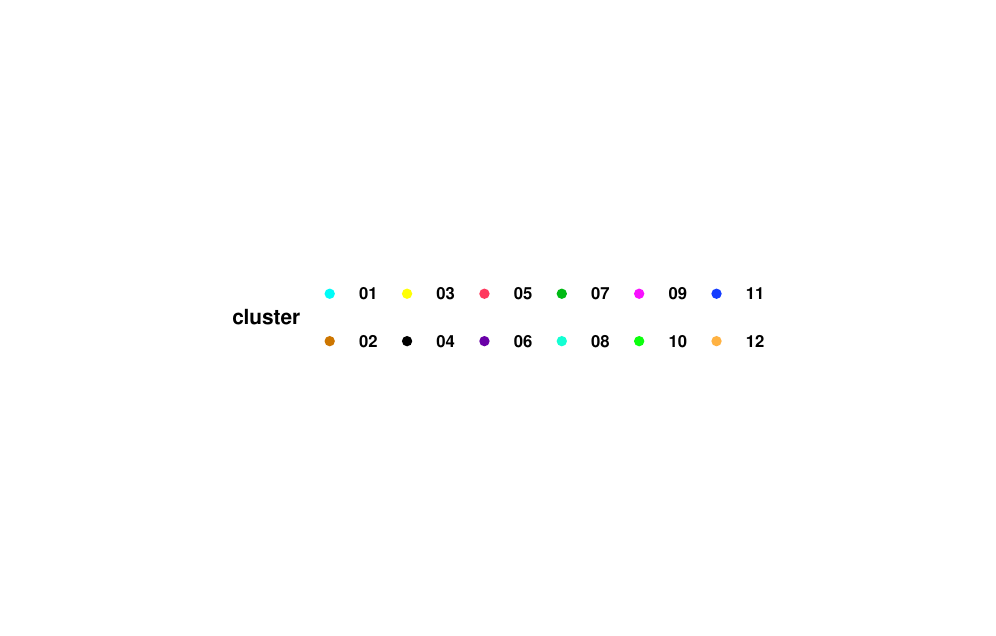}
    \end{tabular}
    
    \caption{A graphical representation of the COP28 dataset with 12 clusters, using the ForceAtlas 2 algorithm \citep{Ja_et_al14-forceatlas2} on the platform gephi \citep{Ba_et_al09-gephi}}
    \label{fig:COP28_visualization}
\end{figure}

Since the COP28 dataset is too large to allow fast computation with our available processing power, we first decreased it to 10,979 nodes and 74,826 edges by preprocessing steps explained in the supplementary material, and then computed the THAMES from $G=2$ to $G=20$. It was maximised for $G=12$. Interestingly, this differs considerably from the value that maximises the integrated complete data log likelihood \citep[ICL,][]{biernacki_assessing_2002, daudin_mixture_2008, greedpackage}, a state-of-the-art alternative to the marginal likelihood, which estimates $G=37$ when using the greed package \citep{greedpackage}. It appears that the clusters provided by maximising the ICL are quite fractured, with one cluster even only containing one user (see the supplementary material for a detailed comparison of these results with the results obtained by the THAMES). A MAP estimate was computed on the sample with $G=12$ clusters. The resulting clustering is visualised in Figure \ref{fig:COP28_visualization}.

Cluster 11 has the interesting property that only 5 nodes (UN Climate Change, Al Gore, COP28 UAE, Loss and Damage Collaboration (L\&DC) and António Guterres)  are assigned to it and \textbf{all} other clusters tend to quote or repost this cluster. Indeed, the only clusters that had a higher probability of connecting to a cluster different from cluster 11 are clusters 1 and 4, who tend to repost cluster 8. Remarkably, there is not a single cluster for which the internal probability of quoting/reposting itself is larger than the external probability of quoting/reposting another cluster. Thus, we find the phenomenon of asymmetric disassortative mixing defined in Table \ref{tab: sim_01}, where edges run heavily from one cluster to another with no reciprocity, observed here at scale and on real data. From a sociological point of view, it might be appropriate to classify these clusters as a "public". Publics are organised around shared attention to a common object \citep[see][for a review]{OiRi24-public_review}. In this specific context, such attention is particularly centralised onto five users. While a textual analysis may reveal a slightly more nuanced picture (see the supplementary material), the topological identification of clusters that focuses only on the repost/quote structure around the topic of the COP28 indicates a strong core-periphery structure, with all but five users at the periphery. While similar structures have been noted around the social network X \citep{Ya_et_al18-coreperiphery_twitter}, this  extreme structure seems indicative of the domination of five central users of X's landscape around the COP28.

\section{Discussion \label{sec: discussion}}

We have introduced the THAMES for block models. To the best of our knowledge, it is the first simulation-consistent marginal likelihood estimator to be used to determine the number of clusters in the SBM or LBM. It is asymptotically normal and can be efficiently computed, even on a dataset with more than 10,000 nodes.

The THAMES for block models relies on the fact that the collapsed distributions of the latent allocation vectors are available in a simple analytic expression. As such, it can be applied to any block model that fulfils this condition \citep[e.g., ][]{ke_et_al06-infinite_relational_model,WyFr12-collapsed_infinite_LBM,peixoto_nonparametric_2017,Pa_et_al26-collapsed_SBM_variations}. Interesting future work might be to create simulation-consistent marginal likelihood estimators that do not need the collapsed distribution, but that can act on the joint density of the discrete and continuous parameters, as suggested by \citet{Bi_et_al10-marglikestimLatentClassModel,tancini_marginal_2025}.

A different approach to ours would be to include a prior on the number of clusters to directly sample from the whole posterior space \citep{mcdaid_clustering_2012,newman_estimating_2016,peixoto_nonparametric_2017,Le_et_al22-harmonicmeanSBM}. Even in this approach, it is still of interest to compute the marginal likelihood for model comparison \citep{mcdaid_clustering_2012,Le_et_al22-harmonicmeanSBM}. The THAMES could be used to this end as well.

\section*{Author Contributions}
All five authors contributed to the paper.

\section*{Declarations}

\section*{Competing interests}
No competing interest is declared.

\section*{Data availability}

The dataset and the code are available on github via the following {\color{blue}\href{https://github.com/m-metodiev/thames-block-models}{link}}.

\section*{Acknowledgments}
Raftery's research was supported by the Blumstein-Jordan professorship at the University of Washington. Latouche's research was supported by the Institut Universitaire de France (IUF).

\bibliographystyle{chicago}
\bibliography{THAMES_sbm.bib}

@article{balocchi_understanding_2025,
	title = {Understanding uncertainty in {Bayesian} cluster analysis},
	journal = {arXiv preprint arXiv:2506.16295},
	author = {Balocchi, Cecilia and Wade, Sara},
	year = {2025},
}

@article{llorente_marginal_2023,
	title = {Marginal likelihood computation for model selection and hypothesis testing: {An} extensive review},
	volume = {65},
	journal = {SIAM Review},
	author = {Llorente, Fernando and Martino, Luca and Delgado, David and Lopez-Santiago, Javier},
	year = {2023},
	pages = {3--58},
}

@book{jeffreys_theory_1961,
	address = {Oxford, U.K.},
	edition = {3rd},
	title = {Theory of {Probability}},
	publisher = {Oxford University Press},
	author = {Jeffreys, H.},
	year = {1961},
}

@article{wang_stochastic_1987,
	title = {Stochastic blockmodels for directed graphs},
	volume = {82},
	number = {397},
	journal = {Journal of the American Statistical Association},
	author = {Wang, Yuchung J and Wong, George Y},
	year = {1987},
	note = {Publisher: Taylor \& Francis},
	pages = {8--19},
}

@Article{greedpackage,
    title = {Hierarchical clustering with discrete latent variable models and the integrated classification likelihood},
    author = {{Côme} and {E.} and {Jouvin} and {N.} and {Latouche} and {P.} and {Bouveyron} and {C.}},
    journal = {Advances in Data Analysis and Classification},
    year = {2021},
    volume = {15},
    pages = {957–986},
    url = {https://doi.org/10.1007/s11634-021-00440-z},
  }

@article{biernacki_assessing_2002,
	title = {Assessing a mixture model for clustering with the integrated completed likelihood},
	volume = {22},
	number = {7},
	journal = {IEEE transactions on pattern analysis and machine intelligence},
	author = {Biernacki, Christophe and Celeux, Gilles and Govaert, Gérard},
	year = {2002},
	note = {Publisher: IEEE},
	pages = {719--725},
}

@article{daudin_mixture_2008,
	title = {A mixture model for random graphs},
	volume = {18},
	number = {2},
	journal = {Statistics and computing},
	author = {Daudin, J-J and Picard, Franck and Robin, Stéphane},
	year = {2008},
	note = {Publisher: Springer},
	pages = {173--183},
}

@article{latouche_variational_2012,
	title = {Variational {Bayesian} inference and complexity control for stochastic block models},
	volume = {12},
	number = {1},
	journal = {Statistical Modelling},
	author = {Latouche, Pierre and Birmelé, Etienne and Ambroise, Christophe},
	year = {2012},
	note = {Publisher: SAGE Publications Sage India: New Delhi, India},
	pages = {93--115},
}

@article{nobile_bayesian_2007,
	title = {Bayesian finite mixtures with an unknown number of components: {The} allocation sampler},
	volume = {17},
	journal = {Statistics and Computing},
	author = {Nobile, Agostino and Fearnside, Alastair T},
	year = {2007},
	note = {Publisher: Springer},
	pages = {147--162},
}

@inproceedings{tancini_marginal_2025,
	title = {Marginal {Likelihood} for {Intractable} {Hidden} {Markov} {Models}},
	booktitle = {Scientific {Meeting} of the {Italian} {Statistical} {Society}},
	publisher = {Springer},
	author = {Tancini, Daniele and Rastelli, Riccardo and Bartolucci, Francesco},
	year = {2025},
	pages = {398--403},
}

@article{come_model_2015,
	title = {Model selection and clustering in stochastic block models based on the exact integrated complete data likelihood},
	volume = {15},
	url = {https://doi.org/10.1177/1471082X15577017},
	doi = {10.1177/1471082X15577017},
	number = {6},
	journal = {Statistical Modelling},
	author = {Côme, Etienne and Latouche, Pierre},
	year = {2015},
	note = {\_eprint: https://doi.org/10.1177/1471082X15577017},
	pages = {564--589},
}

@article{hairault_evidence_2022,
	title = {Evidence estimation in finite and infinite mixture models and applications},
	journal = {arXiv preprint arXiv:2205.05416},
	author = {Hairault, Adrien and Robert, Christian P and Rousseau, Judith},
	year = {2022},
}

@article{metodiev2025mixturemodels,
  title={Easily Computed Marginal Likelihoods for Multivariate Mixture Models Using the {THAMES} Estimator},
  author={Metodiev, Martin and Irons, Nicholas J and Perrot-Dock{\`e}s, Marie and Latouche, Pierre and Raftery, Adrian E},
  journal={arXiv preprint arXiv:2504.21812},
  year={2025a}
}

@article{No04-MixtureModelsDifferentSizeLink,
author = {Agostino Nobile},
title = {{On the posterior distribution of the number of components in a finite mixture}},
volume = {32},
journal = {The Annals of Statistics},
number = {5},
publisher = {Institute of Mathematical Statistics},
pages = {2044 -- 2073},
year = {2004},
doi = {10.1214/009053604000000788},
URL = {https://doi.org/10.1214/009053604000000788}
}

@article{No07-MargLikEmptyComponents,
  title={Bayesian finite mixtures: a note on prior specification and posterior computation},
  author={Nobile, Agostino},
  journal={arXiv preprint arXiv:0711.0458},
  year={2007}
}

@article{metodiev_easily_2025,
	title = {Easily {Computed} {Marginal} {Likelihoods} from {Posterior} {Simulation} {Using} the {THAMES} {Estimator}},
	volume = {20},
	url = {https://doi.org/10.1214/24-BA1422},
	doi = {10.1214/24-BA1422},
	number = {3},
	journal = {Bayesian Analysis},
	author = {Metodiev, Martin and Perrot-Dockès, Marie and Ouadah, Sarah and Irons, Nicholas J. and Latouche, Pierre and Raftery, Adrian E.},
	year = {2025b},
	note = {Publisher: International Society for Bayesian Analysis},
	pages = {1003 -- 1030},
}

@misc{hajargasht_accurate_2020,
	title = {Accurate {Computation} of {Marginal} {Data} {Densities} {Using} {Variational} {Bayes}},
	url = {http://arxiv.org/abs/1805.10036},
	doi = {10.48550/arXiv.1805.10036},
	urldate = {2025-11-20},
	publisher = {arXiv},
	author = {Hajargasht, Gholamreza and Woźniak, Tomasz},
	month = may,
	year = {2020},
	note = {arXiv:1805.10036 [stat]},
}

@article{celeux_computational_2000,
	title = {Computational and inferential difficulties with mixture posterior distributions},
	volume = {95},
	number = {451},
	journal = {Journal of the American Statistical Association},
	author = {Celeux, Gilles and Hurn, Merrilee and Robert, Christian P},
	year = {2000},
	note = {Publisher: Taylor \& Francis},
	pages = {957--970},
}

@phdthesis{stephens_bayesian_1997,
	type = {{PhD} thesis},
	title = {Bayesian methods for mixtures of normal distributions},
	school = {University of Oxford},
	author = {Stephens, Matthew},
	year = {1997},
}

@article{nowicki_estimation_2001,
	title = {Estimation and prediction for stochastic blockstructures},
	volume = {96},
	number = {455},
	journal = {Journal of the American statistical association},
	author = {Nowicki, Krzysztof and Snijders, Tom A B},
	year = {2001},
	note = {Publisher: Taylor \& Francis},
	pages = {1077--1087},
}

@article{hofman_bayesian_2008,
	title = {Bayesian {Approach} to {Network} {Modularity}},
	volume = {100},
	url = {https://link.aps.org/doi/10.1103/PhysRevLett.100.258701},
	doi = {10.1103/PhysRevLett.100.258701},
	number = {25},
	journal = {Phys. Rev. Lett.},
	author = {Hofman, Jake M. and Wiggins, Chris H.},
	month = jun,
	year = {2008},
	note = {Publisher: American Physical Society},
	pages = {258701},
}

@article{Bl_et_al03-LDAintroduction,
  title={Latent dirichlet allocation},
  author={Blei, David M and Ng, Andrew Y and Jordan, Michael I},
  journal={Journal of machine Learning research},
  volume={3},
  number={Jan},
  pages={993--1022},
  year={2003}
}

@inproceedings{yan_bayesian_2016,
	address = {Davis, California},
	series = {{ASONAM} '16},
	title = {Bayesian model selection of stochastic block models},
	isbn = {978-1-5090-2846-7},
	booktitle = {Proceedings of the 2016 {IEEE}/{ACM} {International} {Conference} on {Advances} in {Social} {Networks} {Analysis} and {Mining}},
	publisher = {IEEE Press},
	author = {Yan, Xiaoran},
	year = {2016},
	pages = {323--328},
}

@article{peixoto_nonparametric_2017,
	title = {Nonparametric {Bayesian} inference of the microcanonical stochastic block model},
	volume = {95},
	url = {https://link.aps.org/doi/10.1103/PhysRevE.95.012317},
	doi = {10.1103/PhysRevE.95.012317},
	number = {1},
	journal = {Phys. Rev. E},
	author = {Peixoto, Tiago P.},
	month = jan,
	year = {2017},
	note = {Publisher: American Physical Society},
	pages = {012317},
}

@article{celisse_consistency_2012,
	title = {Consistency of maximum-likelihood and variational estimators in the stochastic block model},
	url = {https://projecteuclid.org/journals/electronic-journal-of-statistics/volume-6/issue-none/Consistency-of-maximum-likelihood-and-variational-estimators-in-the-stochastic/10.1214/12-EJS729.short},
	urldate = {2025-11-19},
	author = {Celisse, Alain and Daudin, Jean-Jacques and Pierre, Laurent},
	year = {2012},
}

@article{mariadassou_convergence_2015,
author = {Mahendra Mariadassou and Catherine Matias},
title = {{Convergence of the groups posterior distribution in latent or stochastic block models}},
volume = {21},
journal = {Bernoulli},
number = {1},
publisher = {Bernoulli Society for Mathematical Statistics and Probability},
pages = {537 -- 573},
year = {2015},
doi = {10.3150/13-BEJ579},
URL = {https://doi.org/10.3150/13-BEJ579}
}

@article{lee_review_2019,
	title = {A review of stochastic block models and extensions for graph clustering},
	volume = {4},
	copyright = {2019 The Author(s)},
	issn = {2364-8228},
	url = {https://appliednetsci.springeropen.com/articles/10.1007/s41109-019-0232-2},
	doi = {10.1007/s41109-019-0232-2},
	language = {en},
	number = {1},
	urldate = {2025-11-20},
	journal = {Applied Network Science},
	author = {Lee, Clement and Wilkinson, Darren J.},
	month = dec,
	year = {2019},
	note = {Publisher: SpringerOpen},
	pages = {122},
}

@incollection{Pe19-SBM_review,
	edition = {1},
	title = {Bayesian {Stochastic} {Blockmodeling}},
	copyright = {http://doi.wiley.com/10.1002/tdm\_license\_1.1},
	isbn = {978-1-119-22470-9 978-1-119-48329-8},
	url = {https://onlinelibrary.wiley.com/doi/10.1002/9781119483298.ch11},
	language = {en},
	urldate = {2025-11-20},
	booktitle = {Advances in {Network} {Clustering} and {Blockmodeling}},
	publisher = {Wiley},
	author = {Peixoto, Tiago P.},
	editor = {Doreian, Patrick and Batagelj, Vladimir and Ferligoj, Anuška},
	month = nov,
	year = {2019},
	doi = {10.1002/9781119483298.ch11},
	pages = {289--332},
}

@article{aicher_learning_2015,
	title = {Learning {Latent} {Block} {Structure} in {Weighted} {Networks}},
	volume = {3},
	issn = {2051-1310, 2051-1329},
	url = {http://arxiv.org/abs/1404.0431},
	doi = {10.1093/comnet/cnu026},
	number = {2},
	urldate = {2025-11-20},
	journal = {Journal of Complex Networks},
	author = {Aicher, Christopher and Jacobs, Abigail Z. and Clauset, Aaron},
	month = jun,
	year = {2015},
	note = {arXiv:1404.0431 [stat]},
	pages = {221--248},
}

@article{mcdaid_clustering_2012,
	title = {Clustering in networks with the collapsed stochastic block model},
	url = {https://arxiv.org/abs/1203.3083},
	urldate = {2025-11-24},
	journal = {arXiv preprint arXiv:1203.3083},
	author = {McDaid, Aaron F. and Murphy, Thomas Brendan and Friel, Nial and Hurley, Neil J.},
	year = {2012},
}

@article{newman_estimating_2016,
	title = {Estimating the {Number} of {Communities} in a {Network}},
	volume = {117},
	copyright = {http://link.aps.org/licenses/aps-default-license},
	issn = {0031-9007, 1079-7114},
	url = {https://link.aps.org/doi/10.1103/PhysRevLett.117.078301},
	doi = {10.1103/PhysRevLett.117.078301},
	language = {en},
	number = {7},
	urldate = {2025-11-26},
	journal = {Physical Review Letters},
	author = {Newman, M. E. J. and Reinert, Gesine},
	month = aug,
	year = {2016},
	pages = {078301},
}

@article{cerqueira2020estimation,
  title={Estimation of the number of communities in the stochastic block model},
  author={Cerqueira, Andressa and Leonardi, Florencia},
  journal={IEEE Transactions on Information Theory},
  volume={66},
  number={10},
  pages={6403--6412},
  year={2020},
  publisher={IEEE}
}

@article{keribin_estimation_2015,
	title = {Estimation and selection for the latent block model on categorical data},
	volume = {25},
	issn = {1573-1375},
	url = {https://doi.org/10.1007/s11222-014-9472-2},
	doi = {10.1007/s11222-014-9472-2},
	language = {en},
	number = {6},
	urldate = {2026-03-29},
	journal = {Statistics and Computing},
	author = {Keribin, Christine and Brault, Vincent and Celeux, Gilles and Govaert, Gérard},
	month = nov,
	year = {2015},
	pages = {1201--1216},
}

@article{govaert_clustering_2003,
	title = {Clustering with block mixture models},
	volume = {36},
	copyright = {https://www.elsevier.com/tdm/userlicense/1.0/},
	issn = {00313203},
	url = {https://linkinghub.elsevier.com/retrieve/pii/S0031320302000742},
	doi = {10.1016/S0031-3203(02)00074-2},
	language = {en},
	number = {2},
	urldate = {2026-03-29},
	journal = {Pattern Recognition},
	author = {Govaert, Gérard and Nadif, Mohamed},
	month = feb,
	year = {2003},
	pages = {463--473},
}

@phdthesis{lomet_selectiondemodele_2012,
	type = {{PhD} thesis},
	title = {Sélection de modèle pour la classification croisée de données continues},
	school = {Université de Technologie de Compiègne},
	author = {Lomet, Aurore},
	year = {2012},
}

@article{govaert_latent_2010,
	title = {Latent {Block} {Model} for {Contingency} {Table}},
	volume = {39},
	issn = {0361-0926, 1532-415X},
	url = {http://www.tandfonline.com/doi/abs/10.1080/03610920903140197},
	doi = {10.1080/03610920903140197},
	language = {en},
	number = {3},
	urldate = {2026-03-30},
	journal = {Communications in Statistics - Theory and Methods},
	author = {Govaert, Gérard and Nadif, Mohamed},
	month = jan,
	year = {2010},
	pages = {416--425},
}

@article{La_et_al14-BayesOverlappingSBM,
author = {Pierre Latouche and Etienne Birmel{\'e} and Christophe Ambroise},
title = {{Model selection in overlapping stochastic block models}},
volume = {8},
journal = {Electronic Journal of Statistics},
number = {1},
publisher = {Institute of Mathematical Statistics and Bernoulli Society},
pages = {762 -- 794},
year = {2014},
doi = {10.1214/14-EJS903},
URL = {https://doi.org/10.1214/14-EJS903}
}

@article{biernacki_survey_2023,
	title = {A {Survey} on {Model}-{Based} {Co}-{Clustering}: {High} {Dimension} and {Estimation} {Challenges}},
	volume = {40},
	issn = {1432-1343},
	shorttitle = {A {Survey} on {Model}-{Based} {Co}-{Clustering}},
	url = {https://doi.org/10.1007/s00357-023-09441-3},
	doi = {10.1007/s00357-023-09441-3},
	language = {en},
	number = {2},
	urldate = {2026-03-30},
	journal = {Journal of Classification},
	author = {Biernacki, C. and Jacques, J. and Keribin, C.},
	month = jul,
	year = {2023},
	pages = {332--381},
}

@article{Li_et_al25-marglikestimreview,
  title={A Comparison of {M}onte {C}arlo Based Marginal Likelihood Estimators},
  author={Li, Aolan and Liu, Pang-Yu and Wang, Yu-Bo and Milkey, Analisa and Lewis, Paul O and Chen, Ming-Hui},
  journal={Wiley Interdisciplinary Reviews: Computational Statistics},
  volume={18},
  number={1},
  pages={e70058},
  year={2026},
  publisher={Wiley Online Library}
}

@article{Bi_et_al10-marglikestimLatentClassModel,
  title={Exact and {M}onte {C}arlo calculations of integrated likelihoods for the latent class model},
  author={Biernacki, Christophe and Celeux, Gilles and Govaert, G{\'e}rard},
  journal={Journal of Statistical Planning and Inference},
  volume={140},
  number={11},
  pages={2991--3002},
  year={2010},
  publisher={Elsevier}
}

@article{Ab18-SBMreview,
  title={Community detection and stochastic block models: recent developments},
  author={Abbe, Emmanuel},
  journal={Journal of Machine Learning Research},
  volume={18},
  number={177},
  pages={1--86},
  year={2018}
}

@article{Ho_et_al83-SBM_introduction,
  title={Stochastic blockmodels: First steps},
  author={Holland, Paul W and Laskey, Kathryn Blackmond and Leinhardt, Samuel},
  journal={Social networks},
  volume={5},
  number={2},
  pages={109--137},
  year={1983},
  publisher={Elsevier}
}

@article{Gh20-SBMcontractionrates,
  title={Posterior contraction rates for stochastic block models},
  author={Ghosh, Prasenjit and Pati, Debdeep and Bhattacharya, Anirban},
  journal={Sankhya A},
  volume={82},
  number={2},
  pages={448--476},
  year={2020},
  publisher={Springer}
}

@article{Le_et_al22-harmonicmeanSBM,
  title={Bayesian testing for exogenous partition structures in stochastic block models},
  author={Legramanti, Sirio and Rigon, Tommaso and Durante, Daniele},
  journal={Sankhya A},
  volume={84},
  number={1},
  pages={108--126},
  year={2022},
  publisher={Springer}
}

@article{NewtonRaftery1994,
  title={Approximate {B}ayesian inference with the weighted likelihood bootstrap},
  author={Newton, Michael A and Raftery, Adrian E},
  journal={Journal of the Royal Statistical Society: Series B},
  volume={56},
  pages={3--26},
  year={1994},
}

@article{GelfandDey1994,
  title={Bayesian model choice: asymptotics and exact calculations},
  author={Gelfand, Alan E and Dey, Dipak K},
  journal={Journal of the Royal Statistical Society: Series B (Methodological)},
  volume={56},
  pages={501--514},
  year={1994},
  publisher={Wiley Online Library}
}

@article{PaIl10-ECR_algo,
author = {Panagiotis Papastamoulis and George Iliopoulos},
title = {An Artificial Allocations Based Solution to the Label Switching Problem in {B}ayesian Analysis of Mixtures of Distributions},
journal = {Journal of Computational and Graphical Statistics},
volume = {19},
number = {2},
pages = {313--331},
year = {2010},
publisher = {ASA Website},
doi = {10.1198/jcgs.2010.09008},
URL = {https://doi.org/10.1198/jcgs.2010.09008},
eprint = { https://doi.org/10.1198/jcgs.2010.09008}
}

@article{Be_et_al03-quick_marglikestim_01,
 ISSN = {10170405, 19968507},
 URL = {http://www.jstor.org/stable/24307142},
 author = {Johannes Berkhof and Iven van Mechelen and Andrew Gelman},
 journal = {Statistica Sinica},
 number = {2},
 pages = {423--442},
 publisher = {Institute of Statistical Science, Academia Sinica},
 title = {A {B}AYESIAN APPROACH TO THE SELECTION AND TESTING OF MIXTURE MODELS},
 urldate = {2025-03-13},
 volume = {13},
 year = {2003}
}

@Manual{r,
    title = {R: A Language and Environment for Statistical Computing},
    author = {{R Core Team}},
    organization = {R Foundation for Statistical Computing},
    address = {Vienna, Austria},
    year = {2025},
    url = {https://www.R-project.org/},
}

@article{Ja_et_al14-forceatlas2,
  title={ForceAtlas2, a continuous graph layout algorithm for handy network visualization designed for the Gephi software},
  author={Jacomy, Mathieu and Venturini, Tommaso and Heymann, Sebastien and Bastian, Mathieu},
  journal={PloS one},
  volume={9},
  number={6},
  pages={e98679},
  year={2014},
  publisher={Public Library of Science San Francisco, USA}
}

@inproceedings{Ba_et_al09-gephi,
  title={Gephi: an open source software for exploring and manipulating networks},
  author={Bastian, Mathieu and Heymann, Sebastien and Jacomy, Mathieu},
  booktitle={Proceedings of the international AAAI conference on web and social media},
  volume={3},
  number={1},
  pages={361--362},
  year={2009}
}

@misc{minibatchkmeans_rpackage,
  title         = {Mini-Batch-K-Means},
  author        = {Siddharth Agrawal},
  year          = {2013},
  howpublished  = {R package},
  note           = {\url{https://github.com/siddharth-agrawal/Mini-Batch-K-Means}}
}

@Manual{topicmodelspackage,
    title = {topicmodels: Topic Models},
    author = {Bettina Grün and Kurt Hornik},
    year = {2024},
    note = {R package version 0.2-17},
    url = {https://CRAN.R-project.org/package=topicmodels},
    doi = {10.32614/CRAN.package.topicmodels},
  }

@article{GrKu11-topicmodels_rpackage,
  title={topicmodels: An R package for fitting topic models},
  author={Gr{\"u}n, Bettina and Hornik, Kurt},
  journal={Journal of statistical software},
  volume={40},
  pages={1--30},
  year={2011}
}

@article{Ya_et_al18-coreperiphery_twitter,
  title={Structural correlation between communities and core-periphery structures in social networks: evidence from twitter data},
  author={Yang, Jinfeng and Zhang, Min and Shen, Kathy Ning and Ju, Xiaofeng and Guo, Xitong},
  journal={Expert Systems with Applications},
  volume={111},
  pages={91--99},
  year={2018},
  publisher={Elsevier}
}

@Manual{packageSnowballC,
    title = {SnowballC: Snowball Stemmers Based on the C 'libstemmer' UTF-8 Library},
    author = {Milan Bouchet-Valat},
    year = {2023},
    note = {R package version 0.7.1},
    url = {https://CRAN.R-project.org/package=SnowballC},
    doi = {10.32614/CRAN.package.SnowballC},
  }

@article{Ti94-MarkovChainsExploringPosteriors,
  title={Markov chains for exploring posterior distributions},
  author={Tierney, Luke},
  journal={the Annals of Statistics},
  pages={1701--1728},
  year={1994},
  publisher={JSTOR}
}

@Manual{tmpackage,
    title = {tm: Text Mining Package},
    author = {Ingo Feinerer and Kurt Hornik},
    year = {2025},
    note = {R package version 0.7-16},
    url = {https://CRAN.R-project.org/package=tm},
    doi = {10.32614/CRAN.package.tm},
  }

@Manual{clusterr_rpackage,
    title = {{ClusterR}: Gaussian Mixture Models, K-Means, Mini-Batch-Kmeans, K-Medoids and Affinity Propagation Clustering},
    author = {Lampros Mouselimis},
    year = {2024},
    note = {R package version 1.3.3},
    url = {https://CRAN.R-project.org/package=ClusterR}
}

@inproceedings{ke_et_al06-infinite_relational_model,
  title={Learning systems of concepts with an infinite relational model},
  author={Kemp, Charles and Tenenbaum, Joshua B and Griffiths, Thomas L and Yamada, Takeshi and Ueda, Naonori},
  booktitle={AAAI},
  volume={3},
  pages={5},
  year={2006}
}

@article{Pa_et_al26-collapsed_SBM_variations,
  title={Collapsed Structured Block Models for Community Detection in Complex Networks},
  author={Papamichalis, Marios and Ruane, Regina},
  journal={arXiv preprint arXiv:2601.02828},
  year={2026}
}

@article{WyFr12-collapsed_infinite_LBM,
  title={Block clustering with collapsed latent block models},
  author={Wyse, Jason and Friel, Nial},
  journal={Statistics and Computing},
  volume={22},
  number={2},
  pages={415--428},
  year={2012},
  publisher={Springer}
}

@article{FrWy12-review_marglikestims,
  title={Estimating the evidence--a review},
  author={Friel, Nial and Wyse, Jason},
  journal={Statistica Neerlandica},
  volume={66},
  number={3},
  pages={288--308},
  year={2012},
  publisher={Wiley Online Library}
}

@Article{ollamarRpackage,
    title = {ollamar: An R package for running large language models},
    author = {Hause Lin and Tawab Safi},
    journal = {PsyArXiv},
    year = {2024},
    month = {aug},
    publisher = {OSF},
    doi = {10.31234/osf.io/zsrg5},
    url = {https://doi.org/10.31234/osf.io/zsrg5},
  }

@incollection{Ma_et_al05-allocationRelabelling,
  author    = {Jean-Michel Marin and Kerrie L. Mengersen and Christian P. Robert},
  title     = {Bayesian Modelling and Inference on Mixtures of Distributions},
  booktitle = {Handbook of Statistics},
  editor     = {D. K. Dey and C. R. Rao},
  volume     = {25},
  pages      = {459--507},
  year       = {2005},
  publisher  = {Elsevier},
  doi        = {10.1016/S0169-7161(05)25016-2}
}

@article{RoWa14-relabellingAlgo,
 ISSN = {10618600},
 URL = {http://www.jstor.org/stable/43305713},
 author = {Carlos E. Rodríguez and Stephen G. Walker},
 journal = {Journal of Computational and Graphical Statistics},
 number = {1},
 pages = {25--45},
 publisher = {[American Statistical Association, Taylor & Francis, Ltd., Institute of Mathematical Statistics, Interface Foundation of America]},
 title = {Label Switching in Bayesian Mixture Models: Deterministic Relabeling Strategies},
 urldate = {2026-06-24},
 volume = {23},
 year = {2014}
}

@Manual{thamesblock_rpackage,
    title = {thamesblock: Truncated Harmonic Mean Estimator of the Marginal Likelihood for
Block Models},
    author = {Martin Metodiev},
    year = {2026},
    note = {R package version 0.1.0},
    url = {https://CRAN.R-project.org/package=thamesblock},
    doi = {10.32614/CRAN.package.thamesblock},
  }

@book{robert2007bayesian,
  title={The Bayesian choice: from decision-theoretic foundations to computational implementation},
  author={Robert, Christian P},
  year={2007},
  publisher={Springer}
}

@article{OiRi24-public_review,
  title={Where is the public of ‘networked publics’? A critical analysis of the theoretical limitations of online publics research},
  author={Ojala, Markus and Ripatti-Torniainen, Leena},
  journal={European Journal of Communication},
  volume={39},
  number={2},
  pages={145--160},
  year={2024},
  publisher={SAGE Publications Sage UK: London, England}
}

\newcommand{\suppAtext}{Supplement A: proofs}
\newcommand{\textAone}{A.1 Proving that RIS estimators are simulation-consistent and asymptotically normal}
\newcommand{\textAtwo}{A.2 Proving that the ChibPartition estimator is a reciprocal importance sampling estimator}
\newcommand{\textAthree}{A.3 Proving that a HPD region is optimal out of all symmetric truncation sets}
\newcommand{\textAfour}{A.4 Proving an analytic expression of the variance of the THAMES under variational assumptions}
\newcommand{\textAfive}{A.5 Proving that only a part of the permutations is required to compute the THAMES}
\newcommand{\textAsix}{A.6 Proving an extension of Nobile's formula for the SBM}

\newcommand{\suppBtext}{Supplement B: details on the algorithm}
\newcommand{\textBone}{B.1 Choosing the grid for $\alpha$}
\newcommand{\textBtwo}{B.2 Determining the optimal radius of the truncation set}
\newcommand{\textBthree}{B.3 Details on the computation of the variational approximations of the marginal likelihood}

\newcommand{\suppCtext}{Supplement C: details on the datasets}
\newcommand{\textCone}{C.1 Preprocessing and estimating the marginal likelihood of the COP28 dataset}
\newcommand{\textCtwo}{C.2 Estimating the model parameters of the COP28 dataset}
\newcommand{\textCthree}{C.3 A textual analysis of the COP28 dataset}
\newcommand{\textCfour}{C.4 Comparing the THAMES to the ICL on the COP28 dataset}
\newcommand{\textCfive}{C.5 A sociological interpretation of the results}
\newcommand{\textCsix}{C.6 The specific dataset used in Section 4.3}

\newcommand{\suppDtext}{Supplement D: defining the THAMES for the latent block model}
\noindent
\newpage
\begin{center}
{\large \textbf{Table of contents}}\\
\end{center}

\begin{flushleft}
{\large\textbf{\suppAtext}} \hfill {\color{blue}\hyperref[sec: suppA]{\pageref{sec: suppA}}} \\
{ \textAone} \hfill {\color{blue}\hyperref[ssec: Aone]{\pageref{ssec: Aone}}} \\
{\textAtwo} \hfill {\color{blue}\hyperref[ssec: Atwo]{\pageref{ssec: Atwo}}} \\
{\textAthree} \hfill {\color{blue}\hyperref[ssec: Athree]{\pageref{ssec: Athree}}} \\
{\textAfour} \hfill {\color{blue}\hyperref[ssec: Afour]{\pageref{ssec: Afour}}} \\
{ \textAfive} \hfill {\color{blue}\hyperref[ssec: Afive]{\pageref{ssec: Afive}}} \\
{ \textAsix} \hfill {\color{blue}\hyperref[ssec: Asix]{\pageref{ssec: Asix}}} \\
\textbf{\suppBtext} \hfill {\color{blue}\hyperref[sec: suppB]{\pageref{sec: suppB}}}\\ 
{\textBone} \hfill {\color{blue}\hyperref[ssec: Bone]{\pageref{ssec: Bone}}} \\
{\textBtwo} \hfill {\color{blue}\hyperref[ssec: Btwo]{\pageref{ssec: Btwo}}} \\
{\textBthree} \hfill {\color{blue}\hyperref[ssec: Bthree]{\pageref{ssec: Bthree}}} \\
\textbf{\suppCtext} \hfill {\color{blue}\hyperref[sec: suppC]{\pageref{sec: suppC}}} \\
{\textCone} \hfill {\color{blue}\hyperref[ssec: Cone]{\pageref{ssec: Cone}}} \\
{\textCtwo} \hfill {\color{blue}\hyperref[ssec: Ctwo]{\pageref{ssec: Ctwo}}} \\
{\textCthree} \hfill {\color{blue}\hyperref[ssec: Cthree]{\pageref{ssec: Cthree}}} \\
{\textCfour} \hfill {\color{blue}\hyperref[ssec: Cfour]{\pageref{ssec: Cfour}}} \\
{\textCfive} \hfill {\color{blue}\hyperref[ssec: Cfive]{\pageref{ssec: Cfive}}} \\
{\textCsix} \hfill {\color{blue}\hyperref[ssec: Csix]{\pageref{ssec: Csix}}} \\
\textbf{\suppDtext} \hfill {\color{blue}\hyperref[sec: suppD]{\pageref{sec: suppD}}} \\
\end{flushleft}
\newpage
\setcounter{theorem}{0}
\section*{\suppAtext \label{sec: suppA}}

Please note that the proof in Supplement {\color{blue}\hyperref[ssec: Atwo]{A1}} requires standard conditions on the convergence of the MCMC chain such as ergodicity. Supplement {\color{blue}\hyperref[ssec: Athree]{A3}} requires independence of the MCMC sample and the assumption that there are no two points with exactly equal probabilities (up to permutations), which are much stronger conditions. These conditions can usually only be obtained approximately, for example by thinning the MCMC chain. Supplement {\color{blue}\hyperref[ssec: Afour]{A4}} requires even stronger conditions on the variational assumptions and the concentration of the posterior distribution. However, we would like to emphasise that the simulation-consistency and asymptotic normality of the THAMES (and indeed any RIS estimator) do \textbf{not} depend on these conditions and that the results in Supplements {\color{blue}\hyperref[ssec: Athree]{A3}} and {\color{blue}\hyperref[ssec: Afour]{A4}} are only used to improve the speed of convergence of our estimator, not to assure convergence itself.

\subsection*{\textAone\label{ssec: Aone}}

A RIS estimator is defined as \begin{align*}
    \hat{Z}^{-1}_{\text{RIS}}(G)=\frac{1}{T}\sum^T_{t=1}\frac{h(C^{(t)})}{p(Y|G,C^{(t)})p(C^{(t)}|G)}.
\end{align*} The goal is to show that this estimator is simulation-consistent, i.e., that $\hat{Z}^{-1}_{\text{RIS}}(G)\stackrel{T\to\infty}\rightarrow \hat{Z}^{-1}(G)$, where convergence is to be understood in the sense of almost sure convergence, and asymptotically normal, in the sense that $\sqrt{T}(\hat{Z}^{-1}_{\text{RIS}}(G)-\hat{Z}^{-1}(G))$ converges in distribution to a normal distribution.

These results apply directly if the MCMC chain is ergodic, as shown by \citet{Ti94-MarkovChainsExploringPosteriors} for a more general case.

\begin{theorem}
Suppose that $h$ is a function such that $\sum_Ch(C)=1.$ If $C^{(1)},\dots,C^{(T)}$ is an ergodic Markov chain with limiting distribution $p(C|G,Y)$, the posterior distribution of $C$ given $Y$ and $G$, then $\hat{Z}^{-1}_{\text{RIS}}(G)$ almost surely converges to $Z^{-1}(G)$. If $C^{(1)},\dots,C^{(T)}$ is additionally ergodic of degree 2, then $\sqrt{T}(\hat{Z}^{-1}_{\textup{RIS}}(G)-Z^{-1}(G))$ converges in distribution to a centred normal distribution.
\end{theorem}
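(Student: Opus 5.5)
The plan is to write the RIS estimator as an ergodic average $\frac{1}{T}\sum_{t=1}^T f(C^{(t)})$ with
\begin{align*}
f(C)=\frac{h(C)}{p(Y|G,C)p(C|G)},
\end{align*}
and then apply the ergodic theorem and the Markov chain central limit theorem, in the form given by \citet{Ti94-MarkovChainsExploringPosteriors}, to $f$.

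The first step is to identify the limit of the average. Bayes' rule together with Equation \eqref{eq: marglik_definition} gives $p(C|G,Y)=p(Y|G,C)p(C|G)/Z(G)$. Hence, wherever the posterior is positive,
\begin{align*}
\mathbb{E}_{p(C|G,Y)}[f(C)]=\sum_C \frac{h(C)}{p(Y|G,C)p(C|G)}\cdot\frac{p(Y|G,C)p(C|G)}{Z(G)}=\frac{1}{Z(G)}\sum_C h(C)=Z^{-1}(G),
\end{align*}
using $\sum_C h(C)=1$. For this identity the support of $h$ must lie inside the posterior support, so that no mass of $h$ is lost and no term divides by zero. In the SBM with the conjugate priors \eqref{eq: dirichlet_prior}--\eqref{eq: beta_prior}, the Gamma-function expressions for $p(C|G)$ and $p(Y|G,C)$ are strictly positive for every $C$. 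So the posterior charges every allocation matrix and this condition holds automatically.

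The state space of $C$ is finite, with at most $G^n$ elements. Consequently $f$ is bounded, and in particular $\mathbb{E}_{p(C|G,Y)}|f|<\infty$ and $\mathbb{E}_{p(C|G,Y)}f^2<\infty$. The first conclusion of the statement then follows directly from the ergodic theorem for Harris-ergodic chains with invariant distribution $p(C|G,Y)$: the average converges almost surely to $Z^{-1}(G)$ for any initial distribution.

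For asymptotic normality, I would invoke the Markov chain CLT. Ergodicity of degree 2 in Tierney's sense, together with $f\in L^2(p(\cdot|G,Y))$, yields
\begin{align*}
\sqrt{T}\bigl(\hat Z^{-1}_{\text{RIS}}(G)-Z^{-1}(G)\bigr)\to N(0,\sigma_f^2),
\end{align*}
where
\begin{align*}
\sigma_f^2=\mathrm{Var}(f)+2\sum_{k\ge1}\mathrm{Cov}\bigl(f(C^{(0)}),f(C^{(k)})\bigr)<\infty
\end{align*}
under stationarity. On a finite state space an irreducible aperiodic chain is in fact uniformly ergodic, so the CLT holds for every bounded $f$, and the degree-2 hypothesis is more than sufficient.

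The main obstacle is essentially bookkeeping rather than depth. One must check that the hypotheses of the cited ergodic theorem and CLT really match what is assumed here, namely ergodicity with limiting distribution $p(C|G,Y)$ and ergodicity of degree 2. One must also make sure the integrability and support conditions on $f$ are addressed, and the finiteness and positivity observations above are meant to close that gap. Finally, the same argument applies when $h$ is chosen data-dependently from the first half of the chain, provided one conditions on that half and averages over the independent second half.
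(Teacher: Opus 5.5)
Your proposal is correct and follows essentially the same route as the paper. The paper also simply invokes Theorems 3 and 4 of \citet{Ti94-MarkovChainsExploringPosteriors}, after checking exactly as you do that $h(C)/\{p(Y|G,C)p(C|G)\}$ is bounded on the finite state space and has posterior mean $Z^{-1}(G)$ by Bayes' rule and $\sum_C h(C)=1$. Your explicit positivity check and your uniform-ergodicity remark are sound refinements the paper leaves implicit; only your closing aside is loose, since the two halves of a Markov chain are not independent, but that aside lies outside the statement.
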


\begin{proof}
Follows directly from \citet[Theorems 3 and 4,][]{Ti94-MarkovChainsExploringPosteriors}.

The only additional conditions for these theorems are that  the expected value of $\frac{h(C)}{p(Y|G,C)p(C|G)}$ is equal to $Z^{-1}(G)$ if $C\sim p(C|Y,G)$ and that $\frac{h(C)}{p(Y|G,C)p(C|G)}$ is bounded with respect to  $C$. The latter condition is trivially true since $C$ can only take finitely many values. The former is also true since\begin{align*}
    E\left[\frac{h(C)}{p(Y|G,C)p(C|G)}\right] = \sum_C \frac{h(C)}{p(Y|G,C)p(C|G)}p(C|G,Y)=\sum_C \frac{h(C)}{Z(G)}=Z^{-1}(G)
\end{align*} due to $\sum_Ch(C)=1$ and Bayes' rule.
\end{proof}

The standard deviation of the THAMES, which depends on the choice of $h$, can for example be estimated by approximating the MCMC sequence via a first-order autoregressive model, as outlined in \citeauthor{metodiev_easily_2025} (2025b).

It should be noted that the THAMES estimator for the SBM,\begin{align*}
    \hat{Z}_{\text{THAMES}}^{-1}(G)=\frac{1}{G!}\sum^{G!}_{o=1}\frac{1}{T/2}\sum^T_{\substack{t=T/2+1\\P_o(C^{(t)})\in A}}\frac{1/|A|}{p(C^{(t)}|G)p(Y|G,C^{(t)})},
\end{align*} is also a RIS estimator, even after symmetrising. This can be seen by defining \begin{align*}
    h_{\text{THAMES}}(C)=\frac{1}{G!}\sum^{G!}_{o=1}\frac{\mathds{1}_{P_o(A)}(C)}{|A|},
\end{align*}which is a mixture of uniform distributions on the sets $P_1(A),\dots,P_{G!}(A)$ with all mixture weights set to $\frac{1}{G!}$, and by swapping the order of the sums, such that \begin{align*}
    \hat{Z}_{\text{THAMES}}^{-1}(G)&=\frac{1}{T/2}\sum^T_{t=T/2+1}\frac{\frac{1}{G!}\sum^{G!}_{o=1}\mathds{1}_A(P_o(C^{(t)}))/|A|}{p(C^{(t)}|G)p(Y|G,C^{(t)})}\\&=\frac{1}{T/2}\sum^T_{t=T/2+1}\frac{h_{\text{THAMES}}(C^{(t)})}{p(C^{(t)}|G)p(Y|G,C^{(t)})}.
\end{align*} Hence, the THAMES is a RIS on the second half of the posterior sample.

\subsection*{\textAtwo\label{ssec: Atwo}}

In \citet{hairault_evidence_2022}, the ChibPartition estimator is defined as\begin{align}\label{eq: chibpartition_hairaultetal}
    \hat{Z}_{\text{ChibPartition}}(G)=\frac{p_G(Y|\mathcal{C}^0)\pi_G(\mathcal{C}^0)}{\hat{\pi}_G(\mathcal{C}^0|Y)},
\end{align} where $\mathcal{C}^0$ is a partition of the data points, meaning that $\mathcal{C}^0$ assigns every point to a cluster, ignoring labels of the clusters (e.g.\ the labels (1,1,2,3) and (3,3,2,1) result in the same partition of the data points). Let $\text{partition}(C)$ be a function that maps an allocation vector matrix $C$ to its partition. Then $\pi_G$ is defined as \begin{align}\label{eq: pig_hairaultetal}
    \pi_G(\text{partition}(C))=\frac{G!}{(G-G_+)!}p(C|G),
\end{align} where $G_+$ denotes the number of non-empty clusters implied by $C$, \begin{align}\label{eq: pyf_hairaultetal}
    p_G(Y|\text{partition}(C))=p(Y|G,C),
\end{align} and \begin{align}\label{eq: hatpig_hairaultetal}
    \hat{\pi}_G(\mathcal{C}^0)=\frac{1}{T}\sum^T_{t=1}\mathds{1}_{\{\mathcal{C}^0=\text{partition}(C^{(t)})\}},
\end{align} where $\mathds{1}$ is an indicator function. Also, \begin{align}\label{eq: cmap_hairaultetal}
    \mathcal{C}^0=\text{partition}(C_{\text{MAP}}),
\end{align} where $C_{\text{MAP}}$ is the value of an MCMC sample $C^{(1)},\dots,C^{(T)}$ that maximises $p_G(Y|\text{partition}(C^{(t)}))\pi_G(\text{partition}(C^{(t)}))$.

It turns out that this formulation is equivalent to the formulation given in our article. To show this, let us first note that $C_{\text{MAP}}$ is the same here and in our article (an allocation vector matrix within $C^{(1)},\dots,C^{(T)}$ which maximises the unnormalised posterior probability $p(\hat{C}_{\text{MAP}}|G)p(Y|G,\hat{C}_{\text{MAP}})$), since $p_G(Y|\text{partition}(C^{(t)}))\pi_G(\text{partition}(C^{(t)}))$ is proportional to $p(Y|G,C^{(t)})p(C^{(t)}|G)$. In fact \begin{align}\label{mapequality_hairaultetal}
    p(Y|G,C^{(t)})p(C^{(t)}|G)=p(Y|G,C_{MAP})p(C_{MAP}|G)
\end{align} for any $C^{(t)}$ for which $P_o(C^{(t)})=C_{MAP}$ for some $o\in\{1,\dots,G!\}$, since the collapsed likelihood and prior are invariant with respect to label switching.

It is also the case that\begin{align}\label{eq: indicatorfunc_hairaultetal}
    \mathds{1}_{\{C^{0}=\text{partition}(C^{(t)})\}}=\frac{1}{(G-G_+)!}\sum^{G!}_{o=1}\mathds{1}_{\{C_{MAP}=P_o(C^{(t)})\}}=\frac{1}{(G-G_+)!}\sum^{G!}_{o=1}\mathds{1}_{A_{\text{MAP}}}(C^{(t)}),
\end{align} where $A_{\text{MAP}}$ denotes the set that contains $\hat{C}_{\text{MAP}}$ and all of its label switched versions, since the two sides of the first equality are either both 0 (in the case that there is a permutation $P_o(C^{(t)})$ that is not equal to $C_{MAP}$) or both one, since the existence of one permutation for which $P_o(C^{(t)})=C_{MAP}$ implies the existence of exactly $(G-G_+)!$ such permutations, which are all permutations that can be applied to the empty components of $C^{(t)}$.

Inserting Equation \eqref{eq: pig_hairaultetal}, Equation \eqref{eq: pyf_hairaultetal}, Equation \eqref{eq: cmap_hairaultetal}, Equation \eqref{mapequality_hairaultetal} and Equation \eqref{eq: indicatorfunc_hairaultetal} into Equation \eqref{eq: chibpartition_hairaultetal} gives  \begin{align*}
    \hat{Z}^{-1}_{\text{ChibPartition}}(G)=\frac{1}{T}\sum^T_{\substack{t=1,\\C^{(t)}\in A_{\text{MAP}}}}\frac{1/G!}{p(C^{(t)}|G)p(Y|G,C^{(t)})},
\end{align*} the same expression that was given in the main article.

\subsection*{\textAthree\label{ssec: Athree}}

This proof is an extension of the proof provided in \citeauthor{metodiev2025mixturemodels} (2025a). The original proof showed this result for continuous parameters, while we show the result for discrete parameters.

\begin{theorem}
Let $C^{(T/2+1)},\dots,C^{(T)}\sim p(C|G,Y)$ be an independent sample from the posterior distribution. If no two allocation vector matrices have the same probability (up to permutations), meaning that for any two allocation vector matrices $C^1,C^2$, \begin{align*}
    p(C^1|G,Y)=p(C^2|G,Y)\Rightarrow\exists o\in\{1,\dots,G!\}:P_o(C^1)=C^2,
\end{align*} then there exists an $\alpha\in(0,1]$ such that the variance of $\hat{Z}_{\textup{THAMES}}^{-1}(G)$, conditional on $Y$, is minimised over all symmetric sets by $H_\alpha$. A set $A$ is symmetric if, and only if the image $P_o(A)=A$ for all $o\in\{1,\dots,G!\}$.
\end{theorem}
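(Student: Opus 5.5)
We reduce the variance to an explicit function of $A$ and then minimise it by an exchange (rearrangement) argument.

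\textbf{Step 1: the variance.} By the RIS representation in Supplement A.1, for symmetric $A$ we have $P_o(A)=A$ for every $o$, so $h_{\text{THAMES}}(C)=\mathds{1}_A(C)/|A|$. Write $\pi(C)=p(C|G,Y)$ and $f(C)=p(Y|G,C)p(C|G)=Z(G)\pi(C)$. The draws are independent, so the conditional variance of $\hat{Z}^{-1}_{\text{THAMES}}(G)$ is $\frac{1}{T/2}$ times the variance of a single term. A single term has mean $Z^{-1}(G)$, which does not depend on $A$, and second moment
\begin{align*}
\sum_{C\in A}\frac{\pi(C)}{|A|^2 f(C)^2}=\frac{1}{Z(G)^2}\,\frac{1}{|A|^2}\sum_{C\in A}\frac{1}{\pi(C)}.
\end{align*}
Minimising the variance over symmetric $A$ is therefore equivalent to minimising
\begin{align*}
V(A)=|A|^{-2}\sum_{C\in A}\pi(C)^{-1}.
\end{align*}

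\textbf{Step 2: fixed size.} Fix the cardinality $m=|A|$. Then $V(A)$ is minimised by choosing the elements with the largest $\pi$, since $1/\pi$ is decreasing. If $A$ contained some $C$ but missed a $C'$ with $\pi(C')>\pi(C)$, swapping them would strictly decrease $V$.

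The symmetry constraint is handled through orbits. Symmetric sets are exactly unions of orbits $\{P_o(C):o=1,\dots,G!\}$. The collapsed prior and likelihood are label invariant, so $\pi$ is constant on each orbit. Under the no-ties hypothesis, distinct orbits carry distinct values of $\pi$, so the orbits can be strictly ordered by their posterior value. The swap argument applied to whole orbits then shows the following: among symmetric sets consisting of a given collection of orbit sizes, the optimum is an upper level set $\{C:\pi(C)>q\}$.

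\textbf{Step 3: the main obstacle.} Orbits have different sizes: $G!/(G-G_+)!$ when there are empty clusters. A union of $k$ orbits is therefore not determined by its cardinality alone, and a plain swap may change $|A|$.

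The plan is to argue directly over all symmetric sets. Let $A^\ast$ be a minimiser; one exists because there are finitely many symmetric sets. Suppose $A^\ast$ is not an upper level set. Then there is an orbit $O'\not\subset A^\ast$ and an orbit $O\subset A^\ast$ with $\pi(O')>\pi(O)$.

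\begin{itemize}
\item If $|O'|\le |O|$, replace $O$ by $O'$. The sum $\sum_{C\in A}\pi(C)^{-1}$ drops by at least $|O|(1/\pi(O)-1/\pi(O'))$ while $|A|$ shrinks. This needs care, since shrinking $|A|$ raises $|A|^{-2}$.
\item If $|O'|>|O|$, add $O'$ instead. Here I would use the first-order optimality of $A^\ast$. Adding an orbit $O''$ decreases $V$ iff $1/\pi(O'')<$ (a threshold depending only on $A^\ast$, roughly $2V(A^\ast)|A^\ast|$ up to orbit-size corrections). Removing an orbit decreases $V$ iff its $1/\pi$ exceeds a similar threshold.
\end{itemize}

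Optimality of $A^\ast$ thus forces every orbit inside $A^\ast$ to have $1/\pi$ below an (approximate) threshold and every orbit outside to have $1/\pi$ above it. Because $1/\pi$ is strictly monotone across orbits, this makes $A^\ast$ an upper level set $\{\pi>q\}=H_\alpha$, with $\alpha$ equal to the posterior mass of $A^\ast$.

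I expect the delicate part to be making these add/remove threshold inequalities exact for unequal orbit sizes. If needed, I would fall back on comparing $A^\ast$ with the level set of the same posterior mass via an integral (rearrangement) inequality for the nonnegative function $1/\pi$.

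Finally, $\alpha>0$ because $A^\ast$ is nonempty; an empty $A$ makes the estimator undefined. The choice of $q_\alpha$ as the largest constant with posterior mass at least $\alpha$ matches the definition of $H_\alpha$ precisely because there are no ties.
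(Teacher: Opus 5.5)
Your argument is correct, and it takes a different route from the paper's. The paper fixes $m=|A|$ and takes $H_\alpha$ to be the $m$ configurations of largest posterior probability. It then makes a single exchange at fixed cardinality: since $|A\setminus H_\alpha|=|H_\alpha\setminus A|$, and $1/f$ exceeds $1/q_\alpha$ on $A\setminus H_\alpha$ but is at most $1/q_\alpha$ on $H_\alpha\setminus A$, one gets $V(H_\alpha)\le V(A)$. This is essentially your Step 2. It tacitly assumes the top-$|A|$ configurations form a union of whole orbits. That holds when every orbit has size $G!$, but it can fail once configurations with two or more empty clusters exist, since those orbits have size $G!/(G-G_+)!<G!$. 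This is exactly the obstacle you raise in Step 3, and your global-minimiser argument avoids it.

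The step you flag as delicate closes exactly. With $N=|A^\ast|$ and $V^\ast=V(A^\ast)$, a two-line computation gives
\begin{align*}
V(A^\ast\cup O')\ge V^\ast &\iff \pi(O')^{-1}\ge V^\ast\,(2N+|O'|),\\
V(A^\ast\setminus O)\ge V^\ast &\iff \pi(O)^{-1}\le V^\ast\,(2N-|O|).
\end{align*}
When $A^\ast=O$ is a single orbit, the second condition holds with equality, so the empty set never needs to be considered. Hence $\pi(O)^{-1}<2NV^\ast<\pi(O')^{-1}$ for every orbit $O\subset A^\ast$ and every orbit $O'\not\subset A^\ast$, that is, $A^\ast=\{C:\pi(C)>1/(2NV^\ast)\}$. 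Three consequences for your write-up:
\begin{itemize}
\item No case split on $|O'|$ versus $|O|$ is needed. Drop your first bullet: its replacement can increase $V$. For $G=3$, replacing a size-$6$ orbit by a size-$3$ orbit of slightly larger $\pi$ nearly doubles $V$.
\item Drop the Step 2 claim about a fixed collection of orbit sizes. Swapping there only yields the top orbits within each size class, not an upper level set.
\item You do not actually need the no-ties hypothesis. The separation is strict and $\pi(C)>0$ for every $C$, so $A^\ast$ is the smallest upper level set of mass $\pi(A^\ast)$, hence equals $H_\alpha$ with $\alpha=\pi(A^\ast)$.
\end{itemize}

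What each route buys: the paper's is shorter and gives a same-size comparison for each $A$. Yours is what actually handles unequal orbit sizes, and it produces an explicit threshold.
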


\begin{proof}
If $A$ is symmetric, then $P_o(C^{(t)})\in A$ if, and only if $C^{(t)}\in A$. Hence \begin{align*}
    \hat{Z}_{\textup{THAMES}}^{-1}(G)&=\frac{1}{G!}\sum^{G!}_{o=1}\frac{1}{T/2}\sum^T_{\substack{t=T/2+1\\P_o(C^{(t)})\in A}}\frac{1/|A|}{p(C^{(t)}|G)p(Y|G,C^{(t)})}
    \\&=\frac{1}{T/2}\sum^T_{\substack{t=T/2+1\\C^{(t)}\in A}}\frac{1/|A|}{p(C^{(t)}|G)p(Y|G,C^{(t)})},
\end{align*} which is proportional to \begin{align*}
    \sum^T_{\substack{t=T/2+1\\C^{(t)}\in A}}\frac{1/|A|}{p(C^{(t)}|G)p(Y|G,C^{(t)})}.
\end{align*} Due to the i.i.d assumption and unbiasedness, this has variance proportional to \begin{align*}
    \text{Var}\left[\left.\sum^T_{\substack{t=T/2+1\\C^{(t)}\in A}}\frac{1/|A|}{p(C^{(t)}|G)p(Y|G,C^{(t)})}\right|G,Y\right]\propto \text{E}\left[\left.\left(\frac{\mathds{1}_{A}(C^{(1)})/|A|}{p(C^{(1)}|G)p(Y|G,C^{(1)})}\right)^2\right|G,Y\right],
\end{align*} where $\mathds{1}$ is an indicator function. The expectation simplifies to \begin{align*}
    \text{E}\left[\left.\left(\frac{\mathds{1}_{A}(C^{(1)})/|A|}{p(C^{(1)}|G)p(Y|G,C^{(1)})}\right)^2\right|G,Y\right]&=\sum_C \left(\frac{\mathds{1}_{A}(C)/|A|}{p(C|G)p(Y|G,C)}\right)^2p(C|G,Y)
    \\&=\frac{1}{p(Y|G)}\sum_C \frac{\mathds{1}_{A}(C)/|A|^2}{p(C|G)p(Y|G,C)},
\end{align*} due to Bayes' rule.

Since there are no posterior ties, any sequence of $C^1,\dots,C^{T_\alpha}$. whose values are ordered such that \begin{align*}
    &p(C^1|Y)=\max_C p(C|G,Y),\\
    &p(C^2|Y)=\max_{C\backslash \{C^1\}}p(C|G,Y),\\
    &\vdots\\
    &p(C^{T_\alpha}|G,Y)=\max_{C\backslash\{C^1,\dots,C^{T_\alpha-1}\}}p(C|G,Y),
\end{align*} defines a HPD region $H_\alpha=\{C^1,\dots,C^{T_\alpha}\}$ for any value $T_\alpha$ where $\alpha$ denotes the probability of this set. Let $T_\alpha=|A|$. Then \begin{align*}
    |H_\alpha|=|A|
\end{align*} and \begin{align*}
    \sum_{C\in H_\alpha}p(C|G,Y)\geq \sum_{C\in A}p(C|G,Y)\Leftrightarrow \sum_{C\in H_\alpha}p(Y|G,C)p(C|G)\geq \sum_{C\in A}p(Y|G,C)p(C|G)
\end{align*} by definition. It follows that \begin{align*}
    \text{Var}[\hat{Z}_{\text{THAMES}}^{-1}(G)&|G,Y]\propto\frac{1}{|A|^2}\sum_{C\in A} \frac{1}{p(C|G)p(Y|G,C)}\\&=\frac{1}{|H_\alpha|^2}\sum_{C\in A\cap H_\alpha} \frac{1}{p(C|G)p(Y|G,C)}\frac{1}{|H_\alpha|^2}\sum_{C\in A\backslash H_\alpha} \frac{1}{p(C|G)p(Y|G,C)}\\&\geq \frac{1}{|H_\alpha|^2}\left(\sum_{C\in A\cap H_\alpha} \frac{1}{p(C|G)p(Y|G,C)}+\frac{1}{q_\alpha}|A\backslash H_\alpha|\right).
\end{align*} This is due to the fact that $A\backslash H_\alpha=\{C\in A|p(C|G)p(Y|G,C)< q_\alpha\}.$ Analogously \begin{align*}
   &\frac{1}{|H_\alpha|^2}\left(\sum_{C\in A\cap H_\alpha} \frac{1}{p(C|G)p(Y|G,C)}+\frac{1}{q_\alpha}|A\backslash H_\alpha|\right)=\\&\frac{1}{|H_\alpha|^2}\left(\sum_{C\in A\cap H_\alpha} \frac{1}{p(C|G)p(Y|G,C)}+\frac{1}{q_\alpha}|H_\alpha\backslash A|\right)\geq\\&\frac{1}{|H_\alpha|^2}\sum_{C\in A\cap H_\alpha}\frac{1}{p(C|G)p(Y|G,C)}+\frac{1}{|H_\alpha|^2}\sum_{H_\alpha\backslash A} \frac{1}{p(C|G)p(Y|G,C)}=\\&\frac{1}{|H_\alpha|^2}\sum_{C\in H_\alpha} \frac{1}{p(C|G)p(Y|G,C)},
\end{align*}where we end up with the same expression, but with $A$ replaced by $H_\alpha$. Thus the variance of the THAMES is decreased or stays the same when choosing $H_\alpha$ instead of $A$. It follows that the variance of the THAMES is minimised by a HPD region.
\end{proof}

\subsection*{\textAfour\label{ssec: Afour}}

In addition to the assumptions in Theorem \ref{thm: optimal_hpd_region}, the following proposition also requires the variational assumptions to hold, and that the relabelled estimator $\hat{z}$ as well as the truncation region $E_{\hat{C},\hat{r}}$ are highly concentrated around one label set. Again, these assumptions rarely hold exactly, but may well hold approximately as the size of the data goes to infinity, since the posterior concentrates around a single point and its label switched versions in this case \citep{celisse_consistency_2012, mariadassou_convergence_2015}.

\textbf{Proposition 1} Let $P_1,\dots,P_{G!}$ denote all possible permutations over any permutation matrix $C$, where $P_1$ denotes the identity. Assume that $C^{(1)},\dots,C^{(T)}$ independently follow the posterior distribution, which is equal to the variational approximation that was randomly permuted due to label switching, \begin{align*}
    p(C|G,Y)=\frac{1}{G!}\sum^{G!}_{o=1}\tilde{h}(P_o(C))=\frac{1}{G!}\sum^{G!}_{o=1}\prod^{n}_{i=1}\textup{Multinom}_G(P_o(C);1,\hat{z}_i),
\end{align*} and that the values of $\hat{z},\hat{C}$ only depend on the values of the first half of the posterior sample. If $E_{\hat{C},\hat{r}},\hat{z}$ are sufficiently concentrated around one set of labels, in the sense that for any allocation vector matrix $C$ \begin{align}\label{eq: concentration_condition_multinom}
    \prod^n_{i=1}\textup{Multinom}_G(P_o(C)_i;1,\hat{z}_i)>0\Rightarrow \prod^n_{i=1}\textup{Multinom}_G(P_{o'}(C)_i;1,\hat{z}_i)=0\quad \forall o'\neq o, 
\end{align} and that \begin{align}\label{eq: concentration_condition_eHatcHatr}
    C\in E_{\hat{C},\hat{r}}\Rightarrow P_o(C)\notin E_{\hat{C},\hat{r}}\quad \forall o\in\{2,\dots,G!\}, 
\end{align} and if the truncation set that defines the THAMES is given by \begin{align}\label{eq: AequaltoE}
    A=E_{\hat{C},\hat{r}}=\{\hat{C}_1^{(1)},\dots,\hat{C}_1^{(\hat{r})}\}\times\dots\times\{\hat{C}_{n}^{(1)},\dots,\hat{C}_n^{(\hat{r})}\},\quad 1\leq \hat{r}\leq T/2,
\end{align} then the variance of the THAMES, conditional on the data $Y$ and the first half of the posterior sample $C^{(1)},\dots,C^{(T/2)}$, is proportional to \begin{align}\label{eq: exactvariance_variational_THAMES}
    \textup{Var}[\hat{Z}_{\textup{THAMES}}^{-1}(G)|G,C^{(1)},\dots,C^{(T/2)},Y]\propto \frac{\prod^n_{i=1}\sum_{C_i\in \{\hat{C}^{(1)}_i,\dots,\hat{C}^{(\hat{r})}_i\}}\frac{1}{\textup{Multinom}_G(C_i;1,\hat{z}_i)}}{\left(\prod^n_{i=1}|\{\hat{C}_i^{(1)},\dots,\hat{C}_i^{(\hat{r})}\}|\right)^2}.
\end{align}

\begin{proof}
Following the proof of the last theorem, \begin{align*}
    \text{Var}[\hat{Z}_{\text{THAMES}}^{-1}(G)|G,C^{(1)},\dots,C^{(T/2)},Y] \propto \sum_C\frac{\mathds{1}_A(C)/|A|^2}{p(C|G)p(Y|G,C)}.
\end{align*} Setting $A=E_{\hat{C},\hat{r}}$ and applying Bayes' rule gives \begin{align*}
    \sum_C\frac{\mathds{1}_A(C)/|A|^2}{p(C|G)p(Y|G,C)}\propto \sum_C\frac{\mathds{1}_{E_{\hat{C},\hat{r}}}(C)/|E_{\hat{C},\hat{r}}|^2}{p(C|G,Y)}.
\end{align*} Due to Equation \eqref{eq: concentration_condition_multinom} $p(C|G,Y)$ is either 0 for $C\in\hat{C}$, in which case it is not included in the sum, or there exists exactly one permutation $P_o$ for which $\prod^n_{i=1}\text{Multinom}_G(P_o(C);1,\hat{z}_i)>0$. If $C\in E_{\hat{C},\hat{r}}$, the only such permutation is the identity $C=P_o(C)$ due to Equation \eqref{eq: concentration_condition_eHatcHatr}. It follows that \begin{align}\label{eq: multinom_simplifies_forChat}
    p(C|G,Y)=\prod^n_{i=1}\text{Multinom}_G(C_i;1,\hat{z}_i)
\end{align}for all $C\in E_{\hat{C},\hat{r}}$ with non-zero posterior probability. Thus \begin{align*}
    \sum_C\frac{\mathds{1}_{E_{\hat{C},\hat{r}}}(C)/|E_{\hat{C},\hat{r}}|^2}{p(C|G,Y)}&=\sum_C\frac{\mathds{1}_{E_{\hat{C},\hat{r}}}(C)/|E_{\hat{C},\hat{r}}|^2}{\prod^n_{i=1}\text{Multinom}_G(C_i;1,\hat{z}_i)}
    \\&=\frac{1}{|E_{\hat{C},\hat{r}}|^2}\sum_C\prod^n_{i=1}\frac{\mathds{1}_{E_{\hat{C},\hat{r}}}(C)}{\text{Multinom}_G(C_i;1,\hat{z}_i)}.
\end{align*} Due to the product form of $E_{\hat{C},\hat{r}}$ given in Equation \eqref{eq: AequaltoE}, the sum simplifies to \begin{align*}
    \frac{1}{|E_{\hat{C},\hat{r}}|^2}\sum_C\prod^n_{i=1}\frac{\mathds{1}_{E_{\hat{C},\hat{r}}}(C)}{\text{Multinom}_G(C_i;1,\hat{z}_i)}=\frac{1}{|E_{\hat{C},\hat{r}}|^2}\prod^n_{i=1}\sum_{C_i\in \{\hat{C}^{(1)}_i,\dots,\hat{C}^{(\hat{r})}_i\}}\frac{1}{\text{Multinom}_G(C_i;1,\hat{z}_i)}.
\end{align*} Inserting the squared number of elements of $|E_{\hat{C},\hat{r}}|^2$, \begin{align*}
    |E_{\hat{C},\hat{r}}|^2=\left(\prod^n_{i=1}|\{\hat{C}_i^{(1)},\dots,\hat{C}_i^{(\hat{r})}\}|\right)^2,
\end{align*} into the above expression gives the assertion.
\end{proof}

\subsection*{\textAfive\label{ssec: Afive}}

\begin{theorem}
If there are no empty clusters, in the sense that the event $\sum_{i=1}^n C_{i,g}^{(t)}=0$ does not occur for any value of $T/2+1\leq t\leq T$ and $1\leq g\leq G$, the THAMES is equal to \begin{align}\label{eq: thames_with_omegae}
    \hat{Z}^{-1}_{\textup{THAMES}}(G)=\frac{1}{G!}\sum_{o\in\Omega}\frac{1}{T/2}\sum^T_{\substack{t=T/2+1\\P_o(\psi_\upsilon(C^{(t)}))\in A}}\frac{1/|A|}{p(C^{(t)}|G)p(Y|G,C^{(t)})}.
\end{align}
\end{theorem}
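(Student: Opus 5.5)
The plan is to rewrite the original THAMES in two moves. First, replace each $C^{(t)}$ inside the permutation sum by its canonical representative $\psi_\upsilon(C^{(t)})$. Second, discard the permutations that never produce a hit in $A$.

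\textbf{Step 1 (reindexing the permutation sum).} Fix $t$. Because no cluster of $C^{(t)}$ is empty, every column of $C^{(t)}$ contains at least one $1$. I would check (this is the role of the hypothesis, together with a suitable choice of $\upsilon$ that meets every column) that the ordering in \eqref{eq: definition_ordering_constraint} is then well defined and unique. Hence there is a unique $o_t$ with $\psi_\upsilon(C^{(t)})=P_{o_t}(C^{(t)})$.

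The permutations form a group under composition, with $P_o\circ P_{o'}=P_{o\cdot o'}$ for the composed ordering. So the map $o\mapsto o\cdot o_t$ is a bijection of $\{1,\dots,G!\}$. It follows that
\begin{align*}
\sum_{o=1}^{G!}\mathds{1}_A\bigl(P_o(C^{(t)})\bigr)=\sum_{o=1}^{G!}\mathds{1}_A\bigl(P_o(P_{o_t}(C^{(t)}))\bigr)=\sum_{o=1}^{G!}\mathds{1}_A\bigl(P_o(\psi_\upsilon(C^{(t)}))\bigr).
\end{align*}
The weight $1/\{|A|\,p(C^{(t)}|G)p(Y|G,C^{(t)})\}$ does not depend on the labelling, since the collapsed prior and likelihood are label-invariant. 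I would therefore swap the sums over $o$ and $t$ (as in the RIS representation in Supplement A.1), apply this identity termwise, and swap back. This gives the THAMES with $C^{(t)}$ replaced by $\psi_\upsilon(C^{(t)})$ and the sum still running over all $G!$ permutations.

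\textbf{Step 2 (restriction to $\Omega$).} If $o\notin\Omega$, then by definition of $\Omega$ there is no $t\in\{T/2+1,\dots,T\}$ with $P_o(\psi_\upsilon(C^{(t)}))\in A$. The inner sum for that $o$ is therefore empty and contributes zero. Dropping these terms yields \eqref{eq: thames_with_omegae}. The $1/G!$ prefactor is kept, because no terms were merged, only zero terms removed.

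\textbf{Main obstacle.} The delicate point is Step 1: I must make sure that $\psi_\upsilon$ really is a relabelling $P_{o_t}$ of $C^{(t)}$ and that it is uniquely defined. With empty clusters, two columns would both have no $1$ in $\upsilon$, so the minima in \eqref{eq: definition_ordering_constraint} would be undefined or tied and uniqueness would fail. I would handle this by stating explicitly the convention on $\upsilon$ under which every nonempty column is hit. I would also note that the bijection argument itself needs only that $\psi_\upsilon(C^{(t)})$ lies in the orbit $\{P_o(C^{(t)})\}$, not uniqueness. So even a tie-breaking convention would suffice, provided $\psi_\upsilon$ is a well-defined function.
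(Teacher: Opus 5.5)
Your proposal is correct and takes essentially the same route as the paper's proof. The paper also replaces $C^{(t)}$ by $\psi_\upsilon(C^{(t)})$ because $\psi_\upsilon$ only relabels and the sum already runs over all $G!$ permutations, and it then restricts to $\Omega$ because every other inner sum is empty. Your explicit bijection $o\mapsto o\cdot o_t$, and your remark that only membership of $\psi_\upsilon(C^{(t)})$ in the label orbit of $C^{(t)}$ is needed, spell out what the paper states in one line. As you note, the hypothesis together with a $\upsilon$ that meets every column serves only to make $\psi_\upsilon$ well defined.
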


\begin{proof}
As pointed out in \citeauthor{metodiev2025mixturemodels} (2025a), sums over all permutations can be exchanged by sums over all orderings, so\begin{align*}
    \hat{Z}^{-1}_{\text{THAMES}}(G)&=\frac{1}{G!}\sum_{o=1}^{G!}\frac{1}{T/2}\sum^T_{\substack{t=T/2+1\\P_o(C^{(t)})\in A}}\frac{1/|A|}{p(C^{(t)}|G)p(Y|G,C^{(t)})}\\&=\frac{1}{G!}\sum_{o=1}^{G!}\frac{1}{T/2}\sum^T_{\substack{t=T/2+1\\P_o(\psi_\upsilon(C^{(t)}))\in A}}\frac{1/|A|}{p(C^{(t)}|G)p(Y|G,C^{(t)})},
\end{align*} since $\psi_{\upsilon}$ just changes the labels of the allocation vector matrices, and we are summing over all permutations anyway. Now $\Omega$ is simply defined as the set that includes all non-zero values of the sum over all of these permutations, which finishes the proof.
\end{proof}

\subsection*{\textAsix\label{ssec: Asix}}

\begin{theorem}

If the prior on the SBM parameters is given by independent draws of\begin{align*}
&\tau|G\sim\textup{Dirichlet}(G,n_1^0,\dots,n_G^0),\quad\tau\in\mathbb{R}^G,\\
&\mu_{g,k}|G\sim\textup{Beta}(a^0_{g,k},b^0_{g,k}),\quad \mu_{g,k}\in(0,1),
\end{align*} then

\begin{align}\label{eq: nobileformula_sbm}
    Z(G)=Z(G-1)\cdot\frac{1}{p_0(G)}\cdot\frac{\Gamma(\sum^G_{g=1}n_g^0)\Gamma(n+\sum^{G-1}_{g=1}n_g^0)}{\Gamma(\sum^{G-1}_{g=1}n_g^0)\Gamma(n+\sum^G_{g=1}n_g^0)}.
\end{align}
\end{theorem}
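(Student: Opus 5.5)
The plan is to compute the unnormalised posterior mass of the event ``cluster $G$ is empty'' directly, and to recognise it as $Z(G-1)$ times a constant. By definition of the posterior and Equation \eqref{eq: marglik_definition},
\begin{align*}
p_0(G)\,Z(G)=\sum_{C:\ \sum_i C_{i,G}=0} p(C|G)\,p(Y|G,C).
\end{align*}
I will then use the bijection between allocation matrices $C$ of the $G$-cluster model whose $G$-th column is zero and allocation matrices $C'$ of the $(G-1)$-cluster model, obtained by deleting that column. This bijection lets me rewrite the sum as a sum over all $C'$, which is the defining sum of $Z(G-1)$, provided each summand factorises appropriately.

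\textbf{Likelihood factor.} I will show that $p(Y|G,C)=p(Y|G-1,C')$ using the closed-form collapsed likelihood. For a pair $(g,k)$ with $g=G$ or $k=G$, no node pair $(i,j)$ has $C_{i,g}C_{j,k}=1$. Hence $a_{g,k}=a^0_{g,k}$ and $b_{g,k}=b^0_{g,k}$, and the corresponding Beta ratio equals one. The remaining $(G-1)^2$ factors coincide with those of the $(G-1)$-model. Here I must assume, implicitly as in Nobile's mixture setting, that the hyperparameters $a^0_{g,k},b^0_{g,k}$ for $g,k\le G-1$ are the same in both models; this holds for the symmetric priors used throughout the paper.

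\textbf{Prior factor.} In the collapsed prior, with $S_G=\sum_{g=1}^G n_g^0$, the empty cluster has $n_G=n_G^0$, so $\Gamma(n_G)/\Gamma(n_G^0)=1$. Since $\sum_{g\le G-1} n_g = S_{G-1}+n$, it follows that
\begin{align*}
\frac{p(C|G)}{p(C'|G-1)}=\frac{\Gamma(S_G)\,\Gamma(S_{G-1}+n)}{\Gamma(S_{G-1})\,\Gamma(S_G+n)}.
\end{align*}
This ratio is a constant that does not depend on $C'$.

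\textbf{Conclusion.} Substituting both factors into the sum gives $p_0(G)Z(G)=Z(G-1)\cdot\Gamma(S_G)\Gamma(S_{G-1}+n)/(\Gamma(S_{G-1})\Gamma(S_G+n))$. Dividing by $p_0(G)$ yields \eqref{eq: nobileformula_sbm}. This requires $p_0(G)>0$, which holds because every configuration has positive prior probability.

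The only delicate point is the hyperparameter-consistency between the $G$- and $(G-1)$-models in the likelihood step. The rest is routine bookkeeping with Gamma functions.
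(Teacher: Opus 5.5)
Your proof is correct and follows the paper's argument. It uses the same bijection between $G$-cluster allocations with an empty $G$-th column and $(G-1)$-cluster allocations, the same collapse to one of the Beta factors with $g=G$ or $k=G$, and the same constant Dirichlet ratio. The only difference is cosmetic: you identify the empty-cluster sum directly as $p_0(G)Z(G)$, whereas the paper writes the non-empty part as $Z(G)(1-p_0(G))$ and solves, and you state explicitly the hyperparameter consistency and $p_0(G)>0$ that the paper leaves implicit.
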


\begin{proof}
While we just use the notation $C$ for an allocation vector matrix in the rest of the document, in this subsection it is useful to distinguish between the allocation vector matrix $C^G$, which is an allocation vector matrix with $n$ rows and $G$ columns, and the allocation vector matrix $C^{G-1}$, which is a matrix with $n$ rows and $G-1$ columns.
 
Let $\tilde{C}^G=\{C^G\in\{0,1\}^{nG}:\sum_{g=1}^GC_{i,g}^G=1\forall i\in\{1,\dots,n\}\}$ denote the space of the allocation vector matrix (left implicit in the main document) for $G\geq 2$. We want to establish a connection between the marginal likelihood of the SBM with $G$ clusters,\begin{align*}
    Z(G)=p(Y|G)=\sum_{C^G\in \tilde{C}^G}p(C|G)p(Y|G,C),
\end{align*} and the marginal likelihood of the SBM with $G-1$ clusters, \begin{align*}
    Z(G-1)=p(Y|G-1)=\sum_{C^{G-1}\in \tilde{C}^{G-1}}p(C^{G-1}|G-1)p(Y|G-1,C^{G-1}).
\end{align*} First of all, let us note that by the law of total probability \begin{align*}
Z(G)&=\sum_{C^G\in \tilde{C}_G}p(C^G|G)p(Y|G,C^G)\\&=\sum_{\substack{C^G\in \tilde{C}^G,\\\sum_{i=1}^nC_{i,G}^G=0}}p(C^G|G)p(Y|G,C^G)+\sum_{\substack{C^G\in \tilde{C}^G,\\\sum_{i=1}^nC_{i,G}^G>0}}p(C^G|G)p(Y|G,C^G).
\end{align*}

Notice that there is a bijection between the space $\{C^G\in\tilde{C}^G:\sum_{i=1}^nC_{i,G}=0\}$ and $\tilde{C}^{G-1}$, since any allocation vector matrix $C^{G-1}$ has exactly one corresponding allocation vector matrix $C^{G}=(C^{G-1},0_n)$ which lives in this space. Indeed, mathematically, there is no difference between the function $p(Y|C^G,G)$ of the SBM with $G$ components, but the $G$'th component being empty, and the function $p(Y|C^{G-1},G-1)$ of the SBM with $G-1$ components. We can show this as follows: let $C^{G-1}$ be an arbitrary allocation vector matrix and $C^{G}=(C^{G-1},0_n)$ the same allocation matrix but with an additional empty component. Then {\footnotesize\begin{align*}
    &p(Y|G,C^{G})=\prod^{G}_{g=1}\prod^G_{k=1}\frac{\Gamma(a_{g,k}^0+b_{g,k}^0)\Gamma(a_{g,k})\Gamma(b_{g,k})}{\Gamma(a_{g,k}+b_{g,k})\Gamma(a_{g,k}^0)\Gamma(b_{g,k}^0)}=
    \\&\prod^{G}_{g=1}\prod^G_{k=1}\frac{\Gamma(a_{g,k}^0+b_{g,k}^0)\Gamma(a_{g,k}^0+\sum^n_{i=1}\sum_{j\neq i}C_{i,g}^GC_{j,k}^GY_{i,j})\Gamma(b_{g,k}^0+\sum^n_{i=1}\sum_{j\neq i}C_{i,g}^GC_{j,k}^G(1-Y_{i,j}))}{\Gamma(a_{g,k}^0+\sum^n_{i=1}\sum_{j\neq i}C_{i,g}^GC_{j,k}^GY_{i,j}+b_{g,k}^0+\sum^n_{i=1}\sum_{j\neq i}C_{i,g}^GC_{j,k}^G(1-Y_{i,j}))\Gamma(a_{g,k}^0)\Gamma(b_{g,k}^0)}=
    \\&\prod^{G-1}_{g=1}\prod^{G-1}_{k=1}\frac{\Gamma(a_{g,k}^0+b_{g,k}^0)\Gamma(a_{g,k}^0+\sum^n_{i=1}\sum_{j\neq i}C_{i,g}^{G-1}C_{j,k}^{G-1}Y_{i,j})\Gamma(b_{g,k}^0+\sum^n_{i=1}\sum_{j\neq i}C_{i,g}^{G-1}C_{j,k}^{G-1}(1-Y_{i,j}))}{\Gamma(a_{g,k}^0+\sum^n_{i=1}\sum_{j\neq i}C_{i,g}^GC_{j,k}^GY_{i,j}+b_{g,k}^0+\sum^n_{i=1}\sum_{j\neq i}C_{i,g}^GC_{j,k}^G(1-Y_{i,j}))\Gamma(a_{g,k}^0)\Gamma(b_{g,k}^0)}\\&=p(Y|G-1,C^{G-1}),
\end{align*}} since \begin{align*}
    &a_{g,k}^0+\sum^n_{i=1}\sum_{j\neq i}C_{i,g}^GC_{j,k}^GY_{i,j}=a_{g,k}^0,
    \\& b_{g,k}^0+\sum^n_{i=1}\sum_{j\neq i}C_{i,g}^GC_{j,k}^G(1-Y_{i,j})=b_{g,k}^0
\end{align*} if either $k=G$ or $j=G$, and in this case the ratio in the above expression reduces to one. However, the prior of the allocation vector matrix changes, since the higher dimension of $C^G$ is penalized: if $C^{G}=(C^{G-1},0_n)$, then \begin{align*}
    \frac{p(C^G|G)}{p(C^{G-1}|G-1)}&=\frac{\frac{\Gamma(\sum^{G}_{g=1}n_g^0)\prod^G_{g=1}\Gamma(n_g^0+\sum^n_{i=1}C_{i,g}^G)}{\Gamma(\sum^{G}_{g=1}(n_g^0+\sum^n_{i=1}C_{i,g}^G))\prod^G_{g=1}\Gamma(n_g^0)}}{\frac{\Gamma(\sum^{G-1}_{g=1}n_g^0)\prod^{G-1}_{g=1}\Gamma(n_g^0+\sum_{i=1}^nC_{i,g}^{G-1})}{\Gamma(\sum^{G-1}_{g=1}(n_g^0+\sum_{i=1}^nC_{i,g}^{G-1}))\prod^{G-1}_{g=1}\Gamma(n_g^0)}}\\&=\frac{\frac{\Gamma(\sum^{G}_{g=1}n_g^0)}{\Gamma(n+\sum^{G}_{g=1}n_g^0)}}{\frac{\Gamma(\sum^{G-1}_{g=1}n_g^0)}{\Gamma(n+\sum^{G-1}_{g=1}n_g^0)}}=\frac{\Gamma(\sum^G_{g=1}n_g^0)\Gamma(n+\sum^{G-1}_{g=1}n_g^0)}{\Gamma(\sum^{G-1}_{g=1}n_g^0)\Gamma(n+\sum^G_{g=1}n_g^0)},
\end{align*} because\begin{align*}
    &n_G=n_G^0+\sum^n_{i=1}C_{i,G}=n_G^0,
    \\&\sum^G_{g=1}n_g^0+\sum^G_{g=1}\sum^n_{i=1}C_{i,g}^G=(\sum^G_{g=1}n_g^0)+\sum^n_{i=1}\sum^G_{g=1}C_{i,g}^G=(\sum^G_{g=1}n_g^0)+n,
    \\&\sum^{G-1}_{g=1}n_g^0+\sum^{G-1}_{g=1}\sum^n_{i=1}C_{i,g}^{G-1}=(\sum^{G-1}_{g=1}n_g^0)+\sum^n_{i=1}\sum^{G-1}_{g=1}C_{i,g}^{G-1}=(\sum^{G-1}_{g=1}n_g^0)+n.
\end{align*} In the case $n_1=\dots=n_G=1$, the ratio even simplifies to \begin{align*}
    \frac{\Gamma(\sum^G_{g=1}n_g^0)\Gamma(n+\sum^{G-1}_{g=1}n_g^0)}{\Gamma(\sum^{G-1}_{g=1}n_g^0)\Gamma(n+\sum^G_{g=1}n_g^0)}=\frac{\Gamma(G)\Gamma(n+G-1)}{\Gamma(G-1)\Gamma(n+G)}=\frac{G-1}{n+G-1},
\end{align*} though this is not relevant for the proof.

We can use these insights to show that \begin{align*}
    &\sum_{\substack{C^G\in \tilde{C}^G,\\\sum_{i=1}^nC_{i,G}^G=0}}p(C^G|G)p(Y|G,C^G)\\&=\frac{\Gamma(\sum^G_{g=1}n_g^0)\Gamma(n+\sum^{G-1}_{g=1}n_g^0)}{\Gamma(\sum^{G-1}_{g=1}n_g^0)\Gamma(n+\sum^G_{g=1}n_g^0)}\sum_{C^{G-1}\in \tilde{C}^{G-1}}p(C^{G-1}|G-1)p(Y|G,C^{G-1})\\&=Z(G-1)\frac{\Gamma(\sum^G_{g=1}n_g^0)\Gamma(n+\sum^{G-1}_{g=1}n_g^0)}{\Gamma(\sum^{G-1}_{g=1}n_g^0)\Gamma(n+\sum^G_{g=1}n_g^0)}.
\end{align*} On the other hand, the other sum can be reformulated using Bayes' rule: \begin{align*}
    &\sum_{\substack{C^G\in \tilde{C}^G,\\\sum_{i=1}^nC_{i,G}^G>0}}p(C^G|G)p(Y|G,C^G)\\&=\sum_{\substack{C^G\in \tilde{C}^G,\\\sum_{i=1}^nC_{i,G}^G>0}}p(C^G|G,Y)p(Y|G)\\&=p(Y|G)(1-\sum_{\substack{C^G\in \tilde{C}^G,\\\sum_{i=1}^nC_{i,G}^G=0}}p(C^G|G,Y))&=p(Y|G)(1-p_0(G))=Z(G)(1-p_0(G)).
\end{align*} It follows that \begin{align*}
    Z(G)=Z(G-1)\frac{\Gamma(\sum^G_{g=1}n_g^0)\Gamma(n+\sum^{G-1}_{g=1}n_g^0)}{\Gamma(\sum^{G-1}_{g=1}n_g^0)\Gamma(n+\sum^G_{g=1}n_g^0)}+Z(G)(1-p_0(G)),
\end{align*} which is equivalent to Equation \eqref{eq: nobileformula_sbm}.

\end{proof}

\section*{\suppBtext \label{sec: suppB}}

\subsection*{\textBone \label{ssec: Bone}}

The truncation set is chosen by selecting $\alpha$ on the grid $\{0,0.1,\dots,1\}$, by computing the sample variance of the THAMES, \begin{align*}
    \frac{1}{T/2-1}\sum^T_{t=T/2+1}\left(\frac{\frac{1}{G!}\sum^{G!}_{o=1}\mathds{1}_A(P_o(C^{(t)}))}{p(Y|G,C^{(t)})p(C^{(t)}|G)}-\hat{Z}^{-1}_{\text{THAMES}}\right)^2
\end{align*} and by adjusting it for autoregression, using the function ar from the programming language R \citep{r}. More precisely, a first-order autoregressive model with parameter
$\phi$ is used, and the spectral density of the series at zero is approximated as $1/(1-\phi)^2$. The variance is multiplied by this estimator. $\alpha$ is then chosen such that it minimises the obtained variance estimate over the grid.

\subsection*{\textBtwo \label{ssec: Btwo}}

$\hat{r}$ can be chosen by computing Equation \eqref{eq: exactvariance_variational_THAMES} and by selecting the value of $\hat{r}$ that minimises it. However, it is important that $E_{\hat{C},\hat{r}}$ is not too small, since the THAMES could possibly not be defined in this case (since the sum may evaluate to 0). This is why we choose $\hat{r}$ such that it minimises the above variance under the condition that the probability of $E_{\hat{C},\hat{r}}$ being empty is small, i.e., under the condition that \begin{align*}
    \sum_{C\in E_{\hat{C},\hat{r}}}\prod^n_{i=1}\text{Multinom}_G(C_i;1,\hat{z}_i)\geq 20/T,
\end{align*} since in this case the expected number of non-empty elements in the series is at least equal to 20, under the above assumptions, due to Equation \eqref{eq: multinom_simplifies_forChat}. If there exists no value of $\hat{r}$ for which this condition is fulfilled, we set $\hat{r}=T/2$.

\subsection*{\textBthree\label{ssec: Bthree}}

The ILvb \citep{latouche_variational_2012} is defined as \begin{align*}
    \text{IL}_{vb}&=\text{ln}\{\frac{\Gamma(\sum^G_{g=1}n_g^0)\prod^G_{g=1}\Gamma(n_g)}{\Gamma(\sum^G_{g=1}n_g)\prod^G_{g=1}\Gamma(n_g^0)}\}
    \\&+\sum^G_{g\leq k}\ln\{\frac{\Gamma(a^0_{q,l}b^0_{q,l})\Gamma(a_{q,l})\Gamma(b_{q,l})}{\Gamma(a_{q,l}+b_{q,l})\Gamma(a^0_{q,l})\Gamma(b^0_{q,l})}\}-\sum^n_{i=1}\sum^G_{g=1}\tilde{z}_{i,g}\text{ln}\tilde{z}_{i,g},
\end{align*} where $n_g=n_g^0+\sum^n_{i=1}C_{i,g}$, $a_{g,k}=a^0_{g,k}+\sum^n_{i=1}\sum_{j\neq i}C_{i,g}C_{j,k}Y_{i,j}$, and $ b_{g,k}=b^0_{g,k}+\sum^n_{i=1}\sum_{j\neq i}C_{i,g}C_{j,k}(1-Y_{i,j})$, as in the main document, and where $\tilde{z}$ are cluster probabilities estimated by a variational algorithm such that they maximise the above expression. We implemented this algorithm \citep[using the $"$kmeans$"$ function from the programming language R with parameter nstart=10,][for initisalisation]{r}. To make sure that it was robust with respect to the starting value, it was independently run 10 times, selecting the value of ILvb that was the highest.

The correction of the ILvb, ILvbc, is simply defined as \begin{align*}
    \text{IL}_{vbc}=\text{ln}(G!) + \text{IL}_{vb}.
\end{align*}

\section*{\suppCtext \label{sec: suppC}}

\subsection*{\textCone \label{ssec: Cone}}

\begin{table}[]
    \centering
    \begin{tabular}{cccccc}
                  G & 2 & 3 & 4 & 5 & 6\\\hline 
                  THAMES & -543854 & -520356 & -505344 & -495575 & -487410\\\hline
                  G & 7 & 8 & 9 & 10 & 11\\\hline
                  THAMES & -482603 & -477779 & -469635 & -467160 & -462584\\\hline
                  G & \textbf{12} & 13 & 14 & 15 & 16\\\hline
                  THAMES & \textbf{-460182} & -460189 & -460195 & -460202 & -460208\\\hline
                  G & 17 & 18 & 19 & 20 &\\\hline
                  THAMES & -460215 & -460221 & -460227 & -460234 & 
    \end{tabular}
    \caption{THAMES estimates of the log marginal likelihood; the largest estimate is obtained at $G=12$; all estimates were rounded to full integers.}
    \label{tab: cop28_thames_marglikestims}
\end{table}

The COP28 dataset is too large to allow fast computation with our available processing power. We therefore reduced it by including only users that were quoted and/or reposted at least 15 times and by selecting the largest connected component of the resulting graph, which contains 10,979 nodes and 74,826 edges. All of the following analysis was performed on the resulting dataset.

Since the dataset is too large to use the standard kmeans function, the function MiniBatchKmeans \citep{minibatchkmeans_rpackage} from the ClusterR package \citep{clusterr_rpackage} was used for initialisation, with a batch size of 1,000 and a tolerance value of 1e-6. This initialisation was used to run 10 independent Gibbs samplers for each value of $G\in\{2,\dots,20\}$. Since some of the Gibbs samplers got stuck in local modes with a smaller number of clusters than the initial value, only the sample with the highest MAP value of $C^{(1)},\dots,C^{(T)}$ was selected, with the remaining samples being discarded. After a burn-in of 2,000, a sample of length $T=10,000$ was produced for each value of $G$. The THAMES was computed on each of these samples.

The marginal likelihood estimates obtained by the THAMES for the COP28 for $G\in\{2,\dots,20\}$ are shown in Table \ref{tab: cop28_thames_marglikestims}. It is estimated that the marginal likelihood is maximised by $G=12$. Interestingly, the estimates are strictly increasing until this value is obtained, and strictly decreasing after. Now that $G=12$ is established as an estimate of the number of clusters, the other model parameters of the SBM can also be estimated.

\subsection*{\textCtwo \label{ssec: Ctwo}}

\begin{table}[]
    \centering
    \begin{tabular}{ccccccc}
        cluster&1&2&3&4&5&6\\\hline
        size&73&234&68&423&6222&1172\\
        cluster&7&8&9&10&\textbf{11}&12\\\hline
        size&1377&44&717&278&\textbf{5}&366
    \end{tabular}
    \caption{estimated cluster sizes of the COP28 dataset; the size of cluster 11 is highlighted.}
    \label{tab: cop28_estimated_cluster_sizes}
\end{table} 

For $G=12$, a MAP estimator $\hat{C}^{MAP}$ was obtained by maximising the unnormalised posterior density $p(C^{(t)}|G)p(Y|G,C^{(t)})$ for $t=1,\dots,T$. The clusters obtained by $\hat{C}^{MAP}$ are the ones shown in the main article. Their sizes are shown in Table \ref{tab: cop28_estimated_cluster_sizes}. They are quite variable, with cluster 5, a cluster of size 6222, being the largest cluster, and cluster 11, a cluster of size 5, being the smallest cluster.

$\hat{C}^{MAP}$ can be used to obtain estimates of the SBM parameters $\mu$ and $\tau$ by maximising the \textbf{complete-data loglikelihood} \begin{align*}
        \log(p(Y,\hat{C}^{MAP}|G,\tau,\mu))&=\sum_{i=1}^n\sum_{g=1}^G \hat{C}^{MAP}_{i,g}\log(\tau_{g})\\&+\sum_{i=1}^n\sum_{g=1}^G\sum_{j\neq i}\sum_{k=1}^G\hat{C}_{i,g}^{MAP}C_{j,k}^{MAP}(Y_{i,j}\log(\mu_{g,k})+(1-Y_{i,j})\log(1-\mu_{g,k})).
    \end{align*} 
    
    \begin{table}[]
    \centering
    \begin{tabular}{cccccccccccc}
        &&&&&&{\huge $\hat{\tau}$}&&&&&\\\hline
        $\hat{\tau}_1$ & $\hat{\tau}_2$ & $\hat{\tau}_3$ & $\hat{\tau}_4$ & $\hat{\tau}_5$ & $\hat{\tau}_6$ & $\hat{\tau}_7$ & $\hat{\tau}_8$ & $\hat{\tau}_9$ & $\hat{\tau}_{10}$ & $\hat{\tau}_{11}$ & $\hat{\tau}_{12}$\\\hline
        0.66 & 2.13 & 0.62 & 3.85 & 56.67 & 10.67 & 12.54 & 0.40 & 6.53 & 2.53 & \textbf{0.05} & 3.33 
    \end{tabular}
    
    \begin{tabular}{cccccccccccc}
    &&&&&&{\huge$\hat{\mu}$}&&&&&\\\hline
    $\hat{\mu}_1$ & $\hat{\mu}_2$ & $\hat{\mu}_3$ & $\hat{\mu}_4$ & $\hat{\mu}_5$ & $\hat{\mu}_6$ & $\hat{\mu}_7$ & $\hat{\mu}_8$ & $\hat{\mu}_9$ & $\hat{\mu}_{10}$ & $\hat{\mu}_{11}$ & $\hat{\mu}_{12}$\\\hline
                14.28 & 0.07 & 2.14 & 9.99 & 0.02 & 0.02 & 0.01 & 55.45 &0.01 & 0.00 & \textbf{28.77} & 0.51\\\hline
                0.00 & 1.89 & 6.98 & 0.00 & 0.02 & 0.02 & 0.23 & 0.16 & 0.01 & 0.33 & \textbf{17.86} & 0.17\\\hline
                0.00 & 4.80 & 16.89 & 0.01 & 0.04 & 0.04 & 0.64 & 0.37 & 0.01 & 1.02 & \textbf{36.18} & 0.68\\\hline
                0.40 & 0.00 & 0.20 & 0.31 & 0.00 & 0.00 & 0.00 & 7.58 & 0.00 & 0.00 & \textbf{6.57} & 0.05\\\hline
                0.00 & 0.02 & 0.09 & 0.00 & 0.01 & 0.01 & 0.00 & 0.02 & 0.01 & 0.00 & \textbf{1.19} & 0.09\\\hline
                0.00 & 0.20 & 1.27 & 0.01 & 0.07 & 0.19 & 0.03 & 0.28 & 0.02 & 0.05 & \textbf{12.80} & 0.94\\\hline
                0.00 & 0.73 & 3.12 & 0.00 & 0.01 & 0.01 & 0.16 & 0.13 & 0.00 & 0.17 & \textbf{11.02} & 0.13\\\hline
                0.31 & 0.10 & 0.37 & 0.41 & 0.02 & 0.05 & 0.06 & 6.97 & 0.02 & 0.05 & \textbf{8.64} & 0.27\\\hline
                0.00 & 0.01 & 0.07 & 0.00 & 0.02 & 0.00 & 0.00 & 0.06 & 0.94 & 0.01 & \textbf{1.76} & 0.05\\\hline
                0.00 & 6.41 & 22.46 & 0.03 & 0.11 & 0.13 & 0.87 & 1.05 & 0.05 & 1.41 & \textbf{51.01} & 1.30\\\hline
                \textbf{0.00} & \textbf{0.09} & \textbf{0.88} & \textbf{0.00} & \textbf{0.00} & \textbf{0.02} & \textbf{0.00} & \textbf{0.00} & \textbf{0.00} & \textbf{0.00} & \textbf{8.00} & \textbf{0.05}\\\hline
                0.00 & 0.10 & 0.72 & 0.00 & 0.02 & 0.06 & 0.02 & 0.19 & 0.00 & 0.02 & \textbf{6.99} & 0.40
    \end{tabular}
    
    \caption{estimates of the SBM parameters (in percent); parameters corresponding to cluster 11 are highlighted. In the context of the COP28 dataset, $\hat{\mu}_{g,k}$ estimates the probability that a user in cluster $g$ quotes or reposts a user in cluster $k$ while $\hat{\tau}_g$ estimates the probability that a node chosen at random belongs to cluster $g$. Estimates are rounded to the second decimal place.}
    \label{tab: cop28_sbm_parameter_estimates}
\end{table}  

\begin{figure}
    \centering
    \includegraphics[scale=.8]{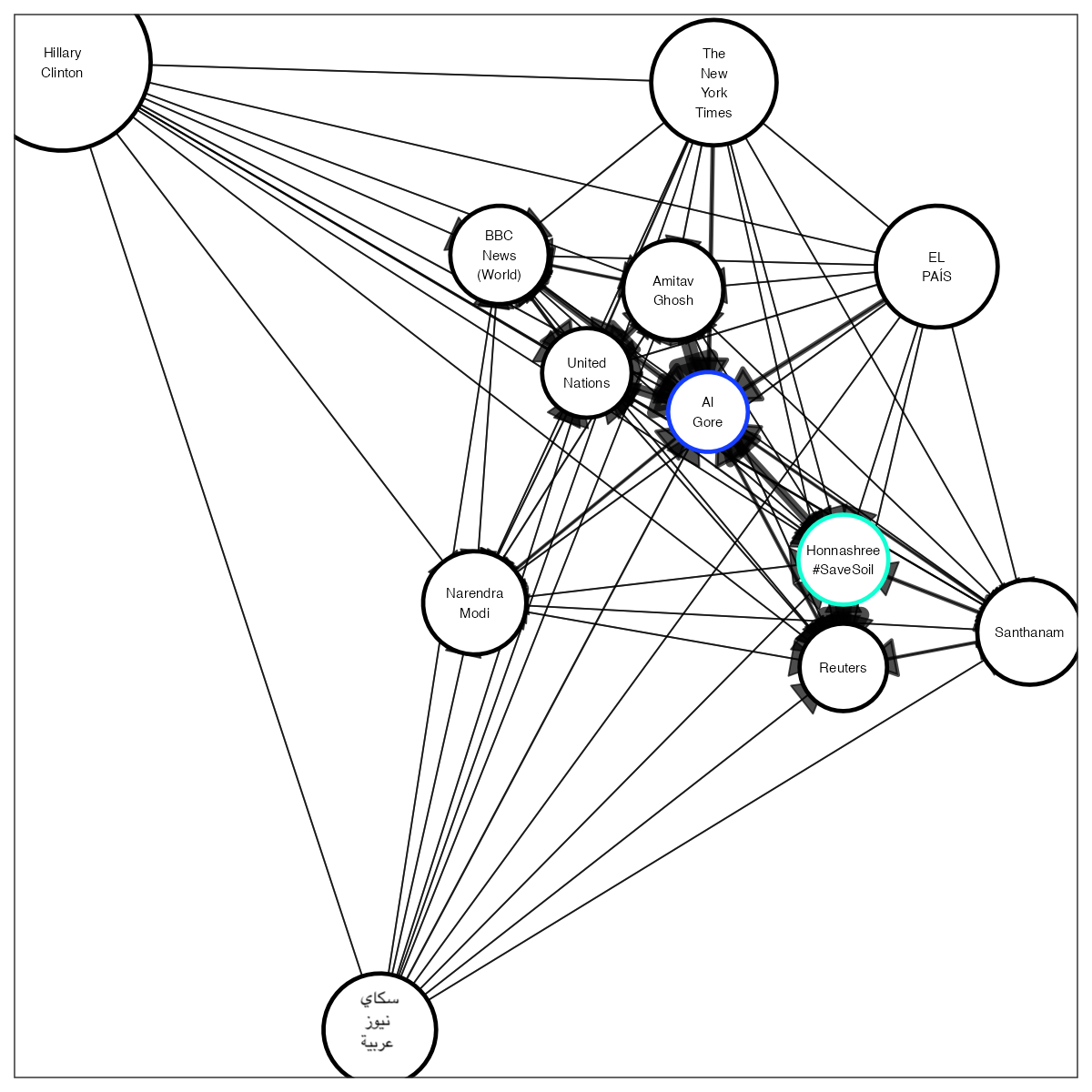}
    \caption{the metagraph of the COP28 dataset for $G=12$ clusters, the number that maximises the marginal likelihood estimates of the THAMES; Clusters are represented by nodes. Large edges indicate large connection probabilities between clusters, while large nodes indicate a large relative size of the cluster. $"$Al Gore$"$ represents the core of the network, while $"$Honnashree \# SaveSoil$"$ the cluster closest to the concept of a sociological community, formed around the $\#\text{SaveSoil}$ movement.}
    \label{fig: thames_metagraph}
\end{figure}
    
    As shown in \citet{daudin_mixture_2008} this function is maximised by\begin{align*}
        \hat{\mu}_{g,k}=\frac{\sum_{i=1}^n\sum_{j\neq i}^n\hat{C}_{i,g}^{MAP}\hat{C}_{j,k}^{MAP}Y_{i,j}}{\sum_{i=1}^n\sum_{j\neq i}^n\hat{C}_{i,g}^{MAP}\hat{C}_{j,k}^{MAP}},
    \end{align*} and\begin{align*}
        \hat{\tau}_{g}=\frac{1}{n}\sum_{i=1}^n\hat{C}_{i,g}^{MAP}.
    \end{align*} These estimates are shown in Table \ref{tab: cop28_sbm_parameter_estimates}. It is clear that any user from any cluster different to cluster 11 has a high probability of quoting/reposting a user from cluster 11, while there is a very low probability that a user from cluster 11 quotes or reposts a user outside of cluster 11. This is true even though cluster 11 is the smallest out of all other clusters. Thus, the clustering found can be identified as a core-periphery structure, where all clusters but cluster 11 are at the periphery, and cluster 11 is its core.
    
    A particularly elegant way to summarise these results is given by a metagraph. In a metagraph, the clusters are represented by nodes. The sizes of the nodes represent the sizes of the clusters, while the sizes of the edges between clusters represent the conncection probabilities. Such a metagraph is shown in Figure \ref{fig: thames_metagraph}. The labels of the nodes show the username of the user with the highest number of followers (taken as the maximum of this number over all timepoints on which the user sent a post within this dataset). $"$Al Gore$"$ represents cluster 11, whose central function as the core of the dataset is quite clear, as all other clusters are quoting or reposting its users with fairly high probabilities.
    
    Interestingly, while there aren't any clusters which correspond to the sociological notion of a community in the strict sense of the word — as outlined in the main article — cluster 1, represented by the user $"$Honnashree \# SaveSoil$"$ may be quite close to this definition, as its members do quote/repost each other fairly often and as it is centred around the same topic, the $\#\text{SaveSoil}$ movement. This can be seen when conducting a textual analysis.

\subsection*{\textCthree \label{ssec: Cthree}}

\paragraph{\textbf{Preprocessing}}

A text corpus of the COP28 dataset is available as each of the 4,373,354 posts concerning the COP28 from the 1st of January 2023 to the 31st of December 2023 contains textual data. Of these posts, only 179,885 are reposted/quoted by people within the reduced network of 10,979 nodes and 74,826 edges used in the main article. The texts within these posts were preprocessed via the following steps: \begin{itemize}
    \item punctuation signs, links, and the user name of the post were removed.

    \item common stopwords were removed via the R package tm \citep{tmpackage}.
    
    \item words with less than 5 characters were removed.

    \item stemming based on the English language was performed via the R package SnowballC \citep{packageSnowballC}.
\end{itemize}

\paragraph{\textbf{Results}}

A textual analysis was performed via the Latent dirichlet alocation (LDA) model \citep{Bl_et_al03-LDAintroduction}. This model is characterised by the idea that there is a hidden probability vector through which a fixed number of topics is drawn, and each word is drawn through a different distribution given its respective topic. This is particularly interesting to us because it allows us to estimate a variational estimate of topic probabilities $\hat{\gamma}$, where each row of $\hat{\gamma}$ is a probability vector corresponding to one specific document (post) in the corpus \citep{GrKu11-topicmodels_rpackage}. $\hat{\gamma}$ is used to analyse the topics of each cluster pair by averaging the rows of $\hat{\gamma}$ that correspond to posts being made with respect to this cluster pair. For example, if a user from cluster 1 quoted or reposted 10 posts from a user of cluster 2, the topic distribution for this exchange can be obtained by summing the rows from $\hat{\gamma}$ that correspond to all words included in these posts.

$\hat{\gamma}$ was fitted by using the function LDA from the topicmodels R package \citep{topicmodelspackage}. 40 topics were fitted for the whole corpus. These topics are summarised in Table \ref{tab: lda_all_topics}. 

A topic-specific summary was obtained as follows: first, an estimate $\hat{\beta}$ is computed via the topicmodels package, where each entry $\hat{\beta}_{\tilde{i},\tilde{j}}$ is a matrix that gives the probability that a word with index $\tilde{i}$ appears given the topic with index $\tilde{j}$ \citep{GrKu11-topicmodels_rpackage}.

Then, the entropy \begin{align*}
    \hat{\beta}_{\tilde{i},\tilde{j}}\sum_{i'} \hat{\beta}_{i',\tilde{j}}\log(\hat{\beta}_{i',\tilde{j}})/\log(\tilde{K})+\beta_{\tilde{i},\tilde{j}}
\end{align*} is computed for each word $\tilde{i}$ and each topic $\tilde{j}$. For every topic $\tilde{j}$ those 200 words are chosen that maximise the entropy. Those 200 words where then summarised by the large language model gamma 4, using the R package ollamar \citep{ollamarRpackage}. Notably, almost all of the topics have very general summaries related to climate change, with topic 5 being one of the few that is more precise. It seems related to the $\#\text{SaveSoil}$ movement, which, according to its {\color{blue}\href{https://consciousplanet.org/en/save-soil}{website}}, is $"$a global movement to address the soil crisis by uniting people across the globe to stand up for soil health, and support leaders of all nations in actioning policies toward increasing organic matter in agricultural soil$"$. Interestingly, this topic is also related to a specific cluster located by our algorithm. This can be seen by estimating the topic distributions of each cluster pair.

\begin{table}[]
    \centering
     \begin{tabular}{cc}
        topic & summary \\\hline
 1&Protect planet's natural ecosystems\\
2&Diverse global vocabulary collection\\
3&Healthy soils combat climate change\\
4&Global climate action efforts\\
\textbf{5}&\textbf{Soil health, climate, global action}\\
6&Addressing climate change globally\\
7&Urgent global climate response needed\\
8&Climate action, healthy soil, future\\
9&Climate solutions require global cooperation\\
10&Global climate action needed\\
11&Global energy transition demands sustainability\\
12&No coherent text provided\\
13&Global climate health action\\
14&Global crises: conflict, climate, human rights\\
15&Climate action and energy transition\\
16&Climate change mitigation strategies\\
17&Global climate action agreements\\
18&Protect soil, stabilize climate\\
19&Sustainable energy for climate action\\
20&Climate action needed globally\\
21&Climate action confronts big polluters\\
22&Healthy soil combats climate change\\
23&Global investment for climate transition\\
24&Global climate politics and action\\
25&Climate justice, human rights, peace\\
26&Climate politics face global crises\\
27&Climate action and sustainable growth\\
28&Sustainable food systems, climate action\\
29&Climate action, gender equality\\
30&Climate action requires global cooperation\\
31&Global efforts curb warming emissions\\
32&Climate justice needs global action\\
33&Global climate action in UAE\\
34&Climate action for global health\\
35&Decarbonization requires systemic shift\\
36&Climate change demands policy action\\
37&Climate crisis demands global action\\
38&Climate action for food security\\
39&Global climate action initiatives\\
40&Climate change needs global action
    \end{tabular}
    \caption{all of the 40 topics obtained by fitting LDA to the COP28 dataset}
    \label{tab: lda_all_topics}
\end{table}
\clearpage
\thispagestyle{empty}
\begin{figure}
    \centering
    \includegraphics[width=.9\textwidth]{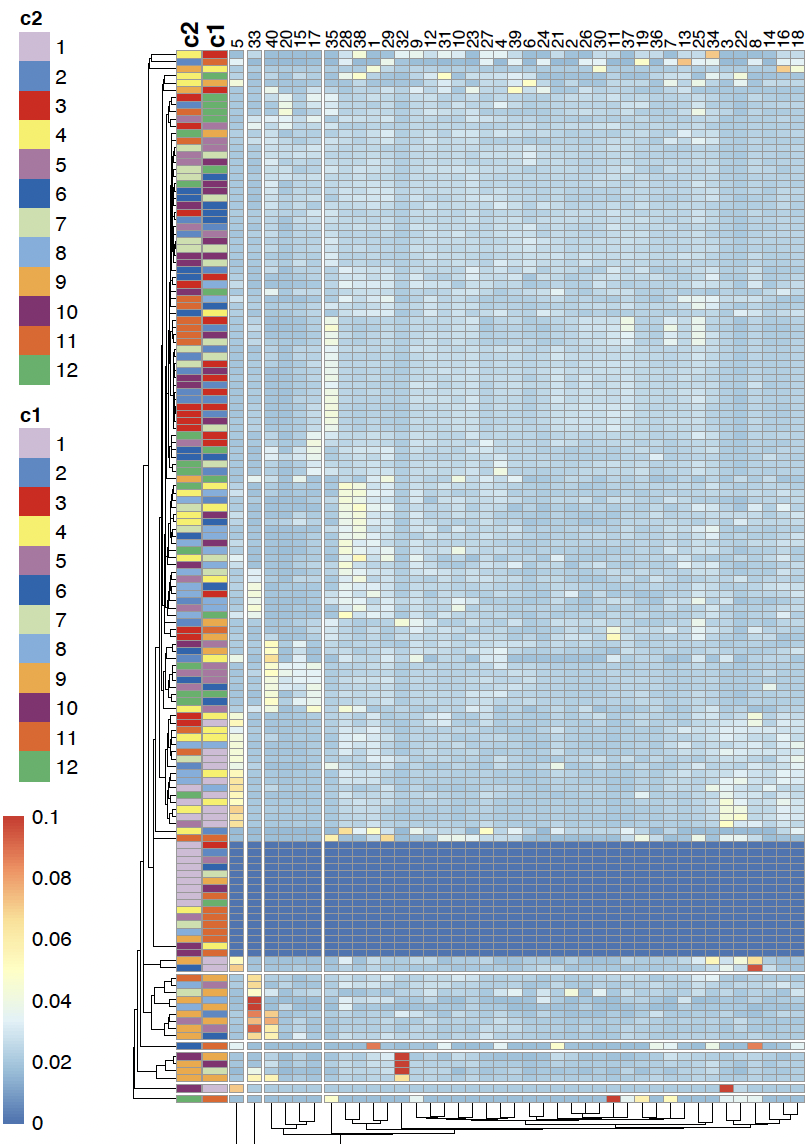}
    
    \caption{A heatmap of the probabilities of the 40 different topics (on the y axis) for the different textual exchanges between clusters (on the x axis; the first coordinate is the cluster that quotes/reposts, the second the cluster being quoted/reposted); the cluster that quotes/reposts is indicated via the label $c1$, the cluster that is being quoted/reposted is indicated via the label $c2$. It appears that topic 5 is specific to cluster 1.}
    \label{fig: lda_analysis}
\end{figure}
\clearpage

A topic distribution was estimated for each pair of cluster interactions $(g_1,g_2)_{1\leq g_1\leq G,1\leq g_2\leq G}$. This was done by selecting the corpus of a pair, and by summing the rows of $\gamma$ that correspond to the words of this corpus, and then by normalising these rows.  The results are shown in Figure \ref{fig: lda_analysis}. While there are few topics that specifically align to one cluster, it does seem that cluster 1 quotes or reposts content that is often assigned to topic 5. This is particularly interesting because cluster 1 is the closest to the sociological definition of a community, with a fairly high internal connection probability. This may well be explained by topic 5, since this topic is associated with a specific movement, the $\#\text{SaveSoil}$ movement. This movement is also detected when using a different model choice criterion, the ICL, though it is detected in a much more fragmented way.

\subsection*{\textCfour \label{ssec: Cfour}}

The integrated complete data log likelihood \citep[ICL,][]{biernacki_assessing_2002, daudin_mixture_2008, greedpackage} is defined as the logarithm of the joint posterior density of the data and the allocation vector matrix \begin{align*}
    \text{ICL}(C,G)=\log(p(Y,C|G))=\log(p(Y|G,C))+\log(p(C|G)).
\end{align*} \citet{greedpackage} showed how to maximise the ICL with respect to $C$ and $G$ via a greedy inference algorithm and implemented this algorithm in the R package greed. We chose $G=20$ as the initial value for the number of clusters and ran this algorithm. The resulting estimate gives 37 different clusters.

\begin{table}[]
    \centering
    \begin{tabular}{ccccccccccccccc}
        cluster&\textbf{1}&\textbf{2}&\textbf{3}&4&5&6&7&8&\textbf{9}&10&11&12&13\\\hline
        size&\textbf{3}&\textbf{1}&\textbf{5}&22&41&79&20&50&\textbf{9}&48&236&117&40\\\\
        cluster&14&15&16&17&18&19&20&21&22&23&24&25&26\\\hline
        size&12&11&42&123&220&54&13&81&468&30&32&60&96\\\\
        cluster&27&28&29&30&31&32&33&34&35&36&37&\\\hline
        size&314&773&137&121&231&928&894&229&186&1027&4226&&
    \end{tabular}
    
    \caption{estimated cluster sizes of the COP28 dataset using the greed algorithm; Clusters with 10 or less users assigned to them are emphasised.}
    \label{tab: cop28_estimated_cluster_sizes_greed}
\end{table} 

    \begin{table}[]
    \centering
    \begin{tabular}{ccccccccccccc}
        &&&&&&{\huge $\hat{\tau}$}&&&&&\\\hline
        $\hat{\tau}_1$ & $\hat{\tau}_2$ & $\hat{\tau}_3$ & $\hat{\tau}_4$ & $\hat{\tau}_5$ & $\hat{\tau}_6$ & $\hat{\tau}_7$ & $\hat{\tau}_8$ & $\hat{\tau}_9$ & $\hat{\tau}_{10}$ & $\hat{\tau}_{11}$ & $\hat{\tau}_{12}$ & $\hat{\tau}_{13}$\\\hline
         \textbf{0.03}&\textbf{0.01}&\textbf{0.05}&0.20&0.37&0.72&0.18&0.46&\textbf{0.08}&0.44&2.15&1.07&0.36\\
        $\hat{\tau}_{14}$ & $\hat{\tau}_{15}$ & $\hat{\tau}_{16}$ & $\hat{\tau}_{17}$ & \textbf{$\hat{\tau}_{18}$} & \textbf{$\hat{\tau}_{19}$} & \textbf{$\hat{\tau}_{20}$} & $\hat{\tau}_{21}$ & $\hat{\tau}_{22}$ & $\hat{\tau}_{23}$ & $\hat{\tau}_{24}$ & $\hat{\tau}_{25}$ & \textbf{$\hat{\tau}_{26}$}\\\hline
        0.11&0.10&0.38&1.12&2.00&0.49&0.12&0.74&4.26&0.27&0.29 &0.55&0.87\\
        $\hat{\tau}_{27}$ & \textbf{$\hat{\tau}_{28}$} & \textbf{$\hat{\tau}_{29}$} & $\hat{\tau}_{30}$ & $\hat{\tau}_{31}$ & $\hat{\tau}_{32}$ & $\hat{\tau}_{33}$ & $\hat{\tau}_{34}$ & $\hat{\tau}_{35}$ & $\hat{\tau}_{36}$ & $\hat{\tau}_{37}$ &&\\\hline
        2.86&7.04&1.25&1.10&2.10&8.45&8.14&2.09&1.69&9.35&38.49&&
    \end{tabular}
    
    \caption{estimates (in percent) of the SBM parameters using the greed algorithm; In the context of the COP28 dataset, $\hat{\tau}_g$ estimates the probability that a node chosen at random belongs to cluster $g$. Estimates are rounded to the second decimal place. Clusters with 10 or less users assigned to them are emphasised.}
    \label{tab: cop28_sbm_parameter_estimates_greed01}
\end{table}

    \begin{table}[]
    \centering
    
    \begin{tabular}{ccccccccccccc}
    &&&&&&{\huge$\hat{\mu}$}&&&&&\\\hline
    $\hat{\mu}_1$ & $\hat{\mu}_2$ & $\hat{\mu}_3$ & $\hat{\mu}_4$ & $\hat{\mu}_5$ & $\hat{\mu}_6$ & $\hat{\mu}_7$ & $\hat{\mu}_8$ & $\hat{\mu}_9$ & $\hat{\mu}_{10}$ & $\hat{\mu}_{11}$ & $\hat{\mu}_{12}$ & $\hat{\mu}_{13}$\\\hline
\textbf{22.22} & \textbf{0} & \textbf{0} & 0 & 0 & 0.42 & 3.33 & 0.67 & \textbf{0} & 0 & 0 & 0 & 0\\
\textbf{100} & \textbf{0} & \textbf{80} & 45.45 & 12.2 & 7.59 & 35 & 14 & \textbf{33.33} & 2.08 & 1.69 & 2.56 & 5\\
\textbf{26.67} & \textbf{20} & \textbf{24} & 36.36 & 15.61 & 11.9 & 10 & 2.4 & \textbf{0} & 0.42 & 1.1 & 1.71 & 1.5\\
\textbf{31.82} & \textbf{9.09} & \textbf{37.27} & 18.6 & 8.2 & 7.13 & 6.82 & 1.64 & \textbf{1.52} & 0.38 & 0.46 & 1.63 & 0.91\\
\textbf{53.66} & \textbf{14.63} & \textbf{60.49} & 33.92 & 15.41 & 11.82 & 11.95 & 2.49 & \textbf{1.9} & 0.05 & 0.9 & 2.88 & 2.13\\
\textbf{14.35} & \textbf{3.8} & \textbf{23.54} & 9.49 & 2.96 & 2.69 & 2.78 & 0.63 & \textbf{0} & 0.05 & 0.1 & 0.38 & 0.22\\
\textbf{25} & \textbf{0} & \textbf{1} & 0.68 & 0.37 & 0.13 & 5.25 & 1.3 & \textbf{2.22} & 0 & 0 & 0 & 0\\
\textbf{24.67} & \textbf{4} & \textbf{2.4} & 1.18 & 0.39 & 0.61 & 2.4 & 1.92 & \textbf{2.89} & 0.08 & 0.09 & 0.12 & 0\\
\textbf{33.33} & \textbf{22.22} & \textbf{0} & 0 & 0 & 0 & 2.78 & 3.56 & \textbf{4.94} & 0.23 & 0.14 & 0.19 & 0\\
\textbf{73.61} & \textbf{18.75} & \textbf{30.83} & 14.2 & 3.56 & 4.11 & 17.6 & 5.42 & \textbf{8.8} & 0.48 & 0.34 & 0.61 & 0.36\\
\textbf{32.2} & \textbf{7.2} & \textbf{34.49} & 15.16 & 5.61 & 4.54 & 6.74 & 1.08 & \textbf{1.18} & 0.05 & 0.29 & 0.95 & 0.85\\
\textbf{45.3} & \textbf{12.82} & \textbf{52.31} & 29.68 & 12.84 & 9.5 & 11.2 & 2.07 & \textbf{1.9} & 0.07 & 0.54 & 2.02 & 1.67\\
\textbf{65.83} & \textbf{40} & \textbf{74} & 53.07 & 22.56 & 20.51 & 25.37 & 4.1 & \textbf{7.78} & 0.1 & 1.08 & 4.55 & 3.62\\
\textbf{86.11} & \textbf{66.67} & \textbf{70} & 51.52 & 22.97 & 21.2 & 38.75 & 21.67 & \textbf{25.93} & 2.78 & 2.19 & 5.27 & 5.21\\
\textbf{72.73} & \textbf{100} & \textbf{0} & 1.24 & 0 & 0.46 & 10.91 & 7.09 & \textbf{65.66} & 2.08 & 0.12 & 0 & 0\\
\textbf{47.62} & \textbf{61.9} & \textbf{0.48} & 0.11 & 0.06 & 0.09 & 2.86 & 1.86 & \textbf{25.93} & 0.4 & 0.11 & 0 & 0\\
\textbf{29.81} & \textbf{39.02} & \textbf{0} & 0.04 & 0 & 0.01 & 0.37 & 0.28 & \textbf{9.12} & 0.15 & 0 & 0 & 0\\
\textbf{2.27} & \textbf{0} & \textbf{0} & 0.02 & 0 & 0 & 0.05 & 0.05 & \textbf{0.15} & 0.01 & 0 & 0 & 0\\
\textbf{12.96} & \textbf{7.41} & \textbf{0} & 0 & 0 & 0 & 0.37 & 0.26 & \textbf{3.91} & 0.12 & 0.01 & 0 & 0\\
\textbf{5.13} & \textbf{0} & \textbf{1.54} & 0.35 & 0 & 0 & 0.38 & 0.46 & \textbf{1.71} & 0 & 0.03 & 0 & 0\\
\textbf{4.53} & \textbf{1.23} & \textbf{0.74} & 0.73 & 0.03 & 0.05 & 0.31 & 0.2 & \textbf{0.14} & 0.03 & 0 & 0 & 0\\
\textbf{1.78} & \textbf{0} & \textbf{0.34} & 0.14 & 0.02 & 0.02 & 0.33 & 0.08 & \textbf{0.05} & 0 & 0 & 0 & 0.03\\
\textbf{1.11} & \textbf{0} & \textbf{0.67} & 0 & 0 & 0.08 & 0.17 & 0.13 & \textbf{0} & 0.07 & 0 & 0 & 0\\
\textbf{0} & \textbf{0} & \textbf{3.75} & 0.14 & 0 & 0.08 & 0.16 & 0.06 & \textbf{0} & 0.2 & 0.01 & 0 & 0\\
\textbf{7.22} & \textbf{0} & \textbf{0} & 0.76 & 0.04 & 0.02 & 0.17 & 0.07 & \textbf{0.19} & 0 & 0 & 0 & 0\\
\textbf{5.21} & \textbf{0} & \textbf{0.62} & 0.05 & 0 & 0.01 & 0.31 & 0.21 & \textbf{0.23} & 0.04 & 0 & 0 & 0\\
\textbf{5.52} & \textbf{0.64} & \textbf{0.89} & 0.2 & 0 & 0.03 & 1.11 & 0.17 & \textbf{0.28} & 0.01 & 0 & 0.01 & 0\\
\textbf{4.96} & \textbf{0.39} & \textbf{1.35} & 0.33 & 0.06 & 0.08 & 0.9 & 0.17 & \textbf{0.13} & 0.01 & 0.01 & 0.01 & 0.04\\
\textbf{12.9} & \textbf{1.46} & \textbf{4.23} & 1.66 & 0.36 & 0.39 & 3.72 & 0.69 & \textbf{0.41} & 0.06 & 0.05 & 0.06 & 0.15\\
\textbf{23.69} & \textbf{0.83} & \textbf{9.09} & 3.61 & 0.75 & 0.95 & 2.52 & 1.22 & \textbf{0.73} & 0.12 & 0.05 & 0.09 & 0.14\\
\textbf{33.19} & \textbf{9.96} & \textbf{5.63} & 1.55 & 0.42 & 0.32 & 3.98 & 2.16 & \textbf{3.56} & 0.35 & 0.05 & 0.08 & 0.04\\
\textbf{10.09} & \textbf{1.94} & \textbf{0.99} & 0.19 & 0.07 & 0.05 & 0.77 & 0.76 & \textbf{1.4} & 0.04 & 0.01 & 0.01 & 0.01\\
\textbf{11.56} & \textbf{2.8} & \textbf{8.84} & 3.16 & 0.98 & 0.89 & 1.97 & 0.33 & \textbf{0.32} & 0.04 & 0.08 & 0.19 & 0.21\\
\textbf{15.28} & \textbf{5.24} & \textbf{17.99} & 7.42 & 2.61 & 2.26 & 1.92 & 0.64 & \textbf{0.53} & 0.06 & 0.16 & 0.42 & 0.19\\
\textbf{1.61} & \textbf{1.08} & \textbf{1.94} & 0.56 & 0.22 & 0.1 & 0.32 & 0.12 & \textbf{0} & 0 & 0.03 & 0.04 & 0.01\\
\textbf{0.26} & \textbf{0} & \textbf{0.41} & 0.08 & 0.01 & 0.04 & 0.06 & 0.02 & \textbf{0} & 0 & 0.01 & 0.01 & 0\\
\textbf{1.15} & \textbf{0.14} & \textbf{0.22} & 0.04 & 0.01 & 0.01 & 0.1 & 0.06 & \textbf{0.02} & 0 & 0 & 0 & 0\\
    \end{tabular}
    
    \caption{estimates of the columns 1 to 13 of the SBM parameter $\mu$ (in percent); In the context of the COP28 dataset, $\hat{\mu}_{g,k}$ estimates the probability that a user in cluster $g$ quotes or reposts a user in cluster $k$. Estimates are rounded to the second decimal place. Columns corresponding to clusters with 10 or less users assigned to them are emphasised.}
    \label{tab: cop28_sbm_parameter_estimates_greed02}
\end{table} 

    \begin{table}[]
    \centering
    
    \begin{tabular}{ccccccccccccc}
    &&&&&&{\huge$\hat{\mu}$}&&&&&\\\hline
    $\hat{\mu}_{14}$ & $\hat{\mu}_{15}$ & $\hat{\mu}_{16}$ & $\hat{\mu}_{17}$ & \textbf{$\hat{\mu}_{18}$} & \textbf{$\hat{\mu}_{19}$} & \textbf{$\hat{\mu}_{20}$} & $\hat{\mu}_{21}$ & $\hat{\mu}_{22}$ & $\hat{\mu}_{23}$ & $\hat{\mu}_{24}$ & $\hat{\mu}_{25}$ & \textbf{$\hat{\mu}_{26}$}\\\hline
    0 & 0 & 0 & 0 & 0 & 0 & 0 & 0 & 0 & 0 & 0 & 0 & 0\\
0 & 0 & 0 & 0 & 0 & 0 & 0 & 0 & 0.21 & 0 & 0 & 0 & 0\\
5 & 0 & 0 & 0 & 0 & 0 & 0 & 0.49 & 0.04 & 0 & 0.62 & 0 & 0\\
0.38 & 0 & 0 & 0 & 0 & 0 & 0 & 0.56 & 0.06 & 0 & 0 & 0 & 0.05\\
1.02 & 0 & 0 & 0 & 0 & 0.05 & 0.38 & 0.45 & 0.05 & 0 & 0 & 0 & 0\\
0.32 & 0 & 0 & 0 & 0 & 0.02 & 0.1 & 0.08 & 0.01 & 0 & 0.04 & 0 & 0\\
0 & 0 & 0 & 0 & 0 & 0 & 1.15 & 0.06 & 0.02 & 0 & 0 & 0 & 0\\
0.33 & 0 & 0 & 0 & 0.02 & 0.04 & 1.38 & 0.25 & 0.07 & 0 & 0 & 0 & 0.08\\
0 & 0 & 0 & 0 & 0.05 & 0.82 & 3.42 & 0 & 0 & 0 & 0 & 0 & 0\\
1.39 & 0 & 0.05 & 0 & 0.04 & 0.69 & 3.53 & 1.65 & 0.3 & 0.42 & 0.07 & 0 & 0.26\\
0.32 & 0 & 0 & 0 & 0 & 0.02 & 0.23 & 0.26 & 0.02 & 0.01 & 0.03 & 0 & 0.01\\
0.36 & 0 & 0 & 0 & 0.01 & 0.02 & 0.26 & 0.37 & 0.03 & 0 & 0 & 0 & 0.01\\
1.04 & 0 & 0 & 0 & 0.03 & 0.05 & 0.96 & 0.74 & 0.05 & 0.08 & 0 & 0.21 & 0.05\\
6.94 & 0 & 0 & 0 & 0.11 & 0.93 & 10.26 & 2.06 & 0.45 & 7.5 & 6.51 & 0 & 0.69\\
0 & 56.2 & 37.01 & 23.28 & 33.14 & 79.12 & 94.41 & 0.45 & 0.02 & 0 & 0 & 0 & 0.09\\
0 & 26.41 & 12.47 & 5.48 & 8.52 & 38.05 & 78.02 & 0.47 & 0.01 & 0 & 0 & 0 & 0\\
0 & 4.8 & 1.2 & 0.85 & 1.08 & 9.08 & 56.47 & 0.12 & 0 & 0 & 0 & 0 & 0\\
0 & 0.21 & 0.04 & 0.02 & 0.02 & 0.29 & 3.92 & 0.02 & 0 & 0 & 0 & 0 & 0\\
0 & 0.67 & 0.4 & 0.35 & 0.44 & 2.78 & 24.36 & 0.02 & 0 & 0 & 0 & 0 & 0\\
0 & 0 & 0 & 0 & 0.03 & 0.57 & 7.1 & 0.38 & 0.03 & 0 & 0 & 0 & 0\\
0 & 0 & 0 & 0 & 0 & 0 & 0 & 0.91 & 0.2 & 0 & 0 & 0 & 0.01\\
0 & 0 & 0 & 0 & 0 & 0 & 0.02 & 0.73 & 0.12 & 0 & 0 & 0 & 0.01\\
0 & 0 & 0 & 0 & 0 & 0 & 0 & 0 & 0 & 0 & 0 & 0 & 0\\
2.86 & 0 & 0 & 0 & 0 & 0 & 0 & 0 & 0.01 & 45.42 & 46.09 & 0 & 0\\
0 & 0 & 0 & 0 & 0 & 0 & 0 & 0.04 & 0 & 0 & 0 & 14.03 & 0\\
0 & 0 & 0 & 0 & 0 & 0 & 0.08 & 0.66 & 0.03 & 0 & 0 & 0.02 & 5.34\\
0.05 & 0 & 0 & 0 & 0 & 0.05 & 0.02 & 0.2 & 0.01 & 0 & 0 & 0 & 0\\
0.01 & 0 & 0 & 0 & 0 & 0.01 & 0.21 & 2.42 & 0.47 & 0 & 0 & 0 & 0.02\\
0 & 0 & 0 & 0 & 0 & 0.03 & 0.39 & 5.5 & 2.17 & 0 & 0 & 0.01 & 0.14\\
0 & 0 & 0 & 0 & 0.01 & 0.05 & 0.7 & 1.92 & 0.27 & 0.03 & 0 & 0.01 & 0.02\\
0.4 & 0 & 0 & 0 & 0.01 & 0.11 & 1.07 & 0.15 & 0.02 & 0.04 & 0.03 & 0.01 & 0.17\\
0.01 & 0 & 0 & 0 & 0 & 0.04 & 0.34 & 0.04 & 0.01 & 0 & 0 & 0.02 & 0.01\\
0.06 & 0 & 0 & 0 & 0 & 0.01 & 0.06 & 0.1 & 0.02 & 0 & 0 & 0 & 0\\
0.25 & 0 & 0 & 0 & 0 & 0 & 0 & 0.04 & 0 & 0.01 & 0 & 0 & 0\\
0.04 & 0 & 0 & 0 & 0 & 0 & 0.04 & 0.06 & 0 & 0 & 0 & 0 & 0\\
0 & 0 & 0 & 0 & 0 & 0 & 0.01 & 0 & 0 & 0 & 0 & 0 & 0\\
0.01 & 0 & 0 & 0 & 0 & 0 & 0.04 & 0.2 & 0.02 & 0 & 0 & 0 & 0\\
    \end{tabular}
    
    \caption{estimates of the columns 14 to 26 of the SBM parameter $\mu$ (in percent); In the context of the COP28 dataset, $\hat{\mu}_{g,k}$ estimates the probability that a user in cluster $g$ quotes or reposts a user in cluster $k$. Estimates are rounded to the second decimal place. Columns corresponding to clusters with 10 or less users assigned to them are emphasised (there aren't any in this case).}
    \label{tab: cop28_sbm_parameter_estimates_greed03}
\end{table}
\begin{table}
    \begin{tabular}{ccccccccccc}
    &&&&&&{\huge$\hat{\mu}$}&&&\\\hline
    $\hat{\mu}_{27}$ & $\hat{\mu}_{28}$ & $\hat{\mu}_{29}$ & $\hat{\mu}_{30}$ & $\hat{\mu}_{31}$ & $\hat{\mu}_{32}$ & $\hat{\mu}_{33}$ & $\hat{\mu}_{34}$ & $\hat{\mu}_{35}$ & $\hat{\mu}_{36}$ & $\hat{\mu}_{37}$ \\\hline
0 & 0 & 0 & 0 & 0.14 & 0 & 0 & 0 & 0 & 0 & 0\\
0.64 & 0 & 0 & 0 & 3.9 & 0.54 & 0.67 & 1.31 & 2.69 & 0.19 & 0.09\\
0 & 0 & 0 & 0 & 0.43 & 0 & 0.27 & 3.06 & 1.4 & 0.12 & 0\\
0 & 0 & 0 & 0.23 & 0.31 & 0.03 & 0.25 & 1.85 & 1.59 & 0.34 & 0.02\\
0.01 & 0 & 0 & 0.1 & 0.37 & 0.08 & 0.37 & 3.32 & 2.36 & 0.47 & 0\\
0.01 & 0 & 0.01 & 0.09 & 0.09 & 0.02 & 0.06 & 0.76 & 0.55 & 0.07 & 0\\
0.02 & 0 & 0 & 0 & 0.22 & 0.07 & 0.02 & 0.11 & 0.27 & 0.04 & 0\\
0.06 & 0.01 & 0 & 0.08 & 0.57 & 0.38 & 0.02 & 0.19 & 0.29 & 0.05 & 0.01\\
0 & 0 & 0 & 0.09 & 0.67 & 0.44 & 0 & 0 & 0.12 & 0 & 0\\
0.25 & 0.02 & 0.17 & 0.88 & 2.08 & 0.67 & 0.26 & 1.15 & 1.67 & 0.5 & 0.08\\
0.02 & 0 & 0.02 & 0.04 & 0.19 & 0.05 & 0.18 & 0.97 & 1.2 & 0.22 & 0.01\\
0 & 0 & 0 & 0.06 & 0.32 & 0.06 & 0.26 & 2.36 & 2.08 & 0.38 & 0\\
0.02 & 0 & 0 & 0.21 & 0.69 & 0.1 & 0.71 & 5.08 & 5.65 & 0.92 & 0.04\\
0.16 & 0.04 & 0 & 0.83 & 6.67 & 2.03 & 1.1 & 7.35 & 7.48 & 2.26 & 0.13\\
0 & 0.02 & 0.07 & 0 & 1.65 & 0.22 & 0.02 & 0.12 & 0.24 & 0.92 & 0.01\\
0.01 & 0.01 & 0 & 0 & 0.14 & 0.01 & 0 & 0 & 0.06 & 0.13 & 0.01\\
0 & 0 & 0 & 0.01 & 0.03 & 0 & 0 & 0 & 0 & 0.01 & 0\\
0 & 0 & 0 & 0 & 0.01 & 0 & 0 & 0 & 0 & 0 & 0\\
0 & 0 & 0 & 0 & 0.06 & 0.01 & 0 & 0.02 & 0.01 & 0.01 & 0\\
0 & 0.01 & 0 & 0.06 & 0.17 & 0.03 & 0 & 0 & 0 & 0.02 & 0.01\\
0 & 0.01 & 0.01 & 0.03 & 0.02 & 0 & 0.01 & 0.03 & 0.08 & 0 & 0.01\\
0 & 0.01 & 0.08 & 0.05 & 0 & 0 & 0 & 0 & 0.02 & 0 & 0.01\\
0 & 0 & 0 & 0 & 0.04 & 0 & 0 & 0.03 & 0.04 & 0 & 0\\
0 & 0 & 0 & 0 & 0.03 & 0.03 & 0 & 0.1 & 0.02 & 0.01 & 0.01\\
0 & 0 & 0 & 0 & 0.02 & 0.03 & 0 & 0 & 0.04 & 0 & 0.01\\
0 & 0.01 & 0.06 & 0 & 0.11 & 0.03 & 0 & 0 & 0 & 0.01 & 0.01\\
0.98 & 0 & 0 & 0.03 & 0.03 & 0.02 & 0 & 0.01 & 0.03 & 0 & 0.03\\
0.01 & 0.07 & 0.23 & 0.18 & 0.01 & 0.01 & 0 & 0.02 & 0.09 & 0.01 & 0.04\\
0 & 0.37 & 1.79 & 0.2 & 0.07 & 0.01 & 0.01 & 0.07 & 0.29 & 0.02 & 0.14\\
0.07 & 0.03 & 0.05 & 2.1 & 0.16 & 0.1 & 0.03 & 0.16 & 0.23 & 0.04 & 0.07\\
0.03 & 0 & 0 & 0.06 & 1.04 & 0.4 & 0.03 & 0.12 & 0.2 & 0.08 & 0.01\\
0.01 & 0 & 0 & 0.03 & 0.3 & 0.17 & 0 & 0.02 & 0.04 & 0.01 & 0.01\\
0.01 & 0 & 0 & 0.02 & 0.07 & 0.01 & 0.09 & 0.29 & 0.35 & 0.09 & 0.01\\
0 & 0 & 0 & 0.04 & 0.11 & 0.02 & 0.13 & 0.83 & 0.49 & 0.13 & 0\\
0 & 0 & 0 & 0.01 & 0.01 & 0 & 0.02 & 0.05 & 0.1 & 0.02 & 0\\
0 & 0 & 0 & 0 & 0.01 & 0 & 0.01 & 0.02 & 0.03 & 0.01 & 0\\
0.01 & 0 & 0 & 0.02 & 0.01 & 0 & 0 & 0 & 0.02 & 0 & 0.01\\
    \end{tabular}
    
    \caption{estimates of the columns 27 to 37 of the SBM parameter $\mu$ (in percent); In the context of the COP28 dataset, $\hat{\mu}_{g,k}$ estimates the probability that a user in cluster $g$ quotes or reposts a user in cluster $k$. Estimates are rounded to the second decimal place. Columns corresponding to clusters with 10 or less users assigned to them are emphasised (there aren't any in this case).}
    \label{tab: cop28_sbm_parameter_estimates_greed04}
\end{table}

\begin{figure}
    \centering
    \includegraphics[scale=.5]{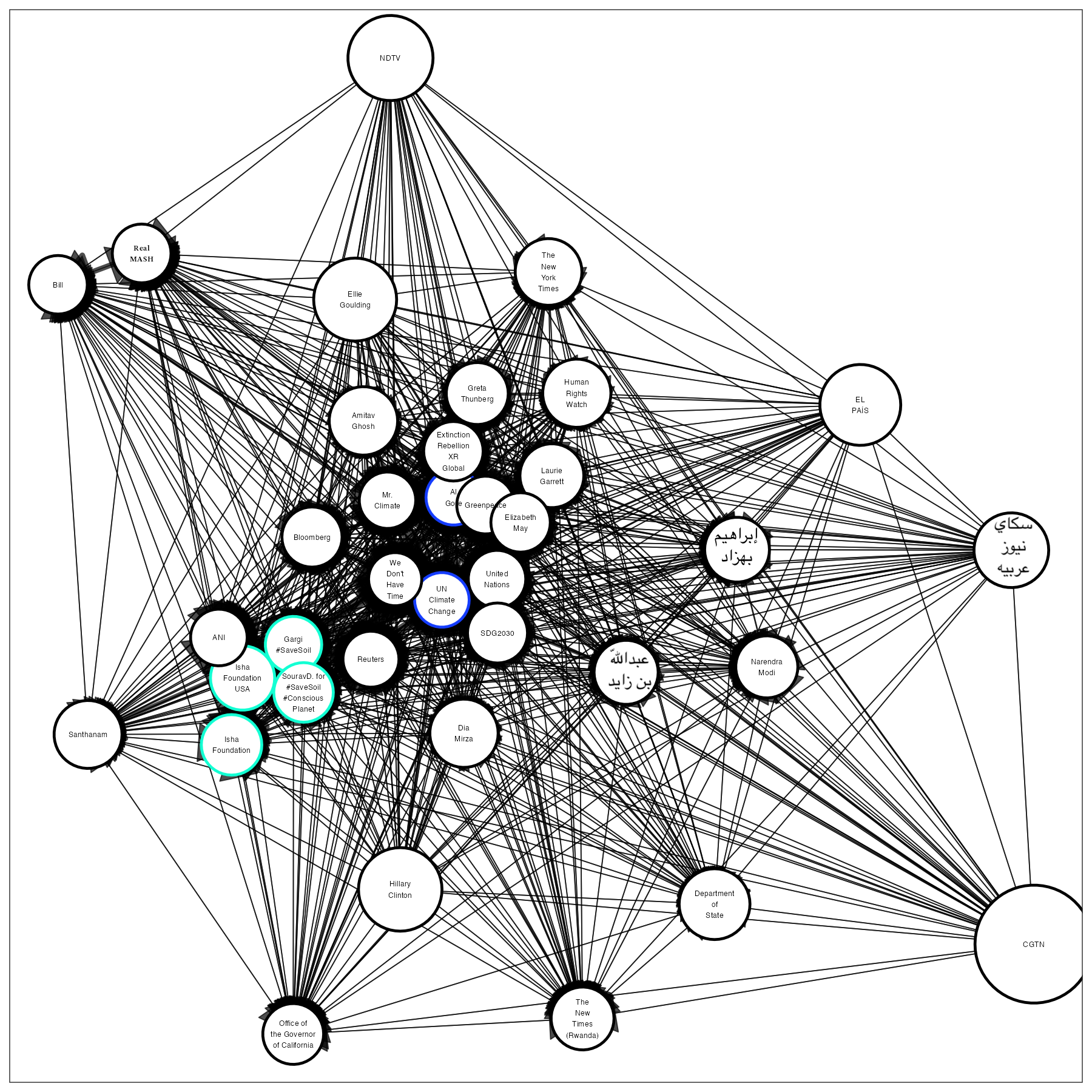}
    \caption{the metagraph of the COP28 dataset for $G=37$ clusters, the number obtained by maximising the ICL via the greed algorithm; Clusters are represented by nodes. Large edges indicate large connection probabilities between clusters, while large nodes indicate a large relative size of the cluster; Among other clusters, $"$Al Gore$"$ and $"$UN Climate Change$"$ represent the core of the network, while $"$Gargi \# SaveSoil$"$, $"$SouravD. for \# SaveSoil \# Conscious Planet$"$, $"$Isha Foundation$"$ and $"$Isha Foundation USA$"$ are associated with the $\#\text{SaveSoil}$ movement.}
    \label{fig: greed_metagraph}
\end{figure}

The sizes of the 37 different clusters obtained by the package greed for the COP28 dataset are shown in Table \ref{tab: cop28_estimated_cluster_sizes_greed}. Notably, there are a variety of clusters with a very small size. Indeed, one cluster even has a size as small as one. These smaller clusters all seem to have the property that it is quite likely to connect to them even though the users assigned to them make up a small portion of the dataset. This can be seen by observing the estimates of the SBM parameters shown in Figures \ref{tab: cop28_sbm_parameter_estimates_greed01}, \ref{tab: cop28_sbm_parameter_estimates_greed02}, \ref{tab: cop28_sbm_parameter_estimates_greed03} and \ref{tab: cop28_sbm_parameter_estimates_greed04}, which were obtained by maximising the complete-data loglikelihood with respect to these parameters, given the clustering obtained by greed, just like in the previous section. The resulting metagraph is shown in Figure \ref{fig: greed_metagraph}. It seems like the $"$core$"$ (users which have a high probability of being quoted or reposted) is much more fractured, with several small clusters having this property. For example, the clusters containing the users $"$Al Gore$"$ and $"$UN Climate Change$"$ are split, while these users belonged to cluster 11 for the clustering obtained by the THAMES, which completely represented the core in this case. In addition, there are 4 clusters associated with the $\#\text{SaveSoil}$ movement (two of which have users that directly use the hashtag, one of which is its central figure, Sadhguru, and one of which is its central institution in the US, the Isha Foundation USA). This indicates a further fracturing of the clusters found when using the THAMES. In combination, we take these two differences to the clustering found by the THAMES - the fracturing of the core and the fracturing of the $\#\text{SaveSoil}$ movement into multiple clusters - as an indication that the clustering found by the THAMES is more robust than the one found by the ICL. This is because the former includes one movement, the $\#\text{SaveSoil}$ movement, and one core, which includes 5 members, while the later fractured the movement and the core into multiple clusters.

\subsection*{\textCfive \label{ssec: Cfive}}

As shown in the main document, the clusters found by using the THAMES and Gibbs sampling on the COP28 dataset essentially all follow a star pattern: while it is unlikely that users in the same cluster quote/repost each other, they do all tend to quote/repost cluster 11, which only contains the users UN Climate Change, Al Gore, COP28 UAE, Loss and Damage Collaboration (L\&DC) and António Guterres. Thus, cluster 11 appears to be the manifestation of the dominant voices of the COP28, which are quoted and reposted by all other users. This indicates a core-periphery structure, which creates a dichotomy between clusters that include bots, grassroots movements, and influencers on one side, and the COP28 institution on the other side, revealing a particularly extreme core-periphery structure.

The core-periphery structure present in the COP28 dataset is by and large confirmed by our textual analysis, as it seems like almost all topics are fairly similar and centred around climate change, with little distinction between topics that are reposted/cited by users from the same cluster. While our textual analysis does indicate that there may be one cluster of the COP28 dataset that is somewhat close to the standard definition of a community, cluster 1, even this cluster quotes/reposts another cluster, cluster 8, more often than it reposts itself. This is likely because the X account of the central figure of the \#SaveSoil movement, Sadhguru, is assigned to cluster 8. Thus, even cluster 1 exhibits aspects of the core/periphery distinction central to this dataset, since its users repost/cite users of another cluster more often than users assigned to the same cluster. It should also be considered that Save Soil is a campaign run by the Isha Foundation, with its own pavilion at COP28 and coordinated messaging. Its founder said publicly during the summit that “Our messages are being tweeted daily by a few thousand people, this needs to become millions”, as reported by {\color{blue}\href{https://www.thenationalnews.com/climate/cop28/2023/12/10/protect-our-soil-to-save-the-planet-says-indian-climate-guru-sadhguru-at-cop28}{The National}}. So this is a deliberate occupation of the space, not a spontaneous one. It should also be noted that the model separates the central figure (cluster 8) from those who amplify him (cluster 1), which is, arguably, the structurally correct answer, while the clustering that maximises the ICL splits the same movement across four clusters. Thus, these results could be seen as an external validation of the clustering found by the THAMES, compared to the clustering found by maximising the ICL.

\subsection*{\textCsix \label{ssec: Csix}}

\begin{figure}
    \centering
    \includegraphics[scale=.5]{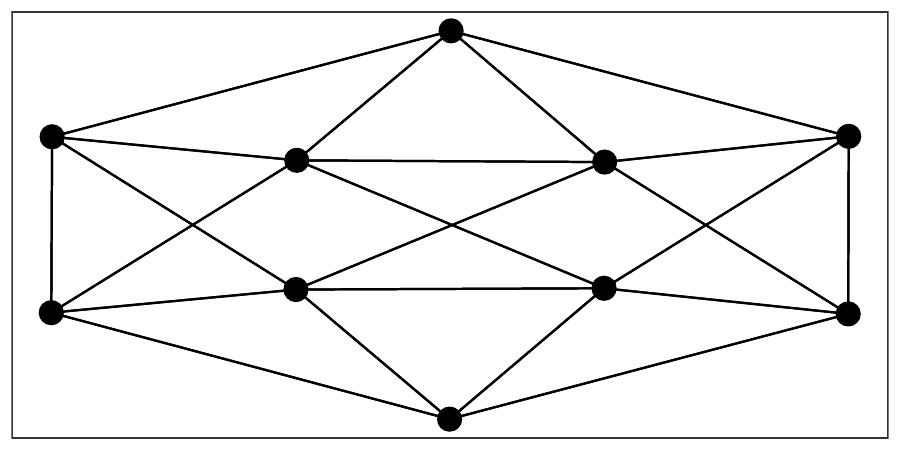}
    \caption{The specific dataset used in Section 4.3}
    \label{fig: specific dataset}
\end{figure}

The specific dataset used in Section 4.3 can be represented by the graph shown in Figure \ref{fig: specific dataset}. The posterior probability of the MAP (and its label-switched version) was around 15 percent for this dataset when fitting the SBM with $G=2$. This probability is much smaller than the usual probabilities obtained when simulating datasets of such a small size, which were often larger than 50 percent, in our experience.

\section*{\suppDtext \label{sec: suppD}}

The latent block model (LBM) has a very similar structure to the SBM, with the difference that the size of the rows $n$ may be different from the size of the columns $m$, and that there are two allocation vectors $C$ and $C'$ of sizes $G$ and $K$, which indicate the row and column clusters, respectively:\begin{align*}&C_i|\tau\stackrel{\text{i.i.d}}\sim\text{Multinom}_G(1,\tau),\quad\sum_{g=1}^GC_{i,g}=1,\quad\forall i=1,\dots,n,\\
&C_j'|\tau'\stackrel{\text{i.i.d}}\sim\text{Multinom}_{G'}(1,\tau'),\quad\sum_{k=1}^KC_{j,g}'=1,\quad\forall j=1,\dots,m,\\
    &Y_{i,j}|\mu,C_{i,g}C_{j,k}'=1\sim f(\mu_{g,k}),\quad \forall(i,j),i\neq j.
\end{align*} $f$ can denote any density, and so the LBM can be defined for many types of data, such as count data \citep{govaert_latent_2010}, categorical data \citep{keribin_estimation_2015} or continuous data \citep{lomet_selectiondemodele_2012}. For many versions of the LBM, Gibbs sampling is possible and the collapsed distributions $p(C,C'|G,K)$ and $p(Y|G,K,C,C')$ can also be derived; see, e.g., \citet{WyFr12-collapsed_infinite_LBM,lomet_selectiondemodele_2012,keribin_estimation_2015}. 

Thus, the THAMES can be defined completely analogously, as {\small\begin{align*}
    \hat{Z}_{\text{LBM}}^{-1}=\frac{1}{G!}\sum^{G!}_{o=1}\frac{1}{K!}\sum^{K!}_{s=1}\frac{1}{T/2}\sum^{T}_{\substack{t=T/2+1\\(P_o(C^{(t)}),P_s'(C^{(t)}{}'))\in B_{\hat{C},\hat{r},\hat{C}',\hat{r}',\alpha}}}\frac{1/|B_{\hat{C},\hat{r},\hat{C}',\hat{r}',\alpha}|}{p(Y|G,K,C^{(t)},C^{(t)}{}')p(C^{(t)},C^{(t)}{}'|G,K)},
\end{align*}} with $(\hat{r},\hat{r}')$ being set such that they minimise {\small\begin{align*}
    \left(\frac{\prod^n_{i=1}\sum_{C_i\in \{\hat{C}^{(1)}_i,\dots,\hat{C}^{(\hat{r})}_i\}}\frac{1}{\text{Multinom}_G(C_i;1,\hat{z}_i)}}{\left(\prod^n_{i=1}|\{\hat{C}_i^{(1)},\dots,\hat{C}_i^{(\hat{r})}\}|\right)^2}\right)\cdot\left(\frac{\prod^m_{j=1}\sum_{C_j'\in \{\hat{C}^{(1)}_j{}',\dots,\hat{C}^{(\hat{r}')}_j{}'\}}\frac{1}{\text{Multinom}_G(C_j';1,\hat{z}_j')}}{\left(\prod^m_{j=1}|\{\hat{C}_j^{(1)}{}',\dots,\hat{C}_j^{(\hat{r}')}{}'\}|\right)^2}\right),
\end{align*}} and $(\hat{C},\hat{C}')$ being constructed by maximising the unnormalised posterior density on the first half of the MCMC sample, while\begin{align*}
    B_{\hat{C},\hat{r},\hat{C}',\hat{r}',\alpha}=E_{\hat{C},\hat{r},\hat{C}',\hat{r}'}\cap H_\alpha,
\end{align*}
where $H_\alpha$ denotes a $\alpha$-HPD-region and
$E_{\hat{C},\hat{r},\hat{C}',\hat{r}'}$ is the corresponding Cartesian product, \begin{align*}
    E_{\hat{C},\hat{r},\hat{C}',\hat{r}'}&=(\{\hat{C}^{(1)}_1,\dots,\hat{C}^{(\hat{r})}_{1}\}\times\{\hat{C}^{(1)}_1{}',\dots,\hat{C}^{(\hat{r}')}_{1}{}'\})\times\dots\\&\times(\{\hat{C}^{(1)}_n,\dots,\hat{C}^{(\hat{r})}_{n}\}\times\{\hat{C}^{(1)}_n{}',\dots,\hat{C}^{(\hat{r}')}_{n}{}'\}).
\end{align*} Analogously to the THAMES for the SBM, this estimator can be quickly computed by choosing an ordering of the columns of $C$ and $C'$. However, the case of empty clusters cannot be easily sidestepped for all choices of the density $f$. In these cases, it is essential that there are no empty clusters in the chosen rows $\upsilon$ of the posterior sample, since the computational trick cannot be applied otherwise.




\end{document}